\newtheorem{theorem}{Theorem}
\newtheorem{lemma}{Lemma}
\newtheorem{proposition}{Proposition}
\newtheorem{corollary}{Corollary}
\begin{document}

\title{Reputational Learning and Network Dynamics}

\author{\space Simpson~Zhang
        and~Mihaela~van~der~Schaar
        \thanks{Document Date: June 2016}\thanks{The authors are indebted to Jie Xu, Peng Yuan Lai, William Zame, and Moritz Meyer-ter-Vehn for valuable assistance. This paper also benefited from discussions with Matt Jackson and seminar participants at the 2015 NEGT Conference and the 2015 SWET conference. The authors gratefully acknowledge financial support from the ONR.}\thanks{
        Simpson Zhang: simpsonzhang@ucla.edu,
        Mihaela van der Schaar: mihaela@ee.ucla.edu}
}

\maketitle
\begin{abstract}
\boldmath
In many real world networks agents are initially unsure of each other's qualities and must learn about each other over time via repeated interactions. This paper is the first to provide a methodology for studying the dynamics of such networks, taking into account that agents differ from each other, that they begin with incomplete information, and that they must learn through past experiences which connections/links to form and which to break. The network dynamics in our model vary drastically from the dynamics in models of complete information. With incomplete information and learning, agents who provide high benefits will develop high reputations and remain in the network, while agents who provide low benefits will drop in reputation and become ostracized. We show, among many other things, that the information to which agents have access and the speed at which they learn and act can have a tremendous impact on the resulting network dynamics. Using our model, we can also compute the \textit{ex ante} social welfare given an arbitrary initial network, which allows us to characterize the socially optimal network structures for different sets of agents. Importantly, we show through examples that the optimal network structure depends sharply on both the initial beliefs of the agents, as well as the rate of learning by the agents. Due to the potential negative consequences of ostracism, it may be necessary to place agents with lower initial reputations at less central positions within the network.
\end{abstract}


\section{Introduction}

Networks are pervasive in all areas of society, ranging from financial networks to organizational networks to social networks. And an important feature of many real world networks is that agents do not fully know the characteristics of others initially and must learn about them over time. For instance a bank learns about the credit-worthiness of a new borrower, a worker in a firm learns about the ability of a coworker, and a buyer learns about the product quality of a supplier. Such learning can strongly affect the resulting shape of the network. As agents receive new information, they can revise their beliefs about other agents, update their linking decisions, and cause the network to evolve as a result. To properly analyze such network evolution, it is crucial to understand the exact mechanism by which learning impacts network dynamics. 

The impact of agent learning on network evolution has not been well studied in the existing literature. A large network science literature analyzes the effect of learning on fixed networks that have already formed (see Scott (2012)).  A smaller microeconomics literature\footnote{See the overview in Jackson (2010) for instance.} studies the formation of networks - but makes very strong assumptions (e.g., homogeneous agents/entities, complete information about other agents).  Neither the network science literature nor the microeconomics literature has so far taken into account that agents behave strategically in deciding what links to form/maintain/break \textit{and} that they also begin with incomplete information about others, so they must learn about others through their interactions. As a result, neither network science nor microeconomics provides a complete framework for understanding, predicting and guiding the formation (and evolution) of real networks and the consequences of network formation.

The overarching goal of this research paper is to develop such a framework. An essential part of the research agenda is driven by the understanding that individuals within a network are heterogeneous - some workers are more productive than others, some friends are more helpful than others, and some borrowers are more creditworthy than others. Furthermore, these characteristics are not known in advance but must be learned over time via repeated interactions. The rate of learning itself may also be strongly influenced by the network structure: agents engaged in more interactions are likely reveal more information about themselves. 

As a motivating example, consider a group of financial institutions that are linked together in a financial network\footnote{Our model can also be applied to a wide range of other networks, such as organizational networks, social networks, or expertise networks. We discuss some implications for these settings as well throughout the paper.}. These financial institutions provide benefits to each other by engaging in mutually beneficial trading opportunities, such as providing each other with liquidity or investing in joint ventures\footnote{As in the model of Erol (2015).}. High quality institutions are likely to develop a high realized quality of assets from these joint interactions, while low quality institutions are likely to develop a low realized quality of assets. Each institution only continues to link with another institution (over time) if the counterparty is believed to be of sufficiently high quality. As time progresses, the institutions \textit{observe} the actions of their counterparties, \textit{update} their beliefs about the quality of each counterparty, and \textit{change} their linking decisions as a result. In this way, learning by the financial institutions causes the network topology to evolve over time. The network topology also impacts the rate of learning, as an institution can learn through both its own interactions with counterparties, as well as by monitoring interactions of a counterparty with its other counterparties.  Since institutions with more connections interact with more counterparties, such institutions will reveal more information about themselves to their neighbors over time. As a result, while having more connections opens an institution up to more beneficial opportunities, it also carries the risk of causing the institution to be shut out of the financial network more quickly if it starts losing asset value, as in the case of Lehman Brothers due to its exposures to the subprime mortgage market during the 2008 financial crisis.

Our model takes into account the features of the previous example: agents behave strategically, begin with incomplete information about each other, and must learn through continued interactions which connections to form and maintain and which to break. We consider a continuous time model with a group of agents who are linked according to a network and who send noisy flow benefits to their neighbors. The benefits that agents provide could be interpreted for instance as the benefits that financial institutions derive from providing liquidity to each other or from diversifying risk with each other's specialized assets.  Each agent is distinguished by a fixed quality level which determines the average value of the flow benefits it produces. Agents observe all the benefits that their neighbors produce, and they update their beliefs about a neighbor's quality via Bayes rule. Neighbors with more connections will reveal more information about themselves over time. Agents will maintain links with neighbors that provide high benefits, but will cut off links with neighbors that provide low benefits. The network evolves as agents learn about each other and update their beliefs. Since the number of links an agent has affects the rate of learning about that agent, the rate of learning about an agent changes as the network changes, leading to a co-evolution of network topology and information production.

Our model is highly tractable and allows us to completely characterize network dynamics and give explicit probabilities that the network evolves into various configurations. In addition, we are able to characterize the entire set of possible stable networks and analytically compute the probability that any single stable network emerges. This allows for predictions regarding which types of stable networks are likely to emerge given an initial network.

We also study the implications that learning has on the social welfare and efficiency of a network. Our results show that learning has a beneficial aspect: agents that are of low true quality are likely to produce low signals and will eventually be ostracized from the network. Learning also has a harmful aspect: even high true quality agents may produce an unlucky string of bad signals and so be forced out of the network. Moreover, even having low true quality agents leave the network can reduce overall social welfare. A marginally low quality agent may harm its neighbors slightly, but it also receives a large benefit if its neighbors are of very high quality. Therefore if the low quality agent leaves the network, the overall social welfare would actually decrease. The issue here is that agents only care about the benefit their neighbors are providing them, but not the benefit they are providing their neighbors. This results in a negative externality every time a link is severed\footnote{The negative effects of ostracism can be particularly acute in financial networks during times of distress in which banks get shut out of funding, as is the case of a liquidity freeze. Ostracism has also been demonstrated in a wide variety of social settings  in the social psychology literature. We discuss this literature and our model's implications in the Literature Review section.}. In many situations, the negative effects of learning outweigh the positive effects, so on balance learning is actually harmful. In particular, increasing the learning rate about marginal agents whose neighbors are high quality agents is bad, because forcing the marginal quality agent out of the network sacrifices the social benefit of the link to the high quality agent. However, increasing the rate of learning about a marginal quality agent whose neighbors are also marginal quality agents is good, because more information will be revealed about that marginal quality agent, allowing its neighbors to more quickly sever their links to it. The impact of learning can therefore be either positive or negative depending on the specific network.

Our welfare results have important implications for network planning and are useful in a diverse range of settings, such as in guiding the formation of networks by the policies of a financial regulator, human resources department, online community, etc. Due to the varying effects of learning, we show that the optimal network design will be quite different for different groups of agents. For instance, when agents all have high initial reputations, the optimal network design allows all agents to be connected (so that agents can benefit fully from their repeated interactions). On the other hand, if some agents have low initial reputations, then allowing all agents to connect is not optimal, and it will be desirable to constrain the network by isolating low reputation agents from each other.  If such agents did link, they would both send more information about themselves through this link, causing themselves to be ostracized more quickly. Each agent, as well as the overall network, could then be worse off through the formation of this link due to the faster learning caused by the link. In some cases, a star or a core-periphery network connectivity structure would generate higher social welfare than a complete network even when all agents have initial expected qualities higher than the linking cost. Such a situation arises for instance if there are two separate groups of agents, one group with very high reputation and the other group with moderate reputation. By placing the high reputation agents in the core and the moderate reputation agents in the periphery, the high reputation agents are able to produce large benefits for the network, and the potential harm from the moderate reputation agents is minimized\footnote{This provides a new reputational reason for the benefits of a core-periphery network, in contrast to other, non-informational, reasons that have been proposed in the networks literature.}. 

Finally, we consider four extensions of our model that allow for even richer network dynamics and learning. In the first extension, we allow the mechanism designer to provide the agents with a subsidy that encourages linking\footnote{For instance a financial regulator could guarantee transactions within a financial network to make them less risky.}. The effect of such a subsidy is to promote the amount of experimentation done by the agents, and we show that a sufficiently large subsidy can always improve overall social welfare because of this. In the second extension, we allow for agents with high enough reputations to form new links with each other, and we show that social welfare will be increased when the linking threshold is high enough. In the third extension, we allow new agents to enter the network over time, and we consider the optimal time at which new agents should arrive. We show that all agents should be allowed to enter the network eventually, but delayed entry is desirable in certain networks to protect the reputations of vulnerable incumbent agents. Lastly, in the fourth extension we allow for agents that have been ostracized in the past to re-enter the network after a set period of time, and we show that the negative effects of learning can be mitigated if re-entry occurs frequently enough.

\nocite{acemoglu2011bayesian} \nocite{acemoglu2011opinion} \nocite{bala2000noncooperative}
\nocite{babus2015endogenous}
\nocite{blasques2015dynamic}
\nocite{chang2015endogenous}
\nocite{gale2003bayesian} \nocite{galeotti2010law} \nocite{galeotti2010network} \nocite{galeotti2006network} \nocite{goeree2009search} \nocite{golub2010naive} \nocite{golub2012homophily}
\nocite{hollifield2014bid} \nocite{erol2015network} \nocite{erol2014network}
\nocite{farboodi2015intermediation} \nocite{jackson2010social} \nocite{jackson1996strategic} \nocite{jackson2014games} \nocite{jacobson1993myopic} \nocite{kozlowski2013work}
\nocite{li2014dealer}
\nocite{mizik2010theory}
\nocite{neklyudov2015endogenous} \nocite{robinson2012invisible} \nocite{scott2012social} \nocite{song2015dynamic} \nocite{vandynamic} \nocite{van2015acquaintances} \nocite{vega2006building}
\nocite{wang2016core} \nocite{wang1997boundary} \nocite{watts2001dynamic} \nocite{watts1998collective} \nocite{wesselmann2013we} \nocite{williams1997social}

\section{Literature Review}

\subsection{Relation to Theoretical Networks Literature}
Our paper represents a novel contribution to the network formation literature, by being among the first to consider incomplete information and learning in networks, as well as by providing a tractable model that allows for the computation of many properties, including the \textit{ex ante} social welfare, of different network topologies. Other papers in the network literature have usually studied network dynamics only in settings of complete information when agents perfectly know each other's qualities. For example, the papers by Jackson and Wolinsky (1996), Bala and Goyal (2000), Watts (2001), and Galeotti and Goyal (2010) all consider networks where the agents have complete information. In these models, agents are aware of the exact qualities of all other agents and there is no learning. The network dynamics arise instead from externalities and indirect benefits between agents that are not directly linked. When one link is formed or severed, the benefits produced by other links changes as well, which causes the other agents to sever or form their own links in a chain reaction. For some networks, such as communication networks, these indirect benefits seem important, as an agent who has many high quality neighbors will likely be able to transmit higher quality information as well. However, in other networks such as friendship networks these indirect benefits are less relevant and it is the specific quality of each individual agent that is the most relevant. This is especially applicable in situations where a new group of agents are meeting for the first time, and they learn about each other through mutual interactions. We argue that the network dynamics in such situations are more greatly dependent on reputational effects and than on changes in the values of indirect benefits.

We do not assume any indirect benefits in our model and focus instead on the dynamics resulting from incomplete information and learning. Agent learning strongly influences the network formation process in a way that would not arise with complete information. Agents that send good signals will develop high reputations and remain in the network, whereas agents that send bad signals will develop low reputations and eventually become ostracized by having their neighbors cut off links. The rate of learning about an agent's quality affects how quickly the network evolves and has a strong effect on the resulting social welfare. With complete information however, such dynamics would not occur because agents would know each other's qualities perfectly at the onset. For instance, Watts (2001) considers a dynamic network formation model where agents form links under complete information. When there are no indirect benefits between agents in that paper's model, each agent would make a one time linking decision with every other agent and never update its choice later on. But with learning, agents may change their linking choices by breaking off links with neighbors that consistently produce low benefits. Incomplete information causes links to fluctuate dynamically over time as new information arrives and beliefs are updated, instead of staying static as in the complete information case. We propose that such effects are key and even the main driver of dynamics when a group of agents are meeting for the first time and forming a network with each other.

In addition, the tractability of our model allows us to explicitly compute the social welfare for different network structures even under incomplete information. This tractability arises from the use of continuous time diffusion processes in our model, which allows for closed form equations of the probabilities that different networks emerge. In contrast other networks papers such as Jackson and Wolinsky (1996) and Bala and Goyal (2000) use discrete time models that do not allow for such clean closed form expressions.  While these other papers analyze the efficiency properties of a given fixed network, our welfare results are much stronger and allow the network to evolve endogenously over time as agents learn and update their linking decisions. This enables us to compare the \textit{ex ante} optimality of different initial network structures, as well as provide general results for when specific network structures are optimal. For instance, we show that when the rate of learning in the network is either very slow or very fast, a complete network is optimal if the agent's initial expected qualities are all higher than the cost of maintaining a link. But when learning is at an intermediate rate, it may be optimal to prevent vulnerable agents from connecting, even if their initial expected qualities are higher than the linking cost, due to the negative externalities associated with reputational effects. Such a result cannot arise under complete information, where if agent's qualities are all perfectly known it would be strictly better for all of them to be linked initially.

This paper is also tied to the literature on observational learning in networks. Works such as Golub and Jackson (2010), Acemoglu et al (2011), and Golub and Jackson (2012) analyze observational learning in social networks. In these models there is a fixed exogenous network on which the agents interact, and the agents learn about an exogenous state of the world through this network by observing the actions of neighbors. These papers provide results regarding the speed and accuracy of the observational learning that can be achieved by agents connected through different types of networks. Our paper is significantly different from this literature because agents learn about other agents' \textit{qualities} instead of an \textit{exogenous} state of the world. As such, agents will wish to update their linking decisions over time as their beliefs about the agents with whom they are connected with change. The network and learning \textit{co-evolve}, causing the network structure to evolve \textit{endogenously}. 

Vega-Redondo (2006) focuses on the issue of moral hazard and monitoring, and it considers the diffusion of information about agent actions across a network. It assumes that players engage in bilateral prisoner's dilemma games. Information about player actions diffuses through the network, and agents are able to sustain cooperation through punishing defectors. More densely connected networks allow for faster information transmission and can therefore sustain higher levels of cooperation. The paper analyzes how the structures of the networks that emerge is affected by the transmission of information, and it shows through simulations and mean-field analysis that the inclusion of network based information can increase network density. Our work instead focuses on the issue of adverse selection and on learning about agent types. We show that more information can be harmful for welfare because it leads to greater ostracization among agents. The tractability of our model also allows us to consider the social welfare generated across the entire path of network evolution, as opposed to the welfare of the long run average network. We are therefore able to address issues of network design, and we characterize the optimal network structure under different environments. We also provide simulations which highlight our main results and show the social welfare of different network structures.

A related networks paper that involves learning with adverse selection is Song and van der Schaar (2015). Like us, this paper also considers learning by agents about the types of other agents within a network, and it shows how incomplete information and the learning process can lead to a wide variety of network structures and dynamics. However, this paper considers a discrete time model and incorporates a simplified learning process in which information is revealed \textit{immediately}, after a single interaction. On the contrary, in our model information is revealed gradually, and the linking decisions and learning occur simultaneously. Since learning takes place gradually instead of instantaneously, we are able to analyze how the precise rate of learning affects network dynamics and social welfare. And very importantly, our model allows the network structure itself to impact the rate of learning about agents. This assumption is realistic as learning is often affected by the number of connections an agent has within the network. We show that it has strong implications and necessitates the need for careful planning by a network designer to properly control the learning done by agents. In addition, our use of continuous time allows our model to be more tractable and able to provide explicit characterizations of the social welfare of different network structures.

\subsection{Relation to Financial Networks Literature}
Our paper is also related to the growing financial networks literature. There have been numerous recent papers which seek to explain the prevalent core-periphery structure of financial networks. Such core-periphery financial network structures have been well documented empirically in a variety of markets, such as for municipal bonds (Li and Sch{\"u}rhoff 2014) and securitization (Hollifield et al 2014). The theoretical papers of Chang and Zhang (2015), Farboodi (2015), Neklyudov and Sambalaibat  (2015), Babus and Hu (2015), and Wang (2016) all propose models that seek to explain the prevalence of core-periphery networks. These papers show that features such as various forms of dealer heterogeneity can result in core-periphery type structures. 

However most of these papers operate in complete information settings where the types of other agents are directly observable. And the papers that do consider incomplete information focus instead on learning through investment in information gathering (regarding debt repayment) rather than on learning through interactions which are affected by the network structure itself. For instance, Babus and Hu (2015) shows that since star networks can allow for efficient mutual monitoring by financial institutions, they also lead to more efficient trading. Blasques et al (2015) show that the benefits that a core-periphery network provides leads to greater stability over time. 

Our paper also provides a justification for the multitude of real world core-periphery networks, as we show that such core-periphery networks maximize social welfare in certain networks where agents vary in reputation. However, our result occurs are driven by the presence of reputational forces, unlike the previous papers. In our model, a core-periphery network lowers the reputational risks for vulnerable low-reputation agents, and can thus prevent them from being shut out of the financial network as quickly.

Furthermore the setting of our paper is different from the setting of the other papers. The papers that consider complete information are more relevant for longer time frames and stable financial market conditions where informational uncertainty about counterparties is low. We view our model instead as describing a short time period with great uncertainty. For instance in the aftermath of a financial crisis, banks are very unsure of the solvency of other banks due to the difficulty of assessing the quality of their assets. In such situations, banks will be hesitant to trade with each other and will carefully attempt to learn  the solvency of other institutions through observations of repayments. Thus each bank's reputation evolves over time. Banks that obtain low reputations may get shut out of the funding market entirely during liquidity runs, as was the case during the collapse of Lehman Brothers in the recent financial crisis. It is important for a financial regulator to carefully structure the trading network and control the interactions so that such situations can be mitigated.

\subsection{Relation to Social Ostracism Literature}
Finally, we note that our model also has important implications for social and organizational networks. Our results about the negative externalities of reputational learning highlight the damaging impacts of ostracism found in the social psychology literature. Social ostracism is a prevalent force that has been well documented in the social psychology literature in numerous settings ranging from online interactions to office workplaces. As Williams and Sommer (1997) state, ``Social ostracism is a pervasive and ubiquitous phenomenon." In this literature, ostracism can also occur when an agent's perceived quality drops too low, and will have harmful effects on the agent itself. As the paper by Wesselman et al (2013) notes, ``Ostracism is a common, yet painful social experience...Individuals who do not fit the group's definition of a contributing member may find themselves a likely candidate for \textit{punitive} ostracism". That paper shows the occurrence of ostracism via an online experiment, where agents differ in their ability to play a game. Agents who play badly became ostracized by the others. This effect is similar to our model, where agents who are learned to be of low quality are ostracized from the network.

Ostracism can also occur in workplaces, as some employees may be ostracized by their coworkers. Robinson et al (2012) notes that ``not only are such experiences extremely painful, but under some circumstances they can have an even greater negative impact than other harmful workplace behaviors such as aggression and harassment." It is therefore important for companies to consider the harmful effects of ostracism that can occur through workplace interactions. We provide guidelines for minimizing the negative effects of ostracism through placing lower reputation agents in less central positions of the network.
\section{Model}
\subsection{Overview}
We consider an infinite horizon continuous time model with a finite set of agents denoted by $V = \{1,2,...,N\}$. At every moment in time, the agents choose which other agents to link with, and a link is established only under mutual consent. These choices are made subject to an underlying network constraint $\Omega = \{\omega_{ij}\}$ that specifies the pairs of agents that are able to link with each other\footnote{This network constraint $\Omega$ may arise from the specific interests/desires of the agents regarding who they want to link with, or from potential physical/geographical constraints that limit agents from linking. It may also be planned, e.g. through the policies of a financial regulator for a network of financial institutions, or by the human resources department in a company for a network of employees.}. For each pair of agents $\omega_{ij} = 1$ if agents $i$ and $j$ can connect with each other and $\omega_{ij}=0$ otherwise. We call agents $i$ and $j$ neighbors if they can connect. Initially (time $t = 0$), agents are linked according to a network $G^0 = \{g^0_{ij}\} \subseteq \Omega$. As the network will change over time, we denote $G^t$ as the network at time $t$. Moreover, we let $k^t_i = \sum\limits_{j}g^t_{ij}$ be the number of links that agent $i$ has at time $t$, and we let $K^t_i$ denote the set of neighbors of agent $i$ at time $t$.

Agents receive flow payoffs from each link equal to the benefit of that link minus the cost. Each agent $i$ must pay a flow cost $c$ for each of its links that is active. Hence, at time $t$, agent $i$ pays a total cost of $k^t_i c$ for all its links. Agents also obtain benefits from their links, depending on their linked neighbors' qualities $q_i$. However each agent's true quality is initially unknown to all agents, and we do not require that agents know their own qualities. At the start of the model, each agent $i$'s quality $q_i$ is drawn from a commonly known normal distribution $\mathcal{N}(\mu_i, \sigma^2_i)$ with $\mu_i > c$. Both the mean and the variance are allowed to vary across agents, and several of our results below will utilize this heterogeneity. Agent $i$ generates a different noisy benefit $b_{ij}(t)$ for each agent $j$ that is linked to it, and these benefits follow a diffusion $db_{ij}(t) = q_i dt + \tau_i^{-1/2} d Z_{ij}(t)$, where the drift is the true quality $q_i$ and the variance depends on $\tau_i$, an exogenous parameter we call the \textit{signal precision} of agent $i$\footnote{We can think of the signal precision as representing how much information the agent reveals about itself in each interaction, with a higher precision corresponding to more information. It could depend on the type of interaction with the agent (e.g. close partnerships or chance encounters), or factors like the agent's personality.}. $Z_{ij}(t)$ is a standard zero-drift and unit variance Brownian motion, and represents the random fluctuations in the benefits of each interaction. $Z_{ij}(t)$ is assumed to be independent over all $i$ and $j$, and therefore all the benefits produced by agent $i$ are conditionally independent given $q_i$. We assume that all the benefits that agent $i$ produces are observed by all the neighbors of $i$, which ensures that agent $i$'s neighbors all have the same beliefs about $i$ at any point in time (information is locally public among agent $i$'s neighbors)\footnote{This is an important assumption to maintain the tractability of the model. It can be interpreted, for instance, through an online expertise network where the output of agent $i$ is public, so that all neighbors of agent $i$ can judge the benefit that $i$ has provided to all its links. Or in an offline setting, we could assume that the neighbors of agent $i$ are continuously discussing the benefits they have received from $i$ with all other neighbors of $i$, so that the neighbors maintain the same beliefs. For most of our results, the information does not need to be fully public; the information regarding agent $i$ needs only be available to all the direct neighbors of agent $i$.}. For each agent $i$, we define the agent's benefit history as the history of all previous benefits, $\mathcal{H}^t_i = \{b^{t'}_{ij}\}^t_{t'=0}$. 

We assume that agents are myopic, and they thus consider only the current flow benefit when making linking decisions\footnote{Such an assumption is common within the networks literature to maintain tractability, see Jackson and Wolinsky (1996) or Watts (2001) for instance. Myopia is an appropriate assumption in financial networks where firm managers have myopic incentives. Such myopic incentives have been documented empirically in papers such as Jacobson (1993) and Mizik (2010). We relax this assumption in the extensions section where we allow for subsidies that change agent linking strategies.}. Each agent's utility is assumed to be linear in the benefits provided by each link and the linking cost. This also implies that agents are risk neutral and so consider the expectation over neighbor qualities when there is uncertainty. The flow utility of agent $i$ at any time $t$ is given by the following equation:

\begin{equation}
U_i = \sum\limits_{\{j\in K^t_i\}}(E[q_j|\mathcal{H}^t_i]-c)
\end{equation}

\subsection{Reputation and Learning Speed}
Since we have assumed a diffusion process, a sufficient statistic for all the individual link benefits is the average benefit per link produced by agent $i$ up to time $t$, which we denote as $B_i(t)$. Given our above assumptions, $B_i(t)$ follows a diffusion $dB_i(t) = q_i dt + (k^t_i \tau_i)^{-1/2} d Z_i(t)$ where the drift rate is the true quality $q_i$, the instantaneous volatility rate $(k^t_i \tau_i)^{-1/2}$ depends on the number of links agent $i$ has at time $t$, and $Z_i(t)$ is the standard Brownian motion with zero-drift and unit-variance. Importantly, this equation shows that the more links an agent has, the lower its volatility rate and the faster its true quality $q_i$ is learned. This is because an agent with more links produces more individual benefits, and so the average over all benefits is more precise. Note also that an agent with no links would not send any information, and thus there would be no learning about its quality. Therefore the topology of the network strongly affects the rate of learning about each agent's quality.

If at time $t$ all links of agent $i$ are severed, then no benefit will be produced by agent $i$ and this will be denoted as $b^t_i = \emptyset$. In this case no information is added and hence, the diffusion of agent $i$ is stopped at its current level. As mentioned, there is a prior belief of an agent $i$'s quality $\mathcal{N}(\mu_i, \sigma^2_i)$, and agents will update this belief in a Bayesian fashion in light of the observations of flow benefits. These observations combined with the prior quality distribution will result in a posterior belief distribution of agent $i$'s quality $f(q_i|\mathcal{H}^t_i)$ which is also normally distributed\footnote{As mentioned a sufficient statistic for the entire history is $B_i(t)$, so a neighbor only needs to know $B_i(t)$ in order to calculate this posterior.}. We denote $\mu^t_i = E[q_i|\mathcal{H}^t_i]$ as the expected quality of agent $i$ given the history $\mathcal{H}^t_i$ and call it the \textit{reputation} of agent $i$ at time $t$. The reputation represents the expected flow benefit of linking with agent $i$ at time $t$.

We have assumed that agents are myopic. Therefore, to maximize flow utilities, agent $i$ will cut off its link with agent $j$ once agent $j$'s reputation $\mu^t_j$ falls below the linking cost $c$. Since we assume all agents have homogeneous linking costs, and all neighbors have the same beliefs, any other agent that is linked to $j$ will also decide to sever its link. From this moment on, agent $j$ is effectively ostracized from the network; since it no longer has any links it cannot send any further information that could potentially improve its reputation\footnote{Although ostracism may seem harsh, as we noted earlier ostracism is a prevelant phenomenon that has been widely studied in the social psychology literature, in settings ranging from online interactions to office workplaces. Furthermore, in financial networks low reputation institutions may get shut out of funding completely during liquidity crisis.}. While in the base model an ostracized agent cannot return to the network, we relax this assumption in the extensions section.

\section{Network Dynamics and Stability}
\subsection{Network Dynamics}
The dynamics of the model evolve as follows: all pairs of agents that are neighbors according to the network constraint $\Omega$ will choose to link at time zero, since we have assumed that all agents have initial reputations higher than the cost $c$ (any agent with an initial reputation lower than $c$ is immediately ostracized from the network and would not need to be considered). Therefore the initial network at time 0 will be the same as the network constraint, $G^0=\Omega$. Over time agents that send bad signals will have their reputations decrease, and once an agent's reputation hits $c$ its neighbors will no longer wish to link with it. All its neighbors will sever their links and the agent is effectively ostracized from the network. We will show that this always happens for an agent with true quality $q_i\le c$, and will still happen with positive probability for an agent with quality $q_i>c$. The ostracization of an agent will affect its former neighbors as well. Since they now have once less link each, they will produce information about themselves more slowly than before, and so their reputations will be updated less quickly. 

The remaining agents in the network will continue to link and send signals until someone else's reputation drops too low and that agent is also ostracized. This process will continue until the qualities of all the remaining agents are known with very high precision and in the limit their reputations no longer change. Since agent qualities are fixed, by the law of large numbers any agents that remain in the network will have their qualities learned perfectly in the limit as $t\to \infty$, and the network will tend towards a limiting structure that we call the \textit{stable network}. The next section will explicitly characterize these stable networks, but we note that many different stable networks could potentially emerge depending on the true qualities of the agents and the signals they produce. Figure \ref{fig:figure1} shows the different network dynamics that could emerge even if the initial reputations of the agents are fixed, due to the uncertainty about the true qualities of the agents as well as the randomness in the signals they send.

\begin{figure}
\centering
\includegraphics[width=1\linewidth]{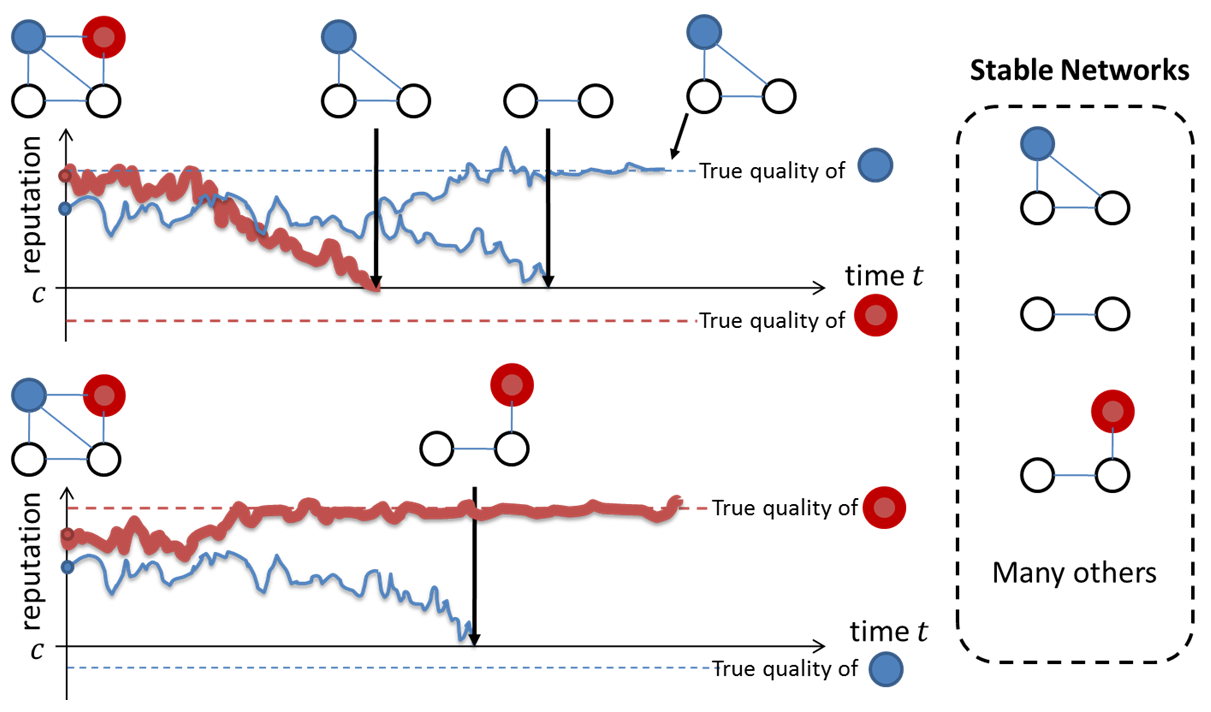}
\caption{Illustration of Possible Network Dynamics: From the same initial reputations for the red and blue agents, many different network dynamics and stable networks are possible. In the top graph the red (larger circle and bolded line) agent has a true quality less than $c$ and so will be ostracized from the network for certain at some time, while the blue (smaller circle and thin line) agent has a true quality above $c$ and so may or may not be ostracized from the network depending on the signals it sends. Each event leads to a different stable network, either with and one without the blue agent. In the bottom graph it is the blue agent who has a true quality lower than $c$ and so will be ostracized for sure, whereas the red agent could potentially stay in the network indefinitely.}
\label{fig:figure1}
\end{figure}

\subsection{Stable Networks}
As mentioned, we call the limiting network structure as $t$ goes to infinity, denoted by $G^\infty$, a stable network. Formally, let $G^\infty\equiv\lim_{t\to\infty}G^t$. This limiting structure always exists since agent qualities are fixed, so by the law of large numbers any agent that remains in the network will have its quality learned to an arbitrary precision over time. The probability that an agent who is still in the network at time $t$ ever becomes ostracized must therefore tend to zero as $t\to \infty$ (we show this analytically below). Which specific stable network eventually emerges is random and depends on the signal realizations of each agent. The tractability of our model allows us to explicitly characterize the set of stable networks that could emerge given a set of agents and a network constraint $\Omega$, as well as the impact of the rate of learning on the probability distribution over stable networks.

To understand which stable networks $G^\infty$ can emerge, we investigate whether a link $l_{ij}$ between agents $i,j$ can exist at $t = \infty$. If two agents $i$ and $j$ are not neighbors (i.e. $\omega_{ij} = 0$), then it is certain that $g^\infty_{ij} = 0$. If two agents $i$ and $j$ are neighbors (i.e. $\omega_{ij} = 1$), then the existence of this link $l_{ij}$ at $t = \infty$ requires that the reputations of both $i$ and $j$ never hit $c$ for all finite $t$, which means that neither agent is ever ostracized. Hence $G^\infty$ will always be a subset of the initial network $G^0$, and is composed only of agents whose reputations never hit $c$ for all finite $t$.

We say that an agent is \textit{included in the stable network} if their reputation never hits $c$ for all $t$, so that they are never ostracized from the network. \footnote{As a technical note, when we make the ostracization classification, we assume that an agent who has all its neighbors ostracized continues to send information about itself at its signal precision level, with the signals sent via the same probability distribution which is based on its true quality. So we still considered the agent ``ostracized" if its reputation drops to $c$ via this information process even after all its  neighbors have been ostracized. This assumption is made for technical purposes only and has no impact on the dynamics or the welfare of the model, as the agent has no links in this case.}

Note that being included in the stable network does not imply that an agent has any links in the stable network, as it could also be that all of the neighbors of that agent were ostracized even though the agent itself was not. We can calculate the \textit{ex ante} probability that an agent $i$ is included in the stable network, which we denote by $P(S_i)$ with $S_i$ denoting the event in which agent $i$ is included in the stable network. This can be accomplished using standard results regarding Brownian motion hitting probabilities, since $P(S_i)$ is equal to the probability that the agent's reputation never hits $c$ for all finite $t$. The following proposition gives this probability.

\begin{proposition} $P(S_i)$ depends only on the initial quality distribution and the link cost and can be computed by
\begin{equation}
P(S_i) = \int_{c}^\infty(1 - \exp(-\frac{2}{\sigma^2_i}(\mu_i - c)(q_i - c)))\phi\left((q_i - \mu_i)\frac{1}{\sigma_i}\right)dq_i/\sigma_i
\end{equation}
\label{prop:dist}
\end{proposition}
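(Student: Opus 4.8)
The plan is to condition on the true quality $q_i$, reduce the reputation path to a time-changed Brownian motion with drift, apply a classical first-passage formula, and then integrate over the Gaussian prior on $q_i$. Fixing $q_i$, I would first obtain the law of the reputation $\mu_i^t = E[q_i\mid\mathcal{H}_i^t]$. Writing $a = 1/\sigma_i^2$ for the prior precision, $v_s = (k_i^s\tau_i)^{-1/2}$ for the instantaneous volatility of $B_i$, and $I_t = \int_0^t k_i^s\tau_i\,ds = \int_0^t v_s^{-2}\,ds$ for the accumulated (Fisher) information, Gaussian conjugacy gives the explicit representation $\mu_i^t = \bigl(a\mu_i + \int_0^t v_s^{-2}\,dB_i(s)\bigr)/(a + I_t)$. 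Substituting $dB_i(s) = q_i\,ds + v_s\,dZ_i(s)$ and setting $M_t = \int_0^t v_s^{-1}\,dZ_i(s)$, a continuous martingale with $\langle M\rangle_t = I_t$, the Dambis--Dubins--Schwarz theorem yields $M_t = W_{I_t}$ for a standard Brownian motion $W$. Reparametrizing by the information time $s = I_t$ then gives the clean form $\mu(s) = (a\mu_i + q_i s + W_s)/(a + s)$.

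Next I would reduce the event $S_i$ to a first-passage problem. Since $a+s>0$, we have $\mu(s) > c \iff a(\mu_i-c) + (q_i-c)s + W_s > 0$, so defining $Y_s = \tfrac{\mu_i-c}{\sigma_i^2} + (q_i-c)s + W_s$ — a Brownian motion with drift $\nu = q_i-c$ started at $x = (\mu_i-c)/\sigma_i^2 > 0$ (here $\mu_i>c$ is used) — the agent is included in the stable network exactly when $Y_s > 0$ for all $s\ge 0$. I would then invoke the standard result $P(\inf_{s\ge0} Y_s > 0\mid q_i) = 1 - e^{-2\nu x}$ for $\nu > 0$ and $=0$ for $\nu\le 0$ (equivalently, $-\inf_{s\ge0}(W_s+\nu s)\sim\mathrm{Exp}(2\nu)$ for $\nu>0$). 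Plugging in $\nu$ and $x$ gives $P(S_i\mid q_i) = 1 - \exp\!\bigl(-\tfrac{2}{\sigma_i^2}(\mu_i-c)(q_i-c)\bigr)$ for $q_i > c$ and $0$ for $q_i\le c$. Integrating against the prior density $\phi\!\left((q_i-\mu_i)/\sigma_i\right)/\sigma_i$, with the region $q_i\le c$ contributing nothing, produces precisely the stated formula.

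The main obstacle is justifying the time change, because $I_t$ is \emph{endogenous and random}: the link count $k_i^t$ falls as neighbors are ostracized, so the clock is not deterministic. I would argue two points. First, $I_t\to\infty$: using the footnote's convention that an otherwise-isolated agent keeps emitting signals at precision $\tau_i$ (so $k_i^s\ge 1$ and $I_t\ge \tau_i t$), the information time sweeps out all of $[0,\infty)$, so that ``$\mu$ never hits $c$ in real time'' coincides with ``$Y_s>0$ for all $s\in[0,I_\infty)=[0,\infty)$.'' Second, the first-passage event is invariant under any strictly increasing time change, and the law of $Y$ given $q_i$ is a fixed Brownian motion with drift whose distribution does not depend on the link dynamics (the neighbors' signals $Z_j$ driving $k_i^t$ are independent of $Z_i$, and on the event $S_i$ agent $i$'s own process is never truncated by ostracism). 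Hence the conditional probability, and therefore $P(S_i)$, depends only on $(\mu_i,\sigma_i^2,c)$, which is exactly the ``depends only on the initial quality distribution and the link cost'' claim. The remaining ingredients — the conjugacy algebra, the DDS representation, and the exponential running-minimum law — are routine and would only need to be stated.
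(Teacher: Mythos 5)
Your proof is correct and recovers exactly the paper's conditional survival probability, $P(S_i\mid q_i)=1-\exp\bigl(-\tfrac{2}{\sigma_i^2}(\mu_i-c)(q_i-c)\bigr)$ for $q_i>c$ and $0$ otherwise, but by a genuinely different route. The paper fixes the precision at $\tau_i$, quotes the finite-horizon boundary-crossing formula of Wang and P\"otzelberger (1997) for $P(S_i^t\mid q_i)$ --- the $\Phi(\cdot)-e^{-2\nu x}\Phi(\cdot)$ expression --- and then sends $t\to\infty$ case by case in the sign of $q_i-c$; independence from $\tau_i$ and from the network is then read off from the limiting formula and dispatched in one sentence (``breaking links only changes the Brownian motion precision''). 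You instead derive the filtering representation of the reputation from Gaussian conjugacy, pass to information time via Dambis--Dubins--Schwarz, reduce $S_i$ to the event that a Brownian motion with drift $q_i-c$ started at $x=(\mu_i-c)/\sigma_i^2>0$ never reaches $0$, and apply the exponential law of the running minimum directly, with no finite-$t$ formula and no limit. What your approach buys is an explicit treatment of the one delicate point that the paper glosses over: the clock $I_t$ is random and endogenous because $k_i^t$ drops as neighbors are ostracized, and you argue separately that $I_\infty=\infty$ (using the footnote's convention $k_i^s\ge 1$) and that the first-passage event is invariant under the strictly increasing continuous reparametrization, so that the conditional probability genuinely depends only on $(\mu_i,\sigma_i^2,c)$. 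This makes the ``depends only on the initial quality distribution and the link cost'' half of the proposition a consequence of the argument rather than an after-the-fact observation about the formula. The only point worth stating explicitly is that Dambis--Dubins--Schwarz applies even though $\langle M\rangle_t=I_t$ is not independent of $M$ (it does; the theorem needs only continuity and $\langle M\rangle_\infty=\infty$), but that is standard.
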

\begin{proof}
See appendix.
\end{proof}

Proposition \ref{prop:dist} has several important implications. Note that since $P(S_i)$ is positive and less than $1$ for all $i$, no agent is certain to be included in or excluded from the stable network.  Also note that the probability an agent is part of the stable network is independent of that agent's signal precision $\tau_i$. Therefore the rate at which the agent sends information does not affect the chance that it is in the stable network. This is because the rate at which the agent sends information only affects \textit{when} it gets ostracized from the network, but not \textit{if} it gets ostracized overall\footnote{To understand this intuitively, recall that reputation evolves through Bayes updating of the Brownian motion. A higher precision increases the amount of information sent at every moment in time, but the overall probability distribution of the information that is sent across all time remains the same. To see this rigorously, note that in the proof of Proposition \ref{prop:dist} in the appendix, the survival probability of an agent depends on $\tau _i$ only through the term $
t\tau_i$. Therefore increasing $\tau_i$ and decreasing the considered time $t$ proportionally leaves the overall survival probability unchanged.}. Furthermore, note that the probability an agent $i$ is included in the stable network is independent of its links with other agents and the properties of those agents. Connections with other agents affect the rate at which an agent sends information but not the agent's true quality, and so will not impact whether it is eventually ostracized from the network.

Using the explicit expression above, we can also describe how $P(S_i)$ depends on an agent's initial mean and variance, $\mu_i$ and $\sigma_i$.
\begin{corollary}
\label{corr:limitprob}
For each agent $i$, $P(S_i)$ is increasing in the mean of its initial quality $\mu_i$, decreasing in the variance of its initial quality $\sigma_i^2$, and decreasing in the link cost $c$. Moreover, $\lim_{\mu_i\to \infty} P(S_i) = 1$, $\lim_{\sigma_i\to 0} P(S_i) = 1$, $\lim_{c\to -\infty} P(S_i) = 1$.
\end{corollary}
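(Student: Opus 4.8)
The plan is to observe that the integral in Proposition~\ref{prop:dist} collapses to a function of the single dimensionless parameter $m \equiv (\mu_i - c)/\sigma_i$, which is strictly positive because the model assumes $\mu_i > c$. Substituting $q_i = \mu_i + \sigma_i z$ (so that $\phi((q_i-\mu_i)/\sigma_i)\,dq_i/\sigma_i = \phi(z)\,dz$, and the lower limit $q_i=c$ becomes $z=-m$), the exponent $-\tfrac{2}{\sigma_i^2}(\mu_i-c)(q_i-c)$ becomes exactly $-2m(m+z)$. Hence
\begin{equation}
P(S_i) = g(m), \qquad g(m) \equiv \int_{-m}^{\infty}\bigl(1 - e^{-2m(m+z)}\bigr)\phi(z)\,dz .
\end{equation}
This is the crux of the argument: once $P(S_i)$ is written as $g(m)$, all three comparative statics and all three limits follow from how $m$ moves with $\mu_i$, $\sigma_i^2$, and $c$.

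Next I would evaluate $g(m)$ in closed form. Splitting the integrand, the first piece gives $\int_{-m}^\infty \phi(z)\,dz = \Phi(m)$. For the second piece I would complete the square in the exponent, $-2m^2 - 2mz - \tfrac12 z^2 = -\tfrac12(z+2m)^2$, which shows $\int_{-m}^\infty e^{-2m(m+z)}\phi(z)\,dz = \int_{-m}^\infty \phi(z+2m)\,dz = \int_{m}^\infty \phi(w)\,dw = \Phi(-m)$. Therefore
\begin{equation}
P(S_i) = \Phi(m) - \Phi(-m) = 2\Phi(m) - 1 = 2\,\Phi\!\left(\frac{\mu_i - c}{\sigma_i}\right) - 1 .
\end{equation}
This closed form is the main computational step; the completion of the square is the only place where care is needed to avoid a sign slip.

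With this expression the corollary is immediate. Since $\Phi$ is strictly increasing, $P(S_i)$ is strictly increasing in $m$, and $m = (\mu_i - c)/\sigma_i$ is increasing in $\mu_i$, decreasing in $\sigma_i$ (hence in $\sigma_i^2$), and decreasing in $c$; this yields the three monotonicity claims. For the limits, each of $\mu_i \to \infty$, $\sigma_i \to 0^+$, and $c \to -\infty$ drives $m \to +\infty$ (using $\mu_i > c$ throughout), and $\lim_{m\to\infty}\bigl(2\Phi(m)-1\bigr) = 1$, giving $\lim P(S_i) = 1$ in all three cases. The only genuine obstacle is spotting the single-parameter reduction and carrying out the Gaussian integral cleanly; if one instead tried to differentiate the original integral with respect to each parameter, one would have to justify differentiation under the integral sign and sign-check three separate integrands, which is strictly more work than deriving the closed form once.
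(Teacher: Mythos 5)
Your proof is correct, and it is tidier than the paper's. The paper proves the three monotonicity claims separately: for $\mu_i$ and for $c$ it compares integrals directly (enlarging the domain of integration and noting the integrand is pointwise monotone in the parameter), and only for $\sigma_i$ does it complete the square to reduce $P(S_i)$ to $\int_{c-\mu_i}^{\mu_i-c}\phi(x/\sigma_i)\,dx/\sigma_i$ — which is exactly your $2\Phi\bigl((\mu_i-c)/\sigma_i\bigr)-1$ in disguise. In other words, the closed form you derive is implicitly present in the paper's argument but is exploited for only one of the three comparative statics; you extract it once, observe that $P(S_i)$ depends on the parameters only through the single quantity $m=(\mu_i-c)/\sigma_i$, and then all six claims (three monotonicities and three limits) follow from the strict monotonicity of $\Phi$ and $m\to\infty$. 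This buys you two things: the $\mu_i$ and $c$ cases no longer need separate integrand/domain comparisons, and the limit statements — which the paper's appendix proof does not actually address — come for free. Your Gaussian computation checks out: the exponent $-2m^2-2mz-\tfrac12 z^2=-\tfrac12(z+2m)^2$ is right, the shifted integral evaluates to $\Phi(-m)$, and the observation that each of $\mu_i\to\infty$, $\sigma_i\to 0^+$, $c\to-\infty$ drives $m\to+\infty$ (using $\mu_i>c$) is all that is needed for the limits.
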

\begin{proof}
See appendix.
\end{proof}
These properties are intuitive since an agent with a higher mean quality and smaller variance is less likely to have its reputation drop below $c$, and so is less likely to become ostracized. Moreover, lowering the linking cost also reduces the hitting probability since the agent's reputation would now have to fall lower to be excluded from the network.

As mentioned, $G^\infty$ must be a subset of $G^0$. Further, it can contain links only amongst pairs of agents that are both included in the stable network and were linked in the initial network. Equivalently, the set of stable networks can be thought of as the set of networks that can be reached from $G^0$ by sequentially ostracizing agents. Let $I\{S_i\}$ denote the indicator variable of the event in which agent $i$ is included in the stable network. Formally, a network can be stable if and only if it is a matrix with entries given by $g_{ij} = I\{S_i\}I\{S_j\}I\{g_{ij}^0=1\}$, for some realization of $\{S_i, ^\neg S_i\}_{i\in V}$. Links can exist only among agents that were never ostracized and were linked in the original network. Note that different realizations of $\{S_i, ^\neg S_i\}_{i\in V}$ could potentially correspond to the same stable network\footnote{For instance suppose that the network comprises only two agents $i$ and $j$. Then the event in which $S_i$ but not $S_j$ occurs and the event in which both $S_i$ and $S_j$ occur lead to the same stable network structure: the empty network.}.

By Proposition \ref{prop:dist}, we know that the rates of learning do not affect the probability of each event $S_i$. Since the rate of learning has no effect at an individual level, it cannot have an effect at the aggregate level either. This is formalized in the following theorem. We can also use the equation in Proposition \ref{prop:dist} to derive an analytic expression for the probability that any specific stable network emerges, which is presented in the corollary below. Figure \ref{fig:compute_prob} in the appendix shows an example of how the corollary can be applied to a simple network of three agents.

\begin{theorem}
\label{sigprec}
The signal precisions of the agents, $\{\tau_i\}_{i\in V}$, do not affect the set of stable networks that can emerge or the probability that any stable network emerges.
\end{theorem}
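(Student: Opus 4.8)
The plan is to reduce the theorem to a single statement about the joint law of the ostracism events $\{S_i\}_{i\in V}$, and then to show that this joint law does not involve the precisions. The characterization already obtained says that every stable network satisfies $g_{ij}=I\{S_i\}I\{S_j\}I\{g^0_{ij}=1\}$, so both the set of networks that can arise and the probability of each are determined by the pair consisting of the fixed primitive $G^0=\Omega$ and the joint distribution of $\{S_i\}_{i\in V}$. Proposition~\ref{prop:dist} already shows that each marginal $P(S_i)$ is free of $\tau_i$ (indeed of every $\tau_j$), so the entire remaining content is to prove that the \emph{joint} law of $\{S_i\}_{i\in V}$ is free of $\{\tau_i\}$.

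First I would pass each agent into ``information time.'' Writing $s_i(t)=k^t_i\tau_i$ for the instantaneous precision and $\theta_i(t)=\sigma_i^{-2}+\int_0^t s_i(u)\,du$ for the accumulated precision, the Gaussian (Kalman) update for the reputation, after the substitution $d\theta=s_i\,dt$, takes the form $d\mu_i=\frac{q_i-\mu_i}{\theta}\,d\theta+\frac{1}{\theta}\,d\tilde W_i(\theta)$, where $\tilde W_i$ is the Dambis--Dubins--Schwarz Brownian motion of the martingale $\int_0^\cdot s_i^{1/2}\,dZ_i$. The crucial feature is that this information-time equation contains neither $\tau_i$ nor the endogenous link count $k^t_i$; these enter only through the time change $t\mapsto\theta_i(t)$. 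Under the convention that a neighbourless agent still emits signals at precision $\tau_i$, we have $s_i\ge\tau_i>0$, so $\theta_i$ is strictly increasing and surjective onto $[\sigma_i^{-2},\infty)$. Hence the real-time event $S_i=\{\mu^t_i>c\text{ for all }t\}$ coincides with the information-time event $\{\mu_i(\theta)>c\text{ for all }\theta\ge\sigma_i^{-2}\}$, which is a functional of $(q_i,\tilde W_i)$ alone.

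Next I would prove that the pairs $(q_i,\tilde W_i)$ are mutually independent across $i$. The qualities $q_i$ are independent primitives, independent of all the noise, so only the joint law of the $\tilde W_i$ is at issue. The difficulty is that the time change $\theta_i$ of agent $i$ depends on \emph{when its neighbours are ostracised}, hence on the other agents' noise, so one cannot naively assert independence. The resolution is that the driving martingales $M^i=\int_0^\cdot s_i^{1/2}\,dZ_i$ have vanishing cross-variation, $\langle M^i,M^j\rangle\equiv0$ for $i\ne j$, because the aggregate Brownian motions $Z_i$ inherit independence from the underlying $\{Z_{ij}\}$. Knight's theorem on the simultaneous time change of orthogonal continuous martingales then yields that the $\tilde W_i$ are \emph{independent} standard Brownian motions, irrespective of the coupled and $\tau$-dependent time changes. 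Combined with the independence of the $q_i$, the events $\{S_i\}_{i\in V}$ are mutually independent with the precision-free marginals of Proposition~\ref{prop:dist}.

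Finally I would assemble the two claims. For the set of stable networks, independence together with $0<P(S_i)<1$ makes every realization of $\{S_i,\neg S_i\}_{i\in V}$ occur with positive probability, so the networks that can emerge are precisely the vertex-deletion subnetworks of $G^0$ --- a combinatorial set fixed by $\Omega$ and manifestly independent of $\{\tau_i\}$. For the probability of a given stable network $G^\ast$, independence lets me write $P(G^\infty=G^\ast)$ as a sum, over the realizations $(s_i)\in\{0,1\}^V$ that induce $G^\ast$, of $\prod_{i:\,s_i=1}P(S_i)\prod_{i:\,s_i=0}\bigl(1-P(S_i)\bigr)$, each factor being $\tau$-free by Proposition~\ref{prop:dist}; hence the whole expression is $\tau$-free. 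I expect the main obstacle to be exactly the third step: the time changes $\theta_i(t)$ are endogenous and coupled across agents through the link counts, so the informal statement that ``$S_i$ depends only on agent $i$'' is not immediate. Turning the orthogonality of the driving noises into genuine independence through a time-change (Knight / Dambis--Dubins--Schwarz) argument is what makes the reduction to products of precision-free marginals legitimate.
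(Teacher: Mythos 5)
Your proposal is correct and its skeleton is the same as the paper's: reduce everything to the joint law of the inclusion events $\{S_i\}_{i\in V}$, invoke the characterization $g_{ij}=I\{S_i\}I\{S_j\}I\{g^0_{ij}=1\}$ for the set of stable networks, and then use precision-free marginals together with independence across agents to kill the $\tau$-dependence of the probabilities. Where you genuinely go beyond the paper is in the middle step. The paper's proof simply asserts, citing Proposition~\ref{prop:dist}, that ``$P(S_i)$ is independent over the different agents and does not depend on the speed of learning''; the proof of Proposition~\ref{prop:dist} itself only computes the marginal hitting probability for a \emph{constant} precision and then remarks that breaking links ``only changes the Brownian motion precision.'' It never confronts the fact that the effective time changes $\theta_i(t)$ are endogenous and coupled across agents through the link counts, so neither the time-change invariance of each $S_i$ nor the mutual independence of the $S_i$ is actually proved there. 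Your Dambis--Dubins--Schwarz reduction of each reputation to information time, plus Knight's theorem to convert orthogonality of the driving martingales into independence of the time-changed Brownian motions, is exactly the machinery needed to make the paper's assertion rigorous. The cost is heavier probabilistic technology; the benefit is that the product formula in Corollary~\ref{corr:stableprob} (which the paper states without proof) comes out as a theorem rather than an assumption. One small point to make explicit: Knight's theorem requires the time changes to be adapted and the total quadratic variations to diverge, which you should note follows from your convention $s_i\ge\tau_i>0$.
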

\begin{proof}
It is clear that a network $G$ must be a subset of $G^0$ and can be stable if and only if there exists at least one combination of events $\{S_i, ^\neg S_i\}_{i\in V}$ such that $g_{ij} = I\{S_i\}I\{S_j\}I\{g_{ij}^0=1\}$. Thus the set of stable networks does not depend on the learning speed. Moreover, according to Proposition \ref{prop:dist}, $P(S_i)$ is independent over the different agents and does not depend on the speed of learning. Hence the probability that any specific link exists in the stable network exists also independent of the learning speed, so the probability of any stable network emerging is also independent of the learning speed.
\end{proof}

\begin{corollary}
\label{corr:stableprob}
The probability that a network $G$ is a stable network is given by $\sum\limits_{\{S_i\}}\prod_i P(S_i)$ where the summation is over all realizations of $\{S_i, ^\neg S_i\}_{i\in V}$ that correspond to $G$.
\end{corollary}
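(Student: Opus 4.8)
The plan is to show that the event $\{G^\infty = G\}$ is a disjoint union of elementary events, one for each inclusion/exclusion pattern $\{S_i, ^\neg S_i\}_{i\in V}$ that induces $G$, and then to factor the probability of each such pattern using independence across agents. First I would fix the target network $G$ and recall, from the discussion immediately preceding the corollary, that each realization of $\{S_i, ^\neg S_i\}_{i\in V}$ determines the stable network deterministically through $g_{ij}=I\{S_i\}I\{S_j\}I\{g^0_{ij}=1\}$. Hence the event $\{G^\infty=G\}$ equals the union, over all realizations whose induced matrix equals $G$, of the corresponding elementary events. Since distinct inclusion patterns are mutually exclusive, countable additivity gives $P(G^\infty=G)=\sum P(\text{realization})$, with the sum taken over exactly those realizations corresponding to $G$.

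Second, and this is the heart of the argument, I would establish that the events $\{S_i\}_{i\in V}$ are mutually independent, so that the probability of any realization factors as $\prod_i p_i$, where $p_i=P(S_i)$ when agent $i$ is included and $p_i=1-P(S_i)$ when it is excluded. The key point is that $S_i$ — the event that agent $i$'s reputation never reaches $c$ — is a measurable function of agent $i$'s own quality $q_i\sim\mathcal N(\mu_i,\sigma_i^2)$ and its own driving Brownian motion $Z_i$ alone. The rest of the network enters agent $i$'s reputation diffusion only through the volatility $(k^t_i\tau_i)^{-1/2}$, i.e.\ as a time change; by the argument behind Proposition~\ref{prop:dist} this affects only \emph{when} the barrier $c$ is hit, not \emph{whether} it is ever hit. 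Because the quality draws $\{q_i\}$ are independent across agents and the Brownian motions are independent over $i$ by assumption, the events $\{S_i\}$ are driven by disjoint independent sources of randomness and are therefore mutually independent — precisely the fact already invoked in the proof of Theorem~\ref{sigprec}.

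Combining the two steps yields $P(G^\infty=G)=\sum\prod_i p_i$, the summation being over all realizations of $\{S_i, ^\neg S_i\}_{i\in V}$ corresponding to $G$, which is the claimed expression once we adopt the stated convention that the factor $P(S_i)$ abbreviates the probability of whichever of $S_i$ or $^\neg S_i$ appears in the given realization. The main obstacle is the independence step: one must justify that the binary outcome $S_i$ decouples from the rest of the network even though the learning \emph{speed} $k^t_i$ is coupled to the fates of agent $i$'s neighbors. The cleanest route is the time-change reduction underlying Proposition~\ref{prop:dist}, which expresses $S_i$ in terms of $(q_i,Z_i)$ only; the remaining measure-theoretic bookkeeping — disjointness of realizations and additivity — is then routine.
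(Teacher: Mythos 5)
Your proposal is correct and follows essentially the same route the paper takes: decompose the event $\{G^\infty=G\}$ into the disjoint realizations of $\{S_i,\,^\neg S_i\}_{i\in V}$ that induce $G$ via $g_{ij}=I\{S_i\}I\{S_j\}I\{g^0_{ij}=1\}$, and factor each realization's probability using the independence of the $S_i$ across agents, which the paper justifies exactly as you do — Proposition~\ref{prop:dist} shows $S_i$ depends only on agent $i$'s own quality and noise because the network enters only as a time change that affects when, not whether, the barrier $c$ is hit. Your explicit note that the factor $P(S_i)$ must be read as $1-P(S_i)$ for excluded agents is a useful clarification of the corollary's notation, but the substance is identical to the argument sketched in the proof of Theorem~\ref{sigprec}.
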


We have shown that the speed of learning has no impact on the probability that a network $G$ is stable. This is intuitive since learning only affects the duration of a link but not its final state. However, learning will have a crucial role on the social welfare of a network, which directly depends on how long the agents are connected. We will consider the impact of learning on the social welfare in the next section.

\section{Welfare Computation}
We will analyze overall social welfare from an \textit{ex ante} perspective, given only the network constraint $\Omega$ and the prior agent quality distributions. Importantly the \textit{ex ante} welfare is calculated before the agent qualities are learned and any signals are sent. This type of welfare is the most suitable for the type of design settings we will consider later, as it requires the least knowledge on the part of the network designer. Let $P(L^t_{ij}|q, G^0)$ denote the probability that the link between agents $i$ and $j$ still exists at time $t$. Also, let the parameter $\rho$ represent the discount rate of the network designer\footnote{We are assuming that the designer itself is more patient than the myopic agents. This can be thought of, for instance, as a company manager who is more patient than its workers who act myopically in their interactions, or a financial regulator that is more patient than the financial institutions, which have managers with myopic incentives.}. We can define the overall \textit{ex ante} social welfare $W$ formally as follows:

\begin{equation}
\label{welfare0}
W = \int_{q_{1}=-\infty}^\infty ... \int_{q_{N}=-\infty}^\infty \sum\limits_{i,j}\int_{0}^\infty e^{-\rho t}(q_j - c)P(L_{ij}^t|q,G^0)dt\phi(\frac{q_N - \mu_N}{\sigma_N})dq_N/\sigma_N...\phi(\frac{q_1 - \mu_1}{\sigma_1})dq_1/\sigma_1
\end{equation}

We will show that this social welfare expression can be calculated in a tractable fashion using a somewhat indirect approach. This approach utilizes the fact that the \textit{ex ante} social welfare is an expectation over all the possible \textit{ex post} signal realizations. We can calculate the \textit{ex ante} welfare by integrating over all possible realizations of the \textit{ex post} welfare, which simplifies equation \ref{welfare0} to a much more tractable form.

\subsection{Ex post welfare}
Consider an \textit{ex post} realization of agent hitting times $\varepsilon = \{\varepsilon_i^{t_i}\}_{i\in\mathcal{V}}$, where $\varepsilon_i^{t_i}$ denotes the event in which agent $i$'s reputation hits $c$ at time $t_i$ given all the agent signals (note that $t_i = \infty$ means that agent $i$'s reputation never hits $c$). In the event in which $t_i < \infty$, since the belief at time $t_i$ is correct, the expected value of agent $i$'s quality conditional on this event $\varepsilon_i^{t_i}$ is $E[q_i|\varepsilon_i^{t_i < \infty}] = c$. In the event with $t_i = \infty$, since the initial belief is accurate in expectation
\begin{eqnarray}
\mu_i = E[q_i] = P(\varepsilon_i^{t_i < \infty}) E[q_i|\varepsilon_i^{t_i < \infty}] + P(\varepsilon_i^{t_i = \infty}) E[q_i|\varepsilon_i^{t_i = \infty}] \\
=(1-P(S_i))c + P(S_i)E[q_i|\varepsilon_i^{t_i = \infty}]
\end{eqnarray}
and we have
\begin{equation}
E[q_i|\varepsilon_i^{t_i = \infty}] = \frac{\mu_i - (1-P(S_i))c}{P(S_i)}
\end{equation}
where $P(S_i)$ is given by Proposition \ref{prop:dist} and is independent of the network and the learning speed.

According to the above discussion, given an \textit{ex post} realization $\varepsilon$, an agent $i$ obtains 0 surplus from its neighbors that have finite hitting times and obtains positive surplus from those whose reputation never hits $c$ (and are therefore included in the stable network). The exact benefit agent $i$ receives in the second case depends on its own hitting time $t_i$, which determines the link breaking time with the other agents. We can calculate the \textit{ex post} surplus that an agent $i$ receives given $\varepsilon$ as follows:
\begin{eqnarray}
W_i(\varepsilon) = E_{q|\varepsilon} \left[\sum\limits_{j:g^{0}_{ij} = 1}\int_{0}^{\min\{t_i,t_j\}}e^{-\rho t}(q_j - c) dt\right]\\
=\sum\limits_{j:g^{0}_{ij} = 1, t_j = \infty}\int_{0}^{t_i} e^{-\rho t} \left(\frac{\mu_j - (1-P(S_j))c}{P(S_j)} -c\right) dt\\
=\frac{1-e^{-\rho t_i}}{\rho} \sum\limits_{j:g^{0}_{ij} = 1, t_j = \infty}\frac{\mu_j-c}{P(S_j)}
\end{eqnarray}
Note that this $W_i$ is taken from the perspective of the designer as it incorporates futures payoffs at the discount rate of $\rho$. This equation shows that in each \textit{ex post} realization of other agent hitting times, agent $i$ benefits if $t_i$ increases and it is ostracized later from the network. Summing over all agents, the social welfare given the \textit{ex post} realization $\varepsilon$ is therefore

\begin{equation}
\label{welfare}
W(\varepsilon) = \sum\limits_{i}\left(\frac{1-e^{-\rho t_i}}{\rho} \sum\limits_{j:g^{0}_{ij} = 1, t_j = \infty}\frac{\mu_j-c}{P(S_j)}\right)
\end{equation}

By taking the expectation over the events $\varepsilon$, the \textit{ex ante} social welfare can be found as $W = E_{\varepsilon}[W(\varepsilon)]$. In order to compute the \textit{ex ante} social welfare, we still need to know the distribution of the $t_i$, which is coupled in a complicated manner with the initial network and the learning process. For instance, if the neighbor of agent $i$ has a low hitting time and is ostracized quickly, then agent $i$ sends information at a slower rate and its own hitting time would increase. Thus directly computing the social welfare using the above equation is still difficult. In the next subsection, we develop an indirect method to calculate the distribution of $t_i$.

\subsection{Hitting time mapping}
Recall that an agent's links will scale up the rate at which it sends information compared to the rate it would send information if its precision were constant at the base level of $\tau_i$. Therefore each link also scales down the time at which the agent's reputation hits $c$. So to calculate when the agent is ostracized, we can first find when the agent's reputation would hit $c$ through sending signals at its signal precision level, and then scale this time downwards proportionately based on the network effect\footnote{Refer to footnote 12 for a justification of this type of scaling.}. Consider an \textit{ex post} realization of hitting times $\hat\varepsilon = \{\hat\varepsilon_i^{t_i}\}_{i\in\mathcal{V}}$ in which agent $i$'s reputation would hit $c$ at time $t_i$ if its precision were fixed at $\tau_i$ at all times. Note that the events $\hat\varepsilon_i^{t_i}$ are independent from each other across different agents, and since the precision is fixed they also do not depend on the network structure. The probability of $\hat\varepsilon_i^{t_i}$ can be explicitly computed in the following lemma.
\begin{lemma}
\label{pdf}
The probability density function $f(\hat\varepsilon_i^{t_i}), \forall t_i < \infty$ can be computed as
\begin{equation}
f(\hat\varepsilon_i^{t_i}) = \int_{-\infty}^\infty  \frac{\mu_i-c}{\sigma_i^2\sqrt{\tau_i}}t^{-3/2}\phi\left(\sqrt{t\tau_i}(q_i-c) + \frac{\frac{1}{\sigma^2_i}(\mu_i - c)}{\sqrt{t\tau_i}}\right) \phi(\frac{q_i-\mu_i}{\sigma}) dq_i/\sigma_i
\end{equation}
The probability mass point function $f(\hat\varepsilon_i^{t_i = \infty}) = P(S_i)$.
\end{lemma}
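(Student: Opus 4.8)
The plan is to reduce the first-passage problem for the \emph{reputation} to a first-passage problem for the underlying \emph{signal}, which (conditional on the true quality) is an ordinary drifted Brownian motion, and then to invoke the classical inverse-Gaussian hitting-time density. Since in this lemma the precision is frozen at $\tau_i$, the accumulated signal $X_t$ is, conditional on $q_i$, a Brownian motion with drift $q_i$ and variance rate $\tau_i^{-1}$ started at $X_0 = 0$, and it is a sufficient statistic for the history. By Gaussian conjugacy of the $\mathcal{N}(\mu_i,\sigma_i^2)$ prior with the Brownian likelihood, the reputation is $\mu_i^t = (\sigma_i^{-2}\mu_i + \tau_i X_t)/(\sigma_i^{-2} + \tau_i t)$. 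Setting $\mu_i^t = c$ and solving for $X_t$ shows that the reputation hits $c$ at time $t$ exactly when $X_t$ meets the affine boundary $X_t = ct - (\mu_i - c)/(\sigma_i^2\tau_i)$; equivalently, writing $Y_t = X_t - ct$, the reputation hits $c$ precisely when $Y_t$ first reaches the fixed level $-a$ with $a = (\mu_i-c)/(\sigma_i^2\tau_i) > 0$ (positive since $\mu_i > c$).

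Next I would condition on $q_i$. Given $q_i$, the process $Y_t$ is a Brownian motion with drift $q_i - c$ and variance rate $\tau_i^{-1}$ started at $0$, so its first passage time to $-a$ has the standard inverse-Gaussian density $f(t\mid q_i) = \frac{a\sqrt{\tau_i}}{\sqrt{2\pi}\,t^{3/2}}\exp(-\tau_i(a+(q_i-c)t)^2/(2t))$. Substituting $a = (\mu_i-c)/(\sigma_i^2\tau_i)$ turns the prefactor into $\frac{\mu_i-c}{\sigma_i^2\sqrt{\tau_i}}t^{-3/2}$ (times $1/\sqrt{2\pi}$), and a short rearrangement of the exponent gives $\tau_i(a + (q_i-c)t)^2/t = (\sqrt{t\tau_i}(q_i-c) + \frac{(\mu_i-c)/\sigma_i^2}{\sqrt{t\tau_i}})^2$, so the conditional density equals $\frac{\mu_i-c}{\sigma_i^2\sqrt{\tau_i}}t^{-3/2}\phi(\sqrt{t\tau_i}(q_i-c) + \frac{(\mu_i-c)/\sigma_i^2}{\sqrt{t\tau_i}})$. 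Integrating against the prior density $\phi((q_i-\mu_i)/\sigma_i)/\sigma_i$ then yields the stated expression for $f(\hat\varepsilon_i^{t_i})$.

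Finally, for the mass point at $t_i = \infty$ I would use that, conditional on $q_i$, a Brownian motion with drift $q_i - c$ started at $0$ never reaches the level $-a$ with probability $1 - \exp(-2(q_i-c)a\tau_i) = 1 - \exp(-\tfrac{2}{\sigma_i^2}(\mu_i-c)(q_i-c))$ when $q_i > c$, and reaches it almost surely when $q_i \le c$; integrating this survival probability against the prior reproduces exactly the expression for $P(S_i)$ in Proposition \ref{prop:dist}, giving $f(\hat\varepsilon_i^{t_i=\infty}) = P(S_i)$. The only real subtlety — the step to get right — is the reduction in the first paragraph: one must confirm that the Bayesian reputation, although it mean-reverts toward $q_i$ and is therefore \emph{not} itself a convenient process for hitting-time analysis, crosses $c$ if and only if the far simpler process $Y_t$ crosses a constant level, which is precisely what makes the classical first-passage formula applicable. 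The remaining steps are the inverse-Gaussian density and routine algebraic matching.
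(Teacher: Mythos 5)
Your proposal is correct, and it reaches the paper's formula by a genuinely different (and somewhat more self-contained) route. The paper's proof takes the explicit survival function $P(S_i^t\mid q_i)$ already exhibited in the proof of Proposition \ref{prop:dist} (itself cited to standard boundary-crossing results) and computes $f(\hat\varepsilon_i^{t}\mid q_i) = -\tfrac{d}{dt}P(S_i^t\mid q_i)$; the algebraic work there consists of showing that the two terms produced by the differentiation collapse, via the identity relating $e^{-\frac{2}{\sigma_i^2}(\mu_i-c)(q_i-c)}\phi\bigl(\sqrt{t\tau_i}(q_i-c)-\tfrac{(\mu_i-c)/\sigma_i^2}{\sqrt{t\tau_i}}\bigr)$ to $\phi\bigl(\sqrt{t\tau_i}(q_i-c)+\tfrac{(\mu_i-c)/\sigma_i^2}{\sqrt{t\tau_i}}\bigr)$, leaving exactly the stated integrand. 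You instead make explicit the reduction that the paper leaves implicit: the conjugate-posterior formula $\mu_i^t = (\sigma_i^{-2}\mu_i+\tau_i X_t)/(\sigma_i^{-2}+\tau_i t)$ shows that $\mu_i^t - c$ has the same sign as $Y_t + a$ with $Y_t = X_t - ct$ and $a=(\mu_i-c)/(\sigma_i^2\tau_i)>0$, so the reputation's first passage to $c$ coincides with the drifted Brownian motion $Y_t$'s first passage to the constant level $-a$; you then quote the inverse-Gaussian hitting density and match prefactor and exponent, and the mass point at $t_i=\infty$ follows from the standard no-crossing probability $1-e^{-2(q_i-c)a\tau_i}$, recovering $P(S_i)$. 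Both arguments are sound and rest on the same underlying first-passage theory; yours buys transparency (the constant-boundary reduction is verified rather than cited, and the density is obtained without differentiating), while the paper's buys brevity by reusing the survival formula it has already stated. One point worth keeping in your write-up is the monotone-sign observation that $\mu_i^t>c \iff Y_t>-a$ (not merely that the two events of \emph{equality} coincide), since that is what guarantees the two first hitting times are literally the same random variable.
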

\begin{proof}
See appendix.
\end{proof}
Using Lemma \ref{pdf}, we can easily obtain the distribution of joint events $f(\hat\varepsilon)=\prod_i f(\hat\varepsilon_i^{t_i})$ due to the fact that the individual events are independent. This would measure the joint probability of the agents exiting the network at times $\{t_i\}_{i\in\mathcal{V}}$ if the information sending speed of the agents were not being scaled by the number of their links. If there were no network effect, the \textit{ex ante} social welfare could be directly computed using the distribution of hitting times given by Lemma \ref{pdf}. However, due to the network effect, the actual hitting times may vary for each $\hat\varepsilon$. We can define $M:[0,\infty]^N \to [0, \infty]^N$ be the hitting time mapping function, which maps the hitting times with no network effect to the actual hitting times when there is a network effect. In the appendix we present an algorithm for computing $M$, which operates by scaling the information speed of each agent at every time $t$ by their current number of neighbors and updating the speed at which an agent sends information when a neighbor is ostracized. Note that if $t_i = \infty$ in the event $\hat\varepsilon_i^{t_i}$ then it is also $\infty$ in the mapped event $\varepsilon_i^{t_i}$. This means that an agent that never leaves the network with no scaling effect will not leave when the times are scaled either. Then given a realization $\hat\varepsilon$, the \textit{ex post} social surplus can be computed as
\begin{equation}
W(\hat\varepsilon) = \sum\limits_{i}\left(\frac{1-e^{-\rho M_i(t)}}{\rho} \sum\limits_{j:g^{0}_{ij} = 1, t_j = \infty}\frac{\mu_j-c}{P(S_j)}\right)
\end{equation}

Therefore, the \textit{ex ante} social welfare is $W = E_{\hat{\varepsilon}}[W(\hat\varepsilon)]$. We note that this is a tractable equation for the \textit{ex ante} social welfare given any network structure and set of agents. Proposition \ref{prop:dist} gives the explicit expression for $P(S_j)$, and Lemma \ref{pdf} provides the distribution of $\hat\varepsilon$. Thus our model allows for easy and tractable computations of the \textit{ex ante} social welfare of any type of network. Theorem \ref{welfcalc} below formalizes this result.

\begin{theorem}
\label{welfcalc}
Given $\Omega$, the initial quality distributions, and the link cost $c$, the overall \textit{ex ante} social welfare can be computed as follows
\begin{equation}
W = E_{\hat\varepsilon}\left[ \sum\limits_{i}\left(\frac{1-e^{-\rho M_i(t)}}{\rho} \sum\limits_{j:g^{0}_{ij} = 1, t_j = \infty}\frac{\mu_j-c}{P(S_j)}\right)\right]
\end{equation}
where the distribution of $\hat\varepsilon$ is computed using Lemma \ref{pdf} and the hitting time mapping function $M$ is given in the appendix.
\end{theorem}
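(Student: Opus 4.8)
The plan is to treat Theorem~\ref{welfcalc} as the statement that collects the ex~post surplus computation and the hitting-time reparametrization developed just above, so the proof is mainly a matter of justifying one transformation rather than a new calculation. First I would rewrite the defining integral \ref{welfare0} as an expectation over hitting-time profiles: since $W(\varepsilon)$ in \ref{welfare} is by construction the discounted social surplus conditional on a realized profile $\varepsilon=\{\varepsilon_i^{t_i}\}_{i\in\mathcal{V}}$, the ex~ante welfare is simply $W=E_\varepsilon[W(\varepsilon)]$. The two facts from the preceding subsection that I would re-invoke are that any neighbor $j$ with $t_j<\infty$ contributes zero, because the reputation process is a martingale and hence $E[q_j\mid t_j<\infty]=c$; and that any surviving neighbor ($t_j=\infty$) contributes its conditional quality $\frac{\mu_j-(1-P(S_j))c}{P(S_j)}$ by the law of total expectation, with $P(S_j)$ supplied by Proposition~\ref{prop:dist}. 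Because a surviving neighbor never breaks the link, the link $ij$ persists until agent $i$'s own ostracism time $t_i$, and $\int_0^{t_i}e^{-\rho t}\,dt=\frac{1-e^{-\rho t_i}}{\rho}$ produces the discount factor, giving \ref{welfare}.

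Next I would confront the only genuine difficulty: the law of the \emph{actual} hitting-time profile $\varepsilon$ is intractable, because each $t_i$ is coupled to the entire network through the neighbor-count scaling of the learning rate. The remedy is to pass to the \emph{uncoupled} profile $\hat\varepsilon=\{\hat\varepsilon_i^{t_i}\}$ of hitting times computed at fixed precision $\tau_i$, whose joint density factorizes as $\prod_i f(\hat\varepsilon_i^{t_i})$ by Lemma~\ref{pdf}, and to recover the true times through the deterministic mapping $M$, so that $\varepsilon=M(\hat\varepsilon)$ with $i$-th coordinate $M_i(t)$. Once $\varepsilon$ is realized as $M(\hat\varepsilon)$, the identity $E_\varepsilon[W(\varepsilon)]=E_{\hat\varepsilon}[W(M(\hat\varepsilon))]$ is just the pushforward of the expectation, so the expectation may be taken against the tractable product measure on $\hat\varepsilon$ rather than against the coupled measure on $\varepsilon$. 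The substitution is legitimate provided $M$ preserves the survivor classification: since $M$ sends $t_i=\infty$ to $t_i=\infty$ (as noted when $M$ is introduced), the survivor set $\{i:t_i=\infty\}$, and therefore every $P(S_j)$ appearing in \ref{welfare}, is identical under $\varepsilon$ and $\hat\varepsilon$. Hence the entire summand structure of \ref{welfare} transfers verbatim after replacing each $t_i$ by $M_i(t)$, which yields exactly the claimed expression $W=E_{\hat\varepsilon}[W(\hat\varepsilon)]$.

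I expect the crux to be establishing that $M$ is the correct deterministic reparametrization of time for every realization, together with its survivor-preservation property. The former is exactly what the scaling argument guarantees: each agent's information speed, and therefore its time-to-$c$, is rescaled by its current number of neighbors and updated at each ostracism event, so the baseline time $t_i$ in $\hat\varepsilon$ maps to a well-defined actual time $M_i(t)$; the latter follows because rescaling a finite time keeps it finite and cannot resurrect or kill the event $t_i=\infty$. Given these two points, nothing positive that contributes under $\varepsilon$ is lost or created under $\hat\varepsilon$, and everything else is a direct carry-over of the expressions for $P(S_j)$ and $f(\hat\varepsilon_i^{t_i})$ from Proposition~\ref{prop:dist} and Lemma~\ref{pdf}.
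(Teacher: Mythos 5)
Your proposal is correct and takes essentially the same route as the paper: the paper's own justification of Theorem~\ref{welfcalc} is exactly the two-step derivation in the preceding subsections (the ex post surplus formula obtained from $E[q_j\mid t_j<\infty]=c$ and the law of total expectation, followed by passing to the uncoupled profile $\hat\varepsilon$ and pushing forward through the hitting-time map $M$), which you reconstruct faithfully, including the two key points that $M$ preserves the event $t_i=\infty$ and that the law of $\hat\varepsilon$ factorizes by Lemma~\ref{pdf}. No gaps relative to the paper's argument.
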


\section{Impact of Information and Learning}
In this section we study the impact of learning on \textit{ex ante} welfare, both individual and overall, given an initial network $G^0$. In particular, we will show how the agents' signal precisions, a representation of the rate of learning, impact individual agent welfare as well as the overall social welfare.

As a benchmark, we consider the social welfare when there is no learning, which we denote by $W^*$. When there is no learning, no existing link will be severed. The social welfare of an agent $i$ without learning can therefore be computed by summing over the mean qualities of all agents it is connected with initially:

\begin{equation}
W^*_i =  \sum\limits_{j:g^0_{ij} = 1} \int_0^\infty e^{-\rho t} (\mu_j - c) dt
= \frac{1}{\rho}\sum\limits_{j:g^0_{ij} = 1} (\mu_j - c)
\end{equation}

The \textit{ex ante} overall social welfare without learning is given by the sum over the individual welfares:
\begin{equation}
W^* = \sum\limits_i W^*_i
= \frac{1}{\rho}\sum\limits_i \sum\limits_{j:g^0_{ij} = 1} (\mu_j - c)
\end{equation}

\subsection{Overall Impact of Learning}
Let $W(\tau_1,...,\tau_N)$ be the \textit{ex ante} social welfare when agents learn each other's true quality with the signal precisions being $\tau_1,...,\tau_N$. We also let $W_i(\tau_1,...,\tau_N)$ represent an agent $i$'s \textit{ex ante} welfare given these signal precisions. The next theorem states that in any network, the addition of learning has a negative impact on every individual's \textit{ex ante} welfare for any value of the signal precisions. This immediately implies that it lowers the overall \textit{ex ante} social welfare as well.
\begin{theorem}
$W_i(\tau_1,...,\tau_N) < W_i^*$ for all $i$ and for all ${\tau _1,...\tau_N}$.
\label{compnl}
\end{theorem}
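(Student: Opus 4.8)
The plan is to work directly from the \emph{ex post} surplus expression derived above and to prove the inequality term by term over the neighbors of $i$. Writing $W_i = E_{\hat\varepsilon}[W_i(\hat\varepsilon)]$ with the mapped hitting time $M_i$ substituted for $t_i$, and using that only neighbors $j$ with $t_j=\infty$ (the event $S_j$) contribute, linearity of expectation gives
\begin{equation}
W_i = \sum_{j:g^0_{ij}=1}\frac{\mu_j-c}{P(S_j)}\, E\!\left[\frac{1-e^{-\rho M_i}}{\rho}\,I\{S_j\}\right].
\end{equation}
Since $W_i^* = \frac{1}{\rho}\sum_{j:g^0_{ij}=1}(\mu_j-c)$ and each $\mu_j-c>0$ by assumption, dividing through by the positive factor $(\mu_j-c)/\rho$ shows that it suffices to establish, for every neighbor $j$, the strict inequality $E[(1-e^{-\rho M_i})I\{S_j\}] < P(S_j)$. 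Summing these strict term-by-term inequalities (provided $i$ has at least one neighbor in $G^0$, the only case in which the claim is nonvacuous) then yields $W_i < W_i^*$.

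To prove the per-neighbor inequality I would use only two facts about $M_i$, sidestepping its intricate definition. First, $1-e^{-\rho M_i}\le 1$, with equality exactly when $M_i=\infty$ and strict inequality whenever $M_i<\infty$: a finite hitting time of a reputation started strictly above $c$ is almost surely in $(0,\infty)$, so there $e^{-\rho M_i}\in(0,1)$. Second, the hitting-time mapping preserves finiteness, so $M_i=\infty$ if and only if $t_i=\infty$ in the unscaled event, i.e.\ if and only if $S_i$ holds. Consequently $(1-e^{-\rho M_i})I\{S_j\}\le I\{S_j\}$ pointwise, and on the event $S_j\cap\neg S_i$ this inequality is strict because there $M_i<\infty$.

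It then remains to verify that $S_j\cap\neg S_i$ has positive probability. By Proposition \ref{prop:dist} the events $\{S_i\}$ are independent across agents, and each satisfies $0<P(S_i)<1$; hence $P(S_j\cap\neg S_i)=P(S_j)\,(1-P(S_i))>0$. Taking the expectation of the pointwise bound and using strictness on this positive-probability set gives $E[(1-e^{-\rho M_i})I\{S_j\}] < E[I\{S_j\}]=P(S_j)$, which is exactly what the reduction requires.

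The main obstacle is the coupling induced by the mapping $M$: the actual ostracization time of $i$ depends on when its neighbors leave, so $M_i$ is entangled with the neighbor events and the expectation cannot simply be factored. The key observation that dissolves this difficulty is that the argument never needs the value of $M_i$ at all — only the qualitative fact that $M_i$ is finite precisely on $\neg S_i$, an event whose probability is unaffected by the mapping and which is independent of $S_j$. Thus the complicated scaling in $M$ is irrelevant to the comparison, and the strict welfare loss comes entirely from the positive chance that $i$ is itself ostracized in finite time while a beneficial neighbor $j$ survives.
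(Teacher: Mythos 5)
Your proof is correct, but it takes a genuinely different route from the paper's. The paper argues link by link: it writes $W_{ij} = W^*_{ij} - E_{t^*}\left[\int_{t^*}^\infty e^{-\rho t} E[q_j - c \mid t\geq t^*]\,dt\right]$, where $t^*$ is the time the link $l_{ij}$ breaks, and then splits the loss term according to which endpoint is ostracized first: when $j$ is ostracized first the conditional expectation of $q_j - c$ at the breaking time is exactly zero (the posterior mean equals $c$ there), so $i$ loses nothing, while when $i$ is ostracized first that conditional expectation is strictly positive, so $i$ strictly loses. You instead start from the \emph{ex post} welfare formula of Theorem \ref{welfcalc}, reduce to the per-neighbor inequality $E[(1-e^{-\rho M_i})I\{S_j\}] < P(S_j)$, and obtain strictness from $P(S_j\cap\neg S_i) = P(S_j)(1-P(S_i))>0$ together with the fact that the hitting-time map preserves finiteness. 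Your route is shorter once the Theorem \ref{welfcalc} machinery is in place, cleanly isolates the source of the welfare loss as the event that $i$ is ostracized while a surviving neighbor remains, and makes the uniformity in $(\tau_1,\dots,\tau_N)$ transparent since the bound never references the precisions; the paper's route is self-contained (it needs neither the hitting-time mapping $M$ nor the formula for $E[q_j\mid t_j=\infty]$) and makes the underlying economic mechanism explicit. Both arguments implicitly rely on the convention of footnote 10 so that $M_i<\infty$ exactly on $\neg S_i$, and both are vacuous for an isolated agent, which you correctly flag.
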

\begin{proof}
See appendix.
\end{proof}
There are two main factors that are at work in this result. First, the myopia of the agents causes the learning to be done inefficiently. Second, cutting off a link imposes a negative externality on the agent who is ostracized, since that agent can no longer receive benefits from its neighbors. Taken together, these factors lead to a reduction in overall social welfare. More precisely, when a link $l_{ij}$ is severed due to agent $j$'s reputation hitting $c$, agent $i$ does not gain welfare compared to the case without learning. This is because the expected value of having a link with $i$ from $t^*_j$ on is 0 and thus having the link or not makes no difference\footnote{Agent myopia is causing the cut-off value to be too high, and so the agent does not benefit from its learning. This feature of reputational learning is similar to that shown in van der Schaar and Zhang (2014). In Section VIII we discuss a possible solution for this problem by providing agents a subsidy to increase experimentation.}. However, agent $j$ loses welfare compared to the case without learning because agent $i$'s reputation is still above the link cost and thus having the link would benefit $j$ over not having the link. 

This result supports the damaging impacts of ostracism found in the social psychology literature, which were mentioned above in the literature review. The social psychology literature usually documents the harmful effects of ostracism from the perspective of the agents that have become ostracized and can no longer benefit from interactions with the other agents. However, our result goes further by stating that the possibility of ostracism will actually lower \textit{every} agent's social welfare from an \textit{ex ante} perspective. By allowing for the ostracism of others, agents open themselves up to ostracism as well, which lowers their own welfare by more than they benefit from ostracizing other agents.  Theorem \ref{compnl} shows that every agent is hurt \textit{ex ante} by ostracism, even those that wouldn't themselves be ostracized in the majority of the \textit{ex post} realizations of the network.

\subsection{Impact of Individual Information}
The previous result showed that learning is harmful on aggregate: under learning both individual and overall network welfare are lower than without learning. However, we show in this subsection that learning need not be harmful at an individual level, as the rate that a single agent sends information changes. We now investigate more closely how the information generation rate of a single agent (i.e. an agent's signal precision) affects welfare. The faster an agent generates information about its own reputation, the faster the other agents will learn its true quality (if the link is not broken).

First we characterize the impact of an agent's signal precision on that agent's own welfare. The next proposition shows that sending more information about itself will always harm an agent.
\begin{proposition}
$W_i(\tau_i, \tau_{-i})$ is strictly decreasing in $\tau_i$.
\label{prop:welfsig}
\end{proposition}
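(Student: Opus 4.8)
The plan is to work from the \textit{ex post} welfare decomposition underlying Theorem \ref{welfcalc}, restricted to a single agent,
\[
W_i = E_{\hat\varepsilon}\!\left[\frac{1-e^{-\rho M_i(t)}}{\rho}\sum_{j:\,g^0_{ij}=1,\;t_j=\infty}\frac{\mu_j-c}{P(S_j)}\right],
\]
and to isolate the dependence on $\tau_i$ through a coupling that fixes the information content of every agent's signal path while varying only the clock speed. Concretely, for each agent $j$ let $U_j=\tau_j\hat t_j\in(0,\infty]$ denote the amount of accumulated precision at which $j$'s reputation first hits $c$. By Lemma \ref{pdf} the density of $\hat\varepsilon_j$ depends on $\tau_j$ only through the product $t\tau_j$, which is precisely the statement that the law of $U_j$ does not depend on $\tau_j$; moreover the $U_j$ are independent across agents. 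First I would place the two precisions $\tau_i$ and $\tau_i'>\tau_i$ on a common probability space carrying a single draw of $(U_j)_{j\in V}$, so that the only difference between the two worlds is the speed of agent $i$'s own clock. (I assume throughout that $i$ has at least one neighbor in $G^0$, else $W_i\equiv 0$ and the claim is vacuous.)

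The next step is to observe that the inner sum is invariant under this coupling. The event $\{t_j=\infty\}$ coincides with $\{U_j=\infty\}=S_j$, and by Proposition \ref{prop:dist} and Theorem \ref{sigprec} both this event and $P(S_j)$ are independent of the signal precisions. Hence the nonnegative factor $X:=\sum_{j:\,g^0_{ij}=1,\,U_j=\infty}(\mu_j-c)/P(S_j)$ is literally the same random variable in both worlds. Since $\mu_j>c$ gives $X\ge 0$, and the factor $(1-e^{-\rho t_i})/\rho$ is strictly increasing in agent $i$'s actual ostracism time $t_i=M_i(t)$, the proposition reduces to a single pathwise claim: under the coupling, raising $\tau_i$ weakly lowers $t_i$ for every realization of $(U_j)$, and strictly lowers it on a set of positive probability on which $X>0$. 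Taking the expectation of the product $\frac{1-e^{-\rho t_i}}{\rho}X$ then yields the strict inequality $W_i(\tau_i',\tau_{-i})<W_i(\tau_i,\tau_{-i})$.

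The heart of the argument, and the step I expect to be the main obstacle, is the pathwise monotonicity of $t_i$. Recall that $t_i$ is determined by $\int_0^{t_i}k_i(s)\,ds=U_i/\tau_i$, where $k_i(s)$ is agent $i$'s live-neighbor count; raising $\tau_i$ directly lowers the right-hand threshold, but it is not \emph{a priori} clear that it lowers $t_i$, because agent $i$'s earlier departure perturbs its neighbors' link counts and could in principle feed back into $k_i$ through the hitting-time mapping $M$. I would resolve this with a first-divergence argument. The only threshold that differs between the two worlds is agent $i$'s, and every other agent's evolution depends on $i$ solely through $i$'s alive/dead status. Let $T^*=\min\{t_i,t_i'\}$. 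On $[0,T^*)$ agent $i$ is alive in both worlds and all other agents carry identical thresholds, so the first time at which the two configurations could differ would have to be a death event of some agent---which is impossible before $T^*$, since no non-$i$ threshold differs and $i$ has not yet died in either world. Thus the two worlds evolve identically on $[0,T^*)$; in particular $k_i(\cdot)$, and hence the common progress curve $\int_0^{\cdot}k_i$, coincide there. Because the higher precision presents the strictly smaller threshold $U_i/\tau_i'<U_i/\tau_i$, that common progress reaches it no later, giving $t_i'\le t_i$ and collapsing the feared feedback: no perturbation can reach $i$ before its own ostracism.

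Finally I would establish strictness. On the positive-probability event $\{U_i<\infty\}$ on which agent $i$ additionally has at least one stable neighbor (these depend on disjoint coordinates of $(U_j)$, hence co-occur with probability $(1-P(S_i))\cdot P(\exists\text{ stable neighbor})>0$), we have $X>0$, and the stable neighbor keeps $k_i(s)\ge 1$ throughout $[0,t_i)$, so the progress curve is strictly increasing and the strictly smaller threshold is reached strictly earlier, i.e.\ $t_i'<t_i$. On this event the integrand $\frac{1-e^{-\rho t_i}}{\rho}X$ strictly decreases, while everywhere else it weakly decreases; the expectation therefore strictly decreases, proving that $W_i(\tau_i,\tau_{-i})$ is strictly decreasing in $\tau_i$.
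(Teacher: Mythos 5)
Your proposal is correct and follows essentially the same route as the paper: both fix the \emph{ex post} realization, note that the neighbor sum $\sum_{j:\,g^0_{ij}=1,\,t_j=\infty}(\mu_j-c)/P(S_j)$ is unaffected by $\tau_i$, and reduce everything to showing that agent $i$'s actual hitting time falls when $\tau_i$ rises, using the invariance of the accumulated information at which the reputation hits $c$. Your first-divergence argument is just the direct (contrapositive) form of the paper's proof by contradiction, with the added benefit that you explicitly supply the positive-probability strictness step that the paper leaves implicit.
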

\begin{proof}
Consider any \textit{ex post} realization $\varepsilon = \{\varepsilon_i^{t_i}\}_{i\in V}$. If $t_i = \infty$, then changing $\tau_i$ alone does not change the fact that agent $i$ would stay in the network forever, as so it does not affect the hitting time realization of any other agent either. Therefore agent $i$'s welfare $W_i(\varepsilon)$ is not affected. If $t_i < \infty$, then the welfare of agent $i$ depends on (1) the expected quality of all the neighboring agents $j$ whose $t_j = \infty$ and (2) its own hitting time $t_i$. Since (1) is not affected by changing $\tau_i$, we only need to study how $\tau_i$ affects $t_i$.

Intuitively $t_i$ is decreasing in $\tau_i$ since agent $i$'s information sending speed is faster due to a higher precision. We provide a more rigorous proof by contradiction as follows. Suppose agent $i$'s new hitting time increases to $t_i' = t_i + \Delta > t_i$. In this new realization, consider the duration from $0$ to $t_i$. Since $t'_i > t_i$, all other agents' information sending process and speed do not change before $t_i$. Hence, agent $i$'s instantaneous precision at $t \leq t_i$ changes to $(\tau^t_i)' = \frac{\tau'_i}{\tau_i}\tau^t_i$. Hence, information sending by agent $i$ is faster at any moment in time before $t_i$. Since, the stopping time $t'_i$ is larger than $t_i$, the total amount of information sent by agent $i$ given $\tau'_i$ is larger than that given $\tau_i$. Because the total information sent should remain the same, this causes a contradiction. Therefore $t_i'$ should be smaller than $t_i$ for a larger $\tau_i'$.
\end{proof}

This result is in accordance with Theorem \ref{compnl} and shows that an agent sending information about itself will strictly decrease its own welfare. This is because in each realization in which the agent is ostracized from the network, the agent will now be ostracized sooner and hence it will enjoy less benefits from others. Since the agent already starts out with the maximal amount of links it can obtain, it in effect has nothing to gain and everything to lose by allowing its own reputation to vary. We relax this assumption in the extensions section and allow agents to form new links with those they are not connected with initially; under those circumstances an agent will be able to benefit by generating more information about itself.

Though increasing the information sending speed is always harmful for an agent itself, it can actually be helpful to its direct neighbors. The next proposition provides a sufficient condition on the initial network such that this holds.

\begin{proposition}
Given an initial network $G^0$, for any two initially connected agents $i$ and $j$ that are linked through a unique path (i.e. the direct link), increasing one's precision increases the other's welfare.
\label{prop:linkwelf}
\end{proposition}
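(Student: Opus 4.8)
The plan is to prove the claim realization-by-realization (\emph{ex post}) and then integrate, mirroring the structure used for Proposition~\ref{prop:welfsig}. Fix an \emph{ex post} draw $\hat\varepsilon$, encoded by the precision-independent ``information thresholds'' $u_k^\ast$ (equivalently the base hitting times $\hat t_k$) of each agent. Writing out the benefit that agent $j$ collects from the welfare expression of Theorem~\ref{welfcalc} gives
\begin{equation}
W_j(\hat\varepsilon)=\frac{1-e^{-\rho M_j}}{\rho}\,C_j,\qquad C_j:=\sum_{k:\,g^0_{jk}=1,\ t_k=\infty}\frac{\mu_k-c}{P(S_k)} .
\end{equation}
The first observation is that $C_j$ does \emph{not} depend on $\tau_i$: the set of surviving neighbours of $j$ is determined by the events $S_k$, and both these events and the weights $P(S_k)$ are independent of every signal precision by Proposition~\ref{prop:dist} (for $k=i$ this uses that $S_i$ is invariant under the $t\tau_i$-rescaling noted there). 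Since $C_j\ge 0$ and $\tfrac{1-e^{-\rho M_j}}{\rho}$ is strictly increasing in the actual hitting time $M_j$, it suffices to show that $M_j$ is non-decreasing in $\tau_i$ in every realization.

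Second, I would exploit the hypothesis that the only path joining $i$ and $j$ is the direct link. This makes the edge $ij$ a bridge, splitting $G^0$ into a component $A\ni i$ and a component $B\ni j$ whose sole connection is $ij$. Consequently $\tau_i$ can influence $B$ (and hence $j$) only through the time at which this bridge is cut, namely $\min(t_i,t_j)$. Using the integral characterisation behind the mapping $M$ — agent $k$ exits when $\int_0^{t_k}(\text{current degree})\,ds$ first reaches its demand $D_k=u_k^\ast/\tau_k$, with $t_k=\infty$ if it is isolated before this happens — increasing $\tau_i$ lowers $D_i$, so by the information-accumulation argument of Proposition~\ref{prop:welfsig} agent $i$'s exit time $t_i$ weakly decreases.

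Third, I would close the (genuinely circular) coupling between $t_i$ and $t_j$ with a monotone fixed-point argument. Let $F_A(\theta)$ denote the exit time that component $A$ produces for agent $i$ when its bridge-neighbour is treated as leaving at time $\theta$, and let $F_B(\theta)$ denote the exit time that component $B$ produces for agent $j$ when its bridge-neighbour leaves at $\theta$. A neighbour that stays longer only ever speeds up information accumulation, so each of $F_A,F_B$ is non-increasing, while $F_A$ is non-decreasing in the demand $D_i$; moreover $F_B$ does not involve $\tau_i$ at all. The realized pair $(t_i,t_j)$ is the fixed point $t_i=F_A(t_j)$, $t_j=F_B(t_i)$, so $M_j$ solves $t_j=(F_B\circ F_A)(t_j)$. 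The composite $F_B\circ F_A$ is non-decreasing in $t_j$, and lowering $D_i$ shifts it upward pointwise (it lowers $F_A$, and $F_B$ is non-increasing); by monotone comparative statics its fixed point, and therefore $M_j$, weakly increases. Thus $W_j(\hat\varepsilon)$ is non-decreasing in $\tau_i$ for every $\hat\varepsilon$, with a strict increase on the positive-probability event that $i$ is ostracized strictly before $j$ while $C_j>0$; integrating over $\hat\varepsilon$ via $W_j=E_{\hat\varepsilon}[W_j(\hat\varepsilon)]$ yields the strict inequality.

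The main obstacle is making the single-channel reduction and the monotonicity of $F_A,F_B$ rigorous in the presence of two complications: the recursive coupling \emph{within} each component (agent $i$'s exit depends on its own neighbours in $A$, whose exits depend back on $i$), and the \emph{freezing} effect whereby an agent that loses its last link stops accumulating information and survives, so a hitting time can jump discontinuously to $\infty$. I would address both by working directly with the threshold relation $\sum_{l\sim k}\min(t_k,t_l)=D_k$ (with the convention that $t_k=\infty$ when the accumulated link-time never reaches $D_k$) and arguing monotonicity by a coupling/induction on the order in which agents exit, or equivalently by invoking the monotonicity of the mapping $M$ constructed by the appendix algorithm, which scales each agent's information speed by its current degree.
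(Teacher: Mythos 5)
Your proposal follows essentially the same route as the paper's proof: argue realization-by-realization that the set of surviving neighbours and the weights $\frac{\mu_k-c}{P(S_k)}$ are unaffected by $\tau_i$, that a larger $\tau_i$ only makes agent $i$ exit (and hence the link $ij$ break) weakly sooner, and that this weakly delays $t_j$ and therefore raises $j$'s welfare. The additional machinery you introduce --- the bridge decomposition of $G^0$ and the monotone fixed-point argument for the coupled pair $(t_i,t_j)$ --- is a more rigorous treatment of the simultaneity that the paper's one-paragraph proof simply asserts away, and it correctly identifies where the unique-path hypothesis is actually used (ruling out the indirect channels that drive the counterexample of Proposition~\ref{prop:cycle}).
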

\begin{proof}
Consider any \textit{ex post} realization $\varepsilon = \{\varepsilon_i^{t_i}\}_{i\in V}$. If $t_i = \infty$, then increasing agent $i$'s signal precision $\tau_i$ does not change the realization $\varepsilon_i^{t_i}$. Hence $t_j$ is not affected. If $t_i < \infty$, then according to Proposition \ref{prop:dist}, the new hitting time $t'_i$ is sooner if agent $i$'s signal precision is larger. This causes the link between agent $i$ and $j$ to be severed (weakly) sooner, leading to a (weakly) later hitting time of agent $j$ because agent $j$ will send information at a slower speed for a longer time. Since changing agent $i$'s signal precision does not change the finiteness of the hitting time of all other agents, agent $j$'s welfare increases due to a longer hitting time for itself.
\end{proof}

Since the information sending speed of agent $j$ slows after agent $i$ is ostracized, agent $j$'s hitting time is larger. Agent $j$ therefore prefers its direct neighbor to send more information, so that it can cut off more quickly in case the neighbor is bad. After the link is broken, agent $j$ will also be able to reveal less information about itself, which is beneficial according to Proposition \ref{prop:welfsig}. In this way agent $j$ would enjoy more benefits for a longer time from its links with its other neighbors. We can extend this analysis for more distant agents when the two agents are connected through a unique path. This is summarized in the corollary to Proposition \ref{prop:linkwelf} below.

\begin{corollary}
\label{corr:uniquepath}
Given any initial network $G^0$, for any two agents $i$ and $j$ that have a unique path between them, increasing one's signal precision decreases/increases the other's welfare if they are an odd/even number of hops away from each other.
\end{corollary}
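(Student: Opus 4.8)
The plan is to reduce the statement to a claim purely about hitting times and then prove the alternation by induction on the length $n$ of the unique path $i=v_0,v_1,\dots,v_n=j$. The first step is the reduction. Fix any base realization $\hat\varepsilon$ and recall, from the proofs of Proposition \ref{prop:welfsig} and Proposition \ref{prop:linkwelf}, two facts: raising $\tau_i$ rescales finite hitting times but never converts a finite hitting time into an infinite one or vice versa (for agent $i$ this is the scaling behavior of $M$ together with the $\tau$-independence of $P(S_i)$ from Proposition \ref{prop:dist}; for every other agent the base time is independent of $\tau_i$ and the mapping preserves the value $\infty$), so the survival set $\{m:t_m=\infty\}$ and the surviving neighbors' reputations are held fixed. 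Consequently, in the ex post welfare $W_j(\hat\varepsilon)=\frac{1-e^{-\rho t_j}}{\rho}\sum_{k:g^0_{jk}=1,\,t_k=\infty}\frac{\mu_k-c}{P(S_k)}$, the sum is $\tau_i$-independent and the entire dependence of $W_j$ on $\tau_i$ runs through $j$'s own hitting time $t_j$, via the strictly increasing factor $(1-e^{-\rho t_j})/\rho$. Hence it suffices to determine, in each realization, the sign of the perturbation of $t_j$ caused by raising $\tau_i$, and then integrate.

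The second step is the local propagation rule that drives the induction. From the scaling description of $M$, an agent's reputation reaches $c$ once its accumulated link-scaled information crosses a realization-fixed threshold, so endowing an agent with more links over any interval makes it hit $c$ \emph{earlier} and fewer links makes it hit $c$ \emph{later}. Therefore if a neighbor's departure moves earlier, the agent spends longer with one fewer link and its own hitting time moves \emph{later}; if the neighbor's departure moves later, its own hitting time moves \emph{earlier}. In other words, the sign of the hitting-time perturbation flips across each traversed edge — exactly the single-edge mechanism already isolated in Proposition \ref{prop:linkwelf}.

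Third, I would run the induction along the path. The base cases are $n=0$ (Proposition \ref{prop:welfsig}: raising $\tau_i$ moves $t_i$ earlier) and $n=1$ (Proposition \ref{prop:linkwelf}: $t_j$ moves later, so $W_j$ rises). For the inductive step, given a definite sign for the perturbation of $t_{v_{n-1}}$, applying the propagation rule across the edge $v_{n-1}v_n$ flips it to yield the sign for $t_{v_n}=t_j$; since each hop flips the sign and the chain is anchored by the direct-link base case, the direction of the effect on $W_j$ is governed by the parity of $n$, which is the corollary's assertion. Finally I would pass to the \emph{ex ante} welfare $W_j=E_{\hat\varepsilon}[W_j(\hat\varepsilon)]$: the per-realization perturbation has a fixed sign (weakly, with strict movement on the positive-probability set of realizations in which $j$ and the relevant intermediate agents are actually ostracized), so the monotone direction survives the expectation.

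The main obstacle is making the inductive step watertight, and this is precisely where the uniqueness of the path is essential. I must show that the perturbation reaches $v_n$ \emph{only} through $v_{n-1}$: if raising $\tau_i$ also shifted the departure time of some other neighbor $w\neq v_{n-1}$ of $v_n$, the two channels could push $t_j$ in opposite directions and destroy the clean sign. Uniqueness supplies this localization, for any perturbation-carrying path from $i$ to a neighbor $w$ of $j$, extended by the edge $wj$, would be an $i$–$j$ path and must therefore coincide with $v_0\cdots v_n$, forcing $w=v_{n-1}$. I would formalize this by verifying that the agents whose hitting times depend on $\tau_i$ are the path vertices together with branches hanging off the path, none of which is a second path-neighbor of any $v_k$, so that at each node the only perturbed neighbor is its predecessor. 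Carefully establishing this no-cancellation claim, and the weak-monotonicity-with-positive-probability-strictness bookkeeping needed to pass to the expectation, is the delicate part; the propagation algebra itself is routine given Propositions \ref{prop:welfsig} and \ref{prop:linkwelf}.
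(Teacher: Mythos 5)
The paper itself supplies no formal proof of Corollary \ref{corr:uniquepath}: it asserts that the result follows by iterating the mechanism of Proposition \ref{prop:linkwelf} along the unique path and illustrates the alternation with the four-agent line example. Your reduction to hitting times, the edge-by-edge sign flip, and the induction along the path are precisely that intended argument, and your base cases and direction of alternation (one hop raises $W_j$, two hops lowers it, and so on) agree with Proposition \ref{prop:linkwelf} and with the paper's own example; note only that the literal ``odd decreases / even increases'' phrasing of the corollary matches your conclusion only if ``hops'' counts intermediate vertices rather than edges.

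The one step that does not close as you state it is the localization claim. You argue that uniqueness of the $i$--$j$ path forces the perturbation to reach $v_n=j$ solely through $v_{n-1}$, because a perturbation-carrying path from $i$ to any other neighbor $w$ of $j$, extended by the edge $wj$, would constitute a second $i$--$j$ path. But a perturbation can reach $w$ \emph{through $j$ itself}: once $t_j$ shifts, the link $jw$ persists for a different length of time, so $w$'s information speed and hence $t_w$ shift, and this feeds back into $j$'s own information speed. The same applies at every intermediate vertex $v_k$, whose successor $v_{k+1}$ and branch neighbors are all perturbed via $v_k$. So your assertion that ``at each node the only perturbed neighbor is its predecessor'' is false, and the uniqueness-of-path argument does not rule these channels out. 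What saves the conclusion is that these induced perturbations are sign-reinforcing rather than cancelling: if $t_{v_k}$ moves later, $v_k$ retains its other links longer, those neighbors send more information and exit earlier, which removes links from $v_k$ earlier and pushes $t_{v_k}$ later still (and symmetrically for an earlier shift). Establishing this monotone fixed-point property of the hitting-time mapping $M$ is the real content of the step you defer, and it is needed already in the three-agent line; once it is in place, your induction and the passage to the expectation go through.
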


The above result shows an odd-even effect of the distance between two agents on the agent's welfare. In all minimally connected networks (such as star, tree, forest networks), any two agents have a unique path between each other and so the impact of any agent's information sending speed on any other agent's welfare can be completely characterized. 

\begin{figure}
\centering
\includegraphics[scale=.7]{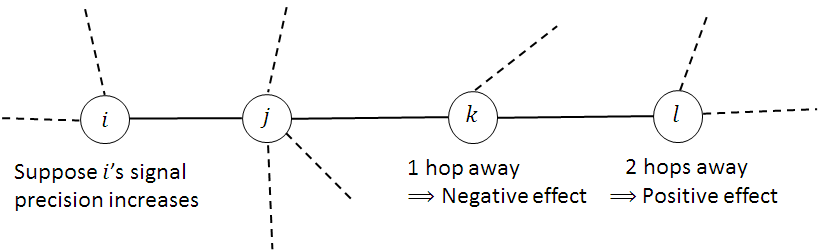}
\caption{Example for Corollary \ref{corr:uniquepath}}
\label{fig:line}
\end{figure}

As an example, consider a network where four agents ${i,j,k,l}$ are connected via a unique path, as depicted in Figure \ref{fig:line}. Agent $i$ is linked with agent $j$, agent $j$ is linked with agent $k$, and agent $k$ is linked with agent $l$. Then if agent $i$ sends more information about itself, it stays connected with agent $j$ for a shorter period of time. This causes agent $j$ to send less information about itself, causing agent $k$ to cut off its link with $j$ more slowly if $j$ were to be ostracized. Then agent $k$ is able to link with its other neighbors for a shorter length of time in expectation, decreasing the \textit{ex ante} welfare of $k$. Therefore agent $k$ is hurt when the neighbor of its neighbor, agent $i$, sends more information. However agent $l$ now links with its own neighbors for a longer length of time, and so it benefits when $i$ sends more information. However, when there are multiple paths between agents, which implies there are cycles in the network, the impact of the signal precision of an agent on the other agents' welfares is much less clear. The reason is that with cycles the neighbor of an agent $i$'s neighbor may also be linked with agent $i$ itself\footnote{This is known in the social network literature as triadic closure.}, and so the positive and negative effects of information from Corollary \ref{corr:uniquepath} are entangled together. The following proposition shows that even for an immediate neighbor, the impact could be totally opposite of Proposition \ref{prop:linkwelf} when cycles are present in the network.

\begin{proposition}
If the initial network $G^0$ has cycles, then it is possible that increasing some agent's signal precision decreases its immediate neighbor's welfare.
\label{prop:cycle}
\end{proposition}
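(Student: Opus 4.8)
The plan is to prove this existence statement by exhibiting a single cyclic network together with a prior for which $W_j$ strictly decreases in $\tau_i$ for an immediate neighbor $j$ of $i$, and the first task is to isolate the exact mechanism by which Proposition~\ref{prop:linkwelf} fails once cycles are allowed. Recall the \emph{ex post} decomposition together with the fact that the set $\{k:t_k=\infty\}$ does not depend on the precisions (Proposition~\ref{prop:dist} and the discussion around Theorem~\ref{sigprec}): for a fixed surviving neighbor the sum $\sum_{k:g^0_{jk}=1,\,t_k=\infty}\frac{\mu_k-c}{P(S_k)}$ is invariant to $\tau_i$, so $W_j(\hat\varepsilon)=\frac{1-e^{-\rho t_j}}{\rho}\sum_{k:g^0_{jk}=1,\,t_k=\infty}\frac{\mu_k-c}{P(S_k)}$ depends on $\tau_i$ only through $j$'s mapped hitting time $t_j=M_j$, and is strictly increasing in it. Hence it suffices to produce a configuration in which raising $\tau_i$ strictly \emph{lowers} $t_j$. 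In the tree case of Proposition~\ref{prop:linkwelf} this is impossible because $i$'s earlier exit only deletes the link $ij$, which slows $j$ and raises $t_j$; the new feature of a cycle is that $i$ may share common neighbors with $j$, and each common neighbor ostracized between $t_i$ and $t_j$ contributes an opposing, $t_j$-lowering indirect term, since its own ostracism is delayed by $i$'s earlier exit and $j$ therefore retains that link longer. The whole argument rests on making the indirect terms outweigh the single direct one.

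Concretely I take five agents $\{i,j,m_1,m_2,s\}$ with $G^0$ given by the edges $ij$, $im_1$, $im_2$, $jm_1$, $jm_2$, $js$; this graph contains the triangles $ijm_1$ and $ijm_2$ (so it has cycles) and gives $i,j$ the two common neighbors $m_1,m_2$. Consider any base realization $\hat\varepsilon$ in which $s$ survives ($\hat t_s=\infty$) while $i,m_1,m_2,j$ are all ostracized with induced actual order $t_i<t_{m_1},t_{m_2}<t_j$. Running the hitting-time mapping $M$ of the appendix phase by phase (initial degrees $3,4,2,2,1$ for $i,j,m_1,m_2,s$, with each agent's accumulated information integral equal to its base time at its own exit) gives the closed form $t_j=\hat t_j-\hat t_{m_1}-\hat t_{m_2}+\tfrac13\hat t_i$, where $\tfrac13$ is $i$'s reciprocal initial degree. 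Thus $\partial t_j/\partial \hat t_i=\tfrac13>0$. Since raising $\tau_i$ scales $\hat t_i$ down (the threshold $S_i^\ast$ is fixed and $\hat t_i=S_i^\ast/\tau_i$), it strictly decreases $t_j$ and hence $W_j(\hat\varepsilon)$. By Lemma~\ref{pdf} the base events are independent with full-support densities, so this order has strictly positive probability, and because $M$ is piecewise linear and continuous the order is stable under small perturbations; therefore $\partial_{\tau_i}W_j(\hat\varepsilon)<0$ on an open set of positive measure.

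The final step is to pass from this \emph{ex post} inequality to the \emph{ex ante} welfare $W_j=E_{\hat\varepsilon}[W_j(\hat\varepsilon)]$. On realizations where $j$ survives, or where $i$ survives, $\partial_{\tau_i}W_j(\hat\varepsilon)=0$; the danger is the remaining realizations in which fewer than two common neighbors are ostracized strictly between $t_i$ and $t_j$, where the direct effect can re-emerge with the opposite sign (the same computation gives coefficient $0$ for one common neighbor and a negative coefficient for none, which recovers Proposition~\ref{prop:linkwelf} and Corollary~\ref{corr:uniquepath}). I would neutralize these by choosing the priors so that the favorable order carries essentially all the mass of the event $\{j\text{ ostracized},\ s\text{ surviving}\}$: take $m_1,m_2$ identical with mean just above $c$ and moderate variance (ostracized, at comparable times), $i$ with mean just above $c$ (ostracized, and early), $\mu_s$ large with small variance (so $s$ survives and carries a large welfare weight), and $\mu_j$ large enough that whenever $j$ is ostracized it is almost surely after $m_1,m_2$.

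The main obstacle is precisely this last step: the pointwise sign of $\partial_{\tau_i}W_j(\hat\varepsilon)$ is not uniform across realizations, so the existence claim reduces to controlling the sign of an \emph{integral} rather than of a single quantity. The phase-by-phase computation of $t_j$ is routine once the event order is fixed, and the degenerate cases confirm that two common neighbors is the minimal cyclic feature required; the real work is the parameter tuning that forces the favorable order to dominate the prior mass, which I would carry out by driving the relevant survival and ostracism probabilities to their extremes and invoking dominated convergence to conclude $\partial_{\tau_i}W_j<0$.
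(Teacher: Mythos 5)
Your construction is the same one the paper uses, just in miniature: the paper takes agents $1,2,3$ in a line with $K-3$ additional agents linked to both $1$ and $2$, so that $1$ and $2$ share many common neighbors and agent $3$ plays the role of your high-quality survivor $s$; the mechanism (agent $i$'s earlier exit delays the common neighbors' ostracism, which keeps them linked to $j$ longer, accelerates $j$'s information, and lowers $t_j$, hence $W_j$) is identical. Your \emph{ex post} analysis is correct and in fact more explicit than the paper's: the phase-by-phase telescoping that yields $t_j=\hat t_j-\hat t_{m_1}-\hat t_{m_2}+\tfrac13\hat t_i$ checks out, as do the degenerate cases (coefficient $0$ with one common neighbor, negative with none), and the reduction of $W_j(\hat\varepsilon)$ to a monotone function of $t_j$ via the precision-invariance of $\{k:t_k=\infty\}$ is the right way to isolate the effect.

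The genuine gap is the step you yourself flag: passing from the pointwise inequality on the favorable ordering to $\partial_{\tau_i}W_j<0$ for the \emph{ex ante} welfare. With only two common neighbors, the favorable event $\{t_i<t_{m_1},t_{m_2}<t_j<\infty\}$ must be shown to dominate, in probability-weighted magnitude, the competing realizations where the direct effect reappears with the opposite sign (e.g.\ $j$ ostracized before both $m_i$'s, or both $m_i$'s surviving, where your own computation gives coefficients $-1/3$ or $-1/9$); your proposed tuning ($\mu_{m_1},\mu_{m_2},\mu_i$ near $c$, $\mu_j$ large) is plausible but not executed, and it requires controlling hitting-time distributions \emph{conditional} on the rare event that a high-mean $j$ is ostracized, which is not immediate. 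The paper sidesteps this by not doing a marginal analysis at all: it compares $\tau_1$ against $\tau_1'\to\infty$, which forces $t_1'\to 0$ deterministically in every realization where agent $1$ is ostracized, and it uses arbitrarily many common neighbors so that, by choosing $K$ large, the event that enough of them have hitting times in a window such as $[t_2/4,t_2/2]$ (which guarantees $t_2$ decreases) has probability as close to one as desired. If you want to complete your version, either import that device (take the precision increase extreme and/or add more common neighbors so the indirect terms swamp the direct one uniformly over orderings) or carry out the conditional-probability estimates you defer; as written, the existence claim is not yet established.
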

\begin{proof}
We prove by constructing a counterexample, which is shown in Figure \ref{fig:relay}. Consider a network with $K > 3$ agents. Agents 1, 2, 3 form a line and the other $K-3$ agents connect to both and only agents 1 and 2. We assume that agent 3's true quality is perfectly known (initial variance 0) and large. Hence, agent 3's reputation never hits $c$. We also assume that the mean qualities of agents 4 to $K$ are close to $c$. Hence, agent 2 almost does not gain benefit from those agents even when $K \to \infty$.

Consider a realization in which agent $1$'s reputation hits $c$ at $t_1 < \infty$ and agent $2$'s reputation hits $c$ at $t_2<\infty$. By increasing the signal precision of agent $1$, its hitting time decreases to $t'_1 < t_1$. If $t'_1 > t_2$, then agent 2's hitting time is not affected, i.e. $t'_2 = t_2$. Otherwise, the new hitting time may be different from $t_2$. To simplify the analysis, we consider the extreme case in which $\tau_i \to \infty$, thereby $t'_1 \to 0$. Therefore, agent 2 loses the link with agent 1 from the beginning in any realization. However, since agents 3 to $K$ also lose the link with agent 1 from the beginning, for those whose hitting time was earlier than $t_2$, their hitting time would increase by a factor of 2. If there are at least three agents among 4 to $K$ whose hitting was between $[t_2/4, t_2/2]$, agent 2's information sending speed will increase sufficiently much that agent 2's hitting time is smaller. By making $K$ large we can always making the probability of this event be large enough. Thus, agent 2's hitting time will decrease on average.
\end{proof}

\begin{figure}
\centering
\includegraphics[scale=.8]{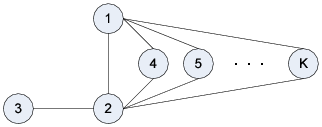}
\caption{Counterexample for Proposition \ref{prop:cycle}}
\label{fig:relay}
\end{figure}

We have seen that increasing the information sending speed of an individual agent $i$ could be both good or bad for other agents depending on their locations in the network and their relation with agent $i$. We note that it could similarly be good or bad for overall social welfare. So in contrast with Theorem \ref{compnl}, increasing the amount of information about a single agent can benefit the network overall. This would happen for instance, if there are three agents, $i$, $j$, and $k$ who are connected in a line, with links $ij$ and $jk$. Suppose that the mean of agent $k$'s quality is much higher than those of the other two agents. Then most of the welfare in this network comes through the link between agents $j$ and $k$. If agent $i$ sends more information, agent $j$ would be able to preserve its link with agent $k$ for a longer period of time, and overall social welfare would increase. This example highlights how critical the network structure is in determining the overall impact of more information by a single agent.

\section{Optimal Networks}
In this section, we study which underlying network constraints $\Omega$ maximize the overall \textit{ex ante} social welfare. Equivalently, we could think of a benevolent network planner that wishes to maximize social welfare by designing the network constraint $\Omega$ through designating which agents are able to form links with which other agents. For instance, in the financial network setting we could think of a regulator that specifies which types of financial institutions are allowed to transact with which other types of institutions in order to maximize overall social welfare\footnote{We note that many other types of objection functions are also possible instead of the overall \textit{ex ante} social welfare. For instance the designer may wish to maximize network welfare generated over a certain time interval, or before a set deadline is reached. Or the designer may weigh the welfare of some agents more heavily than that of others. Given the tractability of our model, many of our results can be extended for these alternative settings.}. 

\subsection{Fully connected networks}
One intuition is that a fully connected network, with no constraints on links, would be optimal since it results in the largest number of links initially, and we have assumed that all agents have an initial reputation higher than the linking cost $c$. This intuition is accurate in certain cases, such as if the designer is extremely impatient (i.e. $\rho \to \infty$). Since the designer cares only about the initial time period, and when time is short almost no new information can be learned, it is best to design the network based on the agents' starting reputations. Surprisingly though, the fully connected network is also optimal on the other extreme, when the designer is completely patient (i.e. $\rho \to 0$). In this case, the designer cares about the social welfare of the stable network that eventually develops, and allowing all agents to be connected initially leads to the largest probability of links in the final stable network. We prove these welfare results in the following proposition.

Further, note that the designer's level of patience is inversely related with the rate of learning, as faster learning means that information is revealed sooner and thus less patience is required. Therefore a similar result holds for the rate of learning: as the rate of learning becomes extremal the fully connected network becomes optimal as well.  So for instance, a financial regulator should optimally let all types of financial institutions transact with each other if it is very patient or very impatient, or the information production is extremely fast or slow.

\begin{proposition}
\label{limitlearn}
1. If the designer is either completely impatient (i.e. $\rho \to \infty$) or completely patient (i.e. $\rho \to 0$), the optimal $\Omega$ is the fully connected network. 
\newline
2. Fix the other parameters of the model and suppose the agents' signal precisions are all multiplied by the same constant $\lambda$. If learning becomes very fast (i.e. $\lambda \to \infty$) or very slow (i.e. $\lambda \to 0$), then the optimal $\Omega$ is the fully connected network.
\end{proposition}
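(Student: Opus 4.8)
The plan is to reduce all four limits to an asymptotic analysis of the welfare formula in Theorem~\ref{welfcalc}. In each case I will show that the welfare collapses to a sum $\sum_i\sum_{j:g^0_{ij}=1} w_{ij}$ with strictly positive weights $w_{ij}$; since the designer sets $G^0=\Omega$ and every admissible link contributes a positive term, such a sum is maximized by retaining every possible link, i.e.\ by the fully connected network. Because there are only finitely many networks on the $N$ agents, a strict ordering of these limiting coefficients upgrades optimality-in-the-limit to optimality for all sufficiently extreme parameter values.

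For Part~1 I would work with $\rho W=E_{\hat\varepsilon}\big[\sum_i(1-e^{-\rho M_i})\sum_{j:g^0_{ij}=1,\,t_j=\infty}\tfrac{\mu_j-c}{P(S_j)}\big]$, where $M_i$ is agent $i$'s network-affected hitting time; its integrand is bounded by $1$ uniformly in $\rho$, so dominated convergence lets me pass each limit inside the expectation. As $\rho\to\infty$, $1-e^{-\rho M_i}\to 1$ almost surely (since $M_i>0$ because all agents start above $c$), giving $\lim_{\rho\to\infty}\rho W=\sum_i\sum_{j:g^0_{ij}=1}\tfrac{\mu_j-c}{P(S_j)}P(t_j=\infty)=\sum_i\sum_{j:g^0_{ij}=1}(\mu_j-c)$, which equals $\rho W^\ast$ and is maximized by the fully connected network. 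As $\rho\to 0$, $1-e^{-\rho M_i}\to I\{M_i=\infty\}=I\{S_i\}$, so $\lim_{\rho\to0}\rho W=\sum_i\sum_{j:g^0_{ij}=1}\tfrac{\mu_j-c}{P(S_j)}P(S_i\cap S_j)$; invoking the independence of the events $\{S_i\}$ established in Proposition~\ref{prop:dist} and used in Theorem~\ref{sigprec}, this equals $\sum_i\sum_{j:g^0_{ij}=1}(\mu_j-c)P(S_i)$, again a positively weighted sum over links and hence maximized by the fully connected network.

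For Part~2 I would exploit the scale invariance noted in the discussion of Proposition~\ref{prop:dist}, namely that the base hitting-time law of Lemma~\ref{pdf} depends on $\tau_i$ only through the product $t\tau_i$; replacing every $\tau_i$ by $\lambda\tau_i$ is therefore equivalent to rescaling time by $1/\lambda$. Since the mapping $M$ merely accelerates each agent's information clock by its current number of neighbors, this time-change commutes with $M$, so the network-affected hitting times satisfy $M_i^{(\lambda)}=M_i/\lambda$ in distribution while the indicators $\{t_j=\infty\}$ are unchanged. The welfare then reads $W^{(\lambda)}=E_{\hat\varepsilon}\big[\sum_i\tfrac{1-e^{-\rho M_i/\lambda}}{\rho}\sum_{j:g^0_{ij}=1,\,t_j=\infty}\tfrac{\mu_j-c}{P(S_j)}\big]$, and $\lambda\to\infty$ reproduces the $\rho\to0$ computation (dominated by agents with $M_i=\infty$, with limit $\tfrac1\rho\sum_i\sum_{j:g^0_{ij}=1}(\mu_j-c)P(S_i)$), while $\lambda\to0$ reproduces the $\rho\to\infty$ computation (every finite $M_i$ contributes the full $1/\rho$, with limit $W^\ast$). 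Both limiting coefficients are positive sums over links, so the fully connected network is again optimal, consistent with the stated inverse relationship between patience and learning speed.

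The hard part will be making the scale invariance in Part~2 rigorous: the identity $M_i^{(\lambda)}=M_i/\lambda$ rests on the claim that the entire coupled system---the reputation diffusions, the endogenous updating of information speed as neighbors are ostracized, and the mapping $M$ itself---is invariant under simultaneously rescaling real time and precision, and this should be checked against the explicit construction of $M$ in the appendix rather than merely asserted. The remaining steps are routine: the dominated-convergence interchanges are justified by the uniform bounds ($1$ after multiplying by $\rho$ in Part~1, $1/\rho$ in Part~2), and the passage from a strict ordering of limiting coefficients to genuine optimality uses only that $\Omega$ ranges over a finite set and that the fully connected network strictly dominates every proper subnetwork term by term, each omitted directed pair $(a,b)$ costing a strictly positive $(\mu_b-c)$ or $(\mu_b-c)P(S_a)$.
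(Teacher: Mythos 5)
Your proposal is correct and follows essentially the same route as the paper: both start from the welfare formula of Theorem~\ref{welfcalc} and argue that in each of the four limits the (suitably normalized) welfare collapses to a link-separable sum with positive, network-independent weights, which the fully connected network maximizes. Your version is in fact more careful than the paper's own two-paragraph argument --- the normalization by $\rho$, the dominated-convergence interchange, the use of independence of the events $S_i$, and the time-rescaling identity $M_i^{(\lambda)}=M_i/\lambda$ for Part~2 are all left implicit there.
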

\begin{proof}
See appendix.
\end{proof}

When the designer is either completely  patient or impatient, the social welfare depends only on the network $G^0$ or $G^\infty$, respectively. The exact hitting time does not affect the social welfare. Similarly if the learning is very slow, then the network structure always remains at $G^0$, and if the learning is very fast then $G^\infty$ is realized very quickly, so in both cases a fully connected network is optimal. The idea is that in both extremes, the exact path of learning is no longer critical and so the negative externalities of information are mitigated.

For intermediate levels of patience or learning however, changes in individual agent hitting times due to linking could have a significant impact on the social welfare. We will show later that having all agents fully connected with each other is not always the optimal choice. In the next proposition though we show that the fully connected network is still optimal in the case where the agents are homogeneous and have very high initial qualities.
\begin{proposition}
Suppose all agents are \textit{ex ante} identical. Fixing the other parameters, there exists $\bar{\mu}$ such that if $\mu_i > \bar{\mu}$ $\forall i$, then the optimal $\Omega$ is the fully connected network.
\label{prop:compnet}
\end{proposition}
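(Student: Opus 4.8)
The plan is to show that when the common prior mean $\mu$ is large, every agent survives with probability close to one, so the welfare of each admissible constraint $\Omega$ is essentially its no-learning welfare $W^*(\Omega)$; since $W^*$ is maximized by the complete network, optimality follows by a finiteness argument. Write the common parameters as $\mu,\sigma,\tau$ and $P(S)\equiv P(S_i)$ for the identical survival probability of Proposition \ref{prop:dist}. In the homogeneous case the no-learning welfare reduces to $W^*(\Omega)=\frac{2|E(\Omega)|}{\rho}(\mu-c)$, where $2|E(\Omega)|=\sum_i k_i$ is the number of ordered linked pairs (the handshake count). This is strictly increasing in the number of links, so it is maximized uniquely by the complete network $\Omega^c$, for which $2|E(\Omega^c)|=N(N-1)$.

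The key step is to control $W(\Omega)$ as $\mu\to\infty$. The difficulty, which is the main obstacle, is that the margin $W^*(\Omega^c)-W^*(\Omega)$ and the welfare loss caused by learning both scale linearly in $(\mu-c)$, so it is not enough that the loss be small in absolute terms: I must show it is small \emph{relative} to $(\mu-c)$. I therefore normalize and prove that $W(\Omega)/(\mu-c)\to 2|E(\Omega)|/\rho$ for each fixed $\Omega$. By Theorem \ref{welfcalc}, $W(\Omega)=E_{\hat\varepsilon}[W(\hat\varepsilon)]$ with
\[
W(\hat\varepsilon)=\sum_i \frac{1-e^{-\rho M_i(t)}}{\rho}\sum_{j:g_{ij}^0=1,\,t_j=\infty}\frac{\mu-c}{P(S)} .
\]
Since $\frac{1-e^{-\rho M_i}}{\rho}\in[0,\tfrac1\rho]$ and each inner summand equals $\frac{\mu-c}{P(S)}$, every realization satisfies $0\le W(\hat\varepsilon)\le \frac{2|E(\Omega)|(\mu-c)}{\rho P(S)}$.

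I now decompose over the event $A$ that no agent is ever ostracized. Because the events $\hat\varepsilon_i^{t_i}$ are independent across agents with mass $P(S)$ at $t_i=\infty$ (Lemma \ref{pdf}), and since $t_i=\infty$ in $\hat\varepsilon$ forces $t_i=\infty$ after the mapping $M$, we have $A=\{\text{all }t_i=\infty\}$ and $P(A)=P(S)^{N}$. On $A$ every $M_i=\infty$, so $W(\hat\varepsilon)=\frac{2|E(\Omega)|(\mu-c)}{\rho P(S)}$ exactly, while on $A^{c}$ it stays in the same bounded band. Hence, writing $R$ for the contribution of $A^{c}$,
\[
\frac{W(\Omega)}{\mu-c}=P(S)^{N}\,\frac{2|E(\Omega)|}{\rho P(S)}+R,\qquad 0\le R\le\bigl(1-P(S)^{N}\bigr)\frac{2|E(\Omega)|}{\rho P(S)} .
\]
By Corollary \ref{corr:limitprob}, $P(S)\to1$ as $\mu\to\infty$; thus the first term tends to $P(S)^{N-1}\frac{2|E(\Omega)|}{\rho}\to 2|E(\Omega)|/\rho$ and $R\to0$, giving $W(\Omega)/(\mu-c)\to 2|E(\Omega)|/\rho$. (This is consistent with Theorem \ref{compnl}, though that monotonicity is not needed here.)

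Finally I conclude by finiteness. There are only $2^{\binom{N}{2}}$ admissible networks, and for every $\Omega\ne\Omega^c$ one has $2|E(\Omega)|\le N(N-1)-2$, so the limiting values $2|E(\Omega)|/\rho$ are separated from $2|E(\Omega^c)|/\rho$ by at least $2/\rho$. Choosing $\bar\mu$ so large that $\bigl|\,W(\Omega)/(\mu-c)-2|E(\Omega)|/\rho\,\bigr|<1/\rho$ holds simultaneously for all (finitely many) $\Omega$ whenever $\mu>\bar\mu$, I obtain $W(\Omega^c)/(\mu-c)>\bigl(N(N-1)-1\bigr)/\rho>W(\Omega)/(\mu-c)$ for every $\Omega\ne\Omega^c$. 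Multiplying through by $(\mu-c)>0$ shows that $\Omega^c$ strictly maximizes $W$, i.e. the fully connected network is optimal whenever $\mu_i>\bar\mu$ for all $i$.
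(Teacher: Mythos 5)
Your proof is correct, but it takes a genuinely different route from the paper's. The paper argues \emph{locally}: it takes an arbitrary non-complete $\Omega$, adds one link $l_{ij}$, and bounds the welfare change from below by terms of the form $\mu\bigl(P(S_i)\tfrac{N-1}{\rho}-\tfrac{N-2}{\rho}\bigr)$ plus a bound on the spillover to third parties, all of which become positive once $P(S_i)\to 1$; since every non-complete network admits an improving link, the complete network is optimal. You instead argue \emph{globally}: you normalize by $(\mu-c)$, use Theorem \ref{welfcalc} together with the uniform bound $0\le W(\hat\varepsilon)\le \tfrac{2|E(\Omega)|(\mu-c)}{\rho P(S)}$ and the exact value on the all-survive event $A$ (with $P(A)=P(S)^N$ by the independence in Lemma \ref{pdf}) to show $W(\Omega)/(\mu-c)\to 2|E(\Omega)|/\rho$ for each fixed $\Omega$, and then exploit the integer separation of edge counts plus finiteness of the set of constraints to pick a single $\bar\mu$. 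Your normalization step is a real contribution of the argument: you correctly identify that both the welfare gap between networks and the learning losses scale linearly in $\mu$, so an unnormalized "losses vanish" claim would not suffice — a point the paper's proof handles implicitly through its per-realization lower bounds. What the paper's local approach buys is a slightly sharper structural statement (every single added link is welfare-improving for large $\mu$, not merely that the complete network wins overall); what your approach buys is a cleaner quantitative limit $W(\Omega)\sim \tfrac{2|E(\Omega)|}{\rho}(\mu-c)$ that sidesteps the somewhat delicate bookkeeping of spillover effects on third-party hitting times, which is the least transparent part of the paper's argument. Both establish the proposition.
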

\begin{proof}
We will prove that for  $\bar{\mu}$ large enough, the social welfare of any non fully connected network will be increased through the addition of any new link. Therefore the welfare of the fully connected network will be greater than the welfare of any other network. Consider an arbitrary network constraint $\Omega$ that is not fully connected. Suppose that a link between agents $i$ and $j$ is added to the network, and consider the welfare of the new network constraint $\Omega'$. 

First consider the change in welfare of agent $i$. In any realization where agent $i$ is ostracized, its welfare through having the extra link with $j$ decreases by no more than $\frac{(N-2)\mu}{\rho}$, the welfare loss when it loses all its links with the other agents immediately. In any realization where agent $i$ is not ostracized, its welfare with the additional link increases by $\frac{\mu}{\rho}$, the discounted value of the new link given the expected quality of agent $j$. Thus the change in welfare for agent $i$ is bounded below by $P(S_i)\frac{\mu}{\rho}+(1-P(S_i))\frac{(N-2)\mu}{\rho}=\mu(P(S_i)\frac{(N-1)}{\rho}-\frac{(N-2)}{\rho})$. Similarly, we can show that the change in welfare for agent $j$ is bounded below by $\mu(P(S_j)\frac{(N-1)}{\rho}-\frac{(N-2)}{\rho})$. 

Now consider the change in welfare for all the other agents in the network. In any realization where both agent $i$ and agent $j$ are not ostracized, the hitting times of all the agents in the network are unaffected by the new link. In any realization where either agent $i$ or agent $j$ are ostracized, the change in welfare for all the other agents is bounded below by $\frac{(N-2)(N-1)\mu}{\rho}$. Thus the total change in welfare for all other agents in the network is bounded below by $[P(S_i)(1-P(S_j))+(P(S_j)(1-P(S_i))+(1-P(S_i))(1-P(S_j))]\frac{(N-2)(N-1)\mu}{\rho}$.

Combining the above two observations, we note that the change in welfare for the whole network is bounded below by $\mu[P(S_i)\frac{(N-1)}{\rho}-\frac{(N-2)}{\rho}+P(S_j)\frac{(N-1)}{\rho}-\frac{(N-2)}{\rho}+P(S_i)(1-P(S_j))+(P(S_j)(1-P(S_i))+(1-P(S_i))(1-P(S_j))\frac{(N-2)(N-1)}{\rho}]$. When $\bar{\mu}$ is large, $P(S_i)$ converges to $1$ by Proposition \ref{prop:dist}. Thus for $\bar{\mu}$ large enough, the lower bound for the change in welfare of agents $i$ and $j$ converges to $\frac{2(N-1)\mu}{\rho}$, a positive number. 

When $\bar{\mu}$ is large, $P(S_i)$ and $P(S_j)$ converge to $1$ by Proposition \ref{prop:dist}. Therefore the lower bound for the change in welfare converges to $\frac{2\mu}{\rho}$, a positive. 
\end{proof}

\subsection{Core-periphery networks}
As agents become more heterogeneous in terms of their initial expected quality, it can be optimal to constrain connections among agents. Suppose agents are divided into two separate types, and the initial mean quality of the high type agent is $\mu_H$ while the initial mean quality of the low type agent is $\mu_L< \mu_H$. We show that when the expected qualities of the two types are sufficiently different, the optimal network constraint has a core-periphery structure\footnote{Although this theorem assumes there are exactly two types, a similar result holds if instead the agents are composed of two groups and within each group have parameters that are sufficiently close together.}.

\begin{theorem}
\label{coreperiph}
Suppose that there are two groups of agents, one with initial reputation $\mu_L$ and one with initial reputation $\mu_H$. Fixing all other parameters, there exists $\bar{\mu}$ such that $\forall \mu_H > \bar{\mu}$, the optimal $\Omega$ is a core-periphery network where all high type agents are connected with all other agents and no two low type agents are connected. ($\bar{\mu}$ will depend on the other network parameters.)
\end{theorem}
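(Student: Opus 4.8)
The plan is to show that the core-periphery network $\Omega^*$ weakly dominates every other network constraint $\Omega$ in \textit{ex ante} welfare once $\mu_H$ is large, and strictly dominates unless $\Omega=\Omega^*$. Since there are only finitely many network constraints on $N$ agents, it suffices to exhibit, for each $\Omega\neq\Omega^*$, a single link modification that moves $\Omega$ toward $\Omega^*$ and raises welfare for all $\mu_H$ above some threshold; taking the maximum of these finitely many thresholds yields the uniform $\bar\mu$. Concretely, I would transform any $\Omega$ into $\Omega^*$ in two monotone phases: first add every missing link that involves at least one high type, then delete every link joining two low types. I would prove that each addition in the first phase and each deletion in the second phase strictly increases $W=E_{\hat\varepsilon}[W(\hat\varepsilon)]$ (Theorem \ref{welfcalc}) whenever $\mu_H>\bar\mu$.

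For the first phase I would follow the argument of Proposition \ref{prop:compnet}, adapted to heterogeneous types. Adding a link between a high type $h$ and any agent $x$ produces, by Corollary \ref{corr:limitprob}, a survival probability $P(S_h)\to 1$ as $\mu_H\to\infty$. In every realization the extra link delivers $x$ the conditional per-unit benefit $\frac{\mu_H-c}{P(S_h)}\to\mu_H-c$, an order-$\mu_H$ gain, while $h$ gains a nonnegative benefit from $x$ and is ostracized only with vanishing probability $1-P(S_h)$. The one delicate realization is the one in which $x$ is itself ostracized: the new link speeds up $x$'s hitting time and thereby shortens the time $x$ enjoys its \emph{other} high-type links. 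I would control this by comparing the discounted benefit $\frac{1-e^{-\rho M'}}{\rho}$ accrued over $x$'s enlarged set of high-type links against $\frac{1-e^{-\rho M}}{\rho}$ over its original set, where $M'\le M$ are $x$'s hitting times with and without the new link. Because adding one link rescales $x$'s instantaneous information rate by only a bounded factor, its lifetime shrinks sub-proportionally relative to the gain of an extra link, and the concavity of $1-e^{-\rho M}$ yields a net positive order-$\mu_H$ change.

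The second phase is the crux and the genuinely new part. Consider deleting a link $l_{ab}$ between two low types $a,b$ in a network where all high-involving links are already present, so each of $a,b$ is connected to every high type. The direct cost of the deletion is the lost benefit of the $ab$ link, which is of order $\mu_L$ and hence bounded. The offsetting gain is indirect: with one fewer link, $a$ (and symmetrically $b$) sends information more slowly, so by the hitting-time mapping $M$ its actual ostracism time weakly increases. In the positive-probability event that $a$ is ostracized in finite time while $b$ is still present, this increase is strict, and it multiplies $a$'s order-$\mu_H$ stream of benefits from its surviving high-type neighbours through the factor $\frac{1-e^{-\rho M_a}}{\rho}$. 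Thus the deletion trades a bounded order-$\mu_L$ loss for a strictly positive order-$\mu_H$ gain, which is favorable once $\mu_H>\bar\mu$. A supporting lemma I would establish first is the monotonicity of the mapping $M$: removing any link weakly increases every agent's hitting time, since every agent then sends information at a weakly slower rate at every instant.

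The main obstacle is the rigorous comparison of the hitting-time mapping $M$ across the two networks in the second phase. Because $M$ is defined only algorithmically (each agent's information speed is rescaled whenever a neighbour is ostracized), I must show both that deleting $l_{ab}$ cannot shorten any hitting time and that it strictly lengthens $M_a$ on an event of positive probability, while verifying that no \emph{other} agent suffers a compensating order-$\mu_H$ loss. The latter should hold because the only agents whose hitting times change are $a$, $b$ and, through them, their shared high-type neighbours, all of whom survive with probability tending to one (Corollary \ref{corr:limitprob}) and therefore contribute no order-$\mu_H$ welfare loss. Carefully bounding the coupled perturbations of $M_a$ and $M_b$ and confirming the positive-probability strict-increase event is where the real work lies.
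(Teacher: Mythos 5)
Your overall strategy is essentially the paper's: the paper also argues link-by-link that (i) all links involving a high type should be present, using the fact that $P(S_h)\to 1$ so the order-$\mu_H$ gain dominates, with the delicate case of an ostracized low type handled by the observation that its hitting time scales as $1/K$ in the number of high-type neighbours while its flow benefit scales as $K$, so that discounting (your concavity of $1-e^{-\rho M}$) tips the balance strictly in favour of the extra link; and (ii) every low--low link should be deleted because its direct value is bounded (order $\mu_L$) while its removal slows the low types' information rates and delays their ostracism from their order-$\mu_H$ high-type links. Your two-phase monotone transformation, the uniform $\bar\mu$ over finitely many constraints, and the explicit control of third-party effects via $P(S_h)\to 1$ are all more careful than the paper's sketch, but they are refinements of the same argument rather than a different route. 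One small imprecision: the lifetime of the ostracized agent shrinks \emph{exactly} proportionally (by $K/(K+1)$), not sub-proportionally; the strict gain comes entirely from discounting, as the paper states.

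There is, however, one genuinely false step: your proposed supporting lemma that removing any link weakly increases \emph{every} agent's hitting time. This contradicts the paper's own Corollary \ref{corr:uniquepath} and Proposition \ref{prop:cycle}: changes in one agent's information rate propagate through the network with alternating sign, because an agent that survives longer keeps its neighbours' information rates elevated for longer and can therefore \emph{shorten} those neighbours' hitting times. Deleting $l_{ab}$ delays $a$'s and $b$'s ostracism, which speeds up learning about their common neighbours and can strictly reduce those neighbours' hitting times. The lemma as stated would fail already in a three-agent line. Your plan survives without it, because you separately argue that the only third parties whose hitting times move are high types, who are ostracized with probability $1-P(S_h)\to 0$ (decaying like $\exp(-(\mu_H-c)^2/2\sigma^2)$ by the closed form in Corollary \ref{corr:limitprob}, hence faster than $\mu_H$ grows), so the induced welfare loss vanishes relative to the order-$\mu_H$ gain. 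You should drop the monotonicity lemma and rely only on that probabilistic bound, which is also what the paper implicitly does.
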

\begin{proof}
We first show that all high type agents should connect to all other high type agents. This is based on a similar argument as in the proof of Proposition \ref{prop:compnet}. Since when $\mu_H \to \infty$, all high type agents will stay in the stable network with very high probability, adding a link between any two high type agents will strictly improve their welfare while impacting the welfare of all other agents with very low probability. Hence, there must exist a large enough value for $\mu_H$ such that the welfare of high type agents is maximized when all high type agents connect to all other high type agents in the initial network.

Next we show that all low type agents should not connect to each other in any network where each is linked to at least $1$ high type agent. When $\mu_H \to \infty$, the welfare obtained by a link with any low type agent $j$ is dominated by that a link with high type agents, i.e. we can suppose that the welfare received by a link with another low type agent is approximately zero in comparison to a link with the high type agents. Having additional links with other low-type agents reduces the hitting time of agent $j$, $M_j(t)$, in the event that it gets ostracized, thereby reducing agent $j$'s welfare by more than the welfare gain of the additional link. Therefore, low type agents do not connect to each other in the optimal initial network. 

Finally we show that all low type agents should connect with every high type agent. Since the probability that the high type agent is ostracized approaches zero, such a link does not affect them relative to the extra welfare that the low type agents receive. Therefore we consider only the effect on the welfare of the low type agent to be connected with all high type agents. In a realization where the low type agent is not ostracized, this is optimal for all agents, as the high type agent stays in the network with very high probability when $\mu_H$ is large enough. Thus both agents have their welfare increased while not affecting the welfare of all other agents. We show that it is also optimal in realizations where the low type agent is ostracized. Again we will assume that the high type agent is not ostracized, which will hold for $\mu_H$ high enough. The low type agent receives a flow payoff of $\mu_H$ from every high type agent that it has an active link with. Note that in the hitting time mapping function the hitting time of an ostracized agent $i$ is scaled by $1/K$, where $K$ is the total number of high type neighbors. Thus the decrease in hitting time is exactly balanced out by the increase in flow payoff in the case without discounting, and with discounting it is strictly better for the low type agent to have an extra link.

\end{proof}

The above result shows that under the optimal network constraint, high reputation agents should be placed in the core and connected with all other agents, while low reputation agents should be placed in the periphery and not connected with other low reputation agents. Therefore agents with lower initial reputations should be placed in less central positions within the network in order to mitigate the negative effects of ostracism. Allowing low reputation agents to connect with too many other agents would increase the rate at which they send information, causing them to be ostracized sooner and hurting them more than they would gain through the direct benefits of the extra links. This core-periphery structure is commonly seen in many real-world financial networks, with large well capitalized banks in the core and smaller banks in the periphery. A reason for this could be that the greater reputation of large banks lets them withstand negative shocks more easily without being ostracized by their counterparties. Smaller banks produce less information through their lesser number of transactions, allowing them to avoid being ostracized as quickly.\footnote{We note that financial regulators have started imposing core-periphery structures on various financial networks to encourage stability. Many banks are now required to trade through a central clearing counterparty (CCP), which is a large financial institution that is ideally very stable. The idea is that trading with the CCP will mitigate the uncertainties that individual banks have about each other's qualities and thus help prevent liquidity runs during financial crisis.}

We note that the above result depends heavily on the type of learning environment that is present. From Proposition \ref{prop:compnet}, we know that if the designer was either very patient or impatient, or if learning was very slow or very fast (relative to the parameters of the agents), then the optimal initial network would be the fully connected network. Fixing the agent reputations, a core-periphery constraint structure is only optimal at intermediate levels of learning.

\subsection{Star Networks}
Star networks are common networks in the real world, where a single central agent is connected with many peripheral agents. Examples include a single boss and many subordinates, the head of a political party that coordinates the disparate branches of the party, or a large trader that deals with many small traders. There are several important forces to consider when placing agents within a star network. Such networks depend greatly on the central agent, because that agent is connected with all other agents and it therefore has the most links. The central agent is therefore the most important agent to consider, and choosing the best agent to be in the center is crucial to the overall welfare of the network. 

The initial mean and the signal precision of the central agent are two exogenous parameters that must be carefully considered when choosing the central agent. A high initial mean is beneficial because it increases the expected flow benefits that all the other agents who are connected to the central agent will receive. However, a higher signal precision is harmful because it allows for a greater probability that the central agent becomes ostracized quickly, thus causing the network to fall apart. Such an event would greatly lower social welfare. Therefore there is a trade off between the initial mean and the signal precision of the central agent: it is desirable to have a central agent with a higher mean but a lower signal precision. In particular, choosing the agent based only on its initial mean expected quality is not optimal, whereas under complete information it would be optimal to always place the highest realized quality agent in the center.

We show these results formally in the next proposition. For concreteness, suppose that the central agent in the network is denoted by agent $1$. The exogenous parameters of the agents are defined the same way as previously.

\begin{proposition}
\label{starprop}
The overall social welfare is strictly increasing in $\mu_1$ and strictly decreasing in $\tau_1$ and $\sigma^2_1$.
\end{proposition}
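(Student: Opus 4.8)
The plan is to decompose the overall welfare of the star into the welfare accruing to the hub (agent $1$) and the welfare accruing to the spokes, $W = W_1 + \sum_{j\geq 2}W_j$, and to show that each of the three comparative statics is governed by exactly one of these two pieces. I would first dispose of the spoke welfare, which turns out to have a clean closed form. Using Theorem \ref{welfcalc} and the fact that in a star each spoke $j$ has agent $1$ as its only neighbor, spoke $j$ earns positive surplus only on the event $\{t_1=\infty\}$; on that event agent $1$ is never ostracized, so every spoke retains exactly one link at all times and is therefore never accelerated, giving $M_j=\hat t_j$. Combining Lemma \ref{pdf} with the independence of the unmapped hitting times across agents, I would obtain $\sum_{j\geq2}W_j = (\mu_1-c)\sum_{j\geq2}E[(1-e^{-\rho\hat t_j})/\rho]$, where the factor $P(S_1)$ arising from $E[q_1\mid\varepsilon_1^{t_1=\infty}]$ cancels against $P(\hat t_1=\infty)=P(S_1)$ from Proposition \ref{prop:dist}. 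The surviving sum depends only on the spokes' own parameters, so this term is independent of $\tau_1$ and $\sigma_1^2$ and strictly (indeed linearly) increasing in $\mu_1$, its coefficient being strictly positive since $\hat t_j>0$.

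Next I would treat the hub welfare $W_1 = E_{\hat\varepsilon}[\,A\cdot(1-e^{-\rho M_1})/\rho\,]$, where $A=\sum_{j:\hat t_j=\infty}(\mu_j-c)/P(S_j)\geq 0$ depends only on the spokes. Conditioning on the spoke realization freezes $A$ and freezes the deterministic nonincreasing step function $k_1(\cdot)$ counting surviving spokes; the mapping is then characterized by $\int_0^{M_1}k_1(s)\,ds=\hat t_1$, which makes $M_1$, and hence $(1-e^{-\rho M_1})/\rho$, manifestly nondecreasing in $\hat t_1$. Because $\hat t_1$ is independent of the spoke realization, it suffices to show that the law of $\hat t_1$ shifts in first-order stochastic dominance as $\mu_1$ and $\sigma_1^2$ vary. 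For $\tau_1$ I would avoid this route entirely and simply invoke Proposition \ref{prop:welfsig}, which already gives that $W_1$ is strictly decreasing in $\tau_1$; together with the $\tau_1$-independence of the spoke term this settles the $\tau_1$ claim at once.

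For the stochastic-dominance step I would use the Kalman representation of the reputation martingale: at fixed base precision the posterior mean is a time-changed Brownian motion $m_t=\mu_1+\tilde W_{v(t)}$ with deterministic clock $v(t)=\sigma_1^2-\Sigma_t$ rising from $0$ to $\sigma_1^2$, so $\hat t_1$ is the first time the fixed path $\tilde W$ reaches the barrier $-(\mu_1-c)$, reconverted to real time via $t=\hat s/(\tau_1\sigma_1^2(\sigma_1^2-\hat s))$. Holding the Brownian path fixed (a pathwise coupling), raising $\mu_1$ only moves the barrier farther from the start, so the hitting clock-time $\hat s$ increases and, since $t$ is increasing in $\hat s$, so does $\hat t_1$; raising $\sigma_1^2$ leaves the barrier fixed but enlarges $\sigma_1^2(\sigma_1^2-\hat s)$ and the horizon, so $\hat t_1$ decreases (and some previously infinite times become finite). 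These couplings deliver the FOSD directions, and monotonicity of $(1-e^{-\rho M_1})/\rho$ together with $A\geq 0$ transfers them to $W_1$: increasing in $\mu_1$, decreasing in $\sigma_1^2$. Summing the two pieces yields all three claims. Strictness in $\mu_1$ is free from the spoke term; strictness in $\tau_1$ is Proposition \ref{prop:welfsig}; strictness in $\sigma_1^2$ follows because the FOSD shift is strict on a positive-probability event on which $A>0$ and $k_1(\cdot)$ has positive finite integral, so the integrand is strictly increasing in $\hat t_1$ there.

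The hard part will be the $\sigma_1^2$ comparison: unlike $\mu_1$, which only relocates the barrier, changing $\sigma_1^2$ simultaneously moves the variance clock, the real-time change, and the survival probability, so the pathwise coupling must be set up carefully to disentangle these effects and, crucially, to secure strict rather than merely weak monotonicity. The other point needing care is the coupling of $M_1$ to the random spoke-departure schedule, but the identity $\int_0^{M_1}k_1(s)\,ds=\hat t_1$ renders the required monotonicity in $\hat t_1$ transparent once that schedule is frozen, so I expect this to be routine relative to the $\sigma_1^2$ argument.
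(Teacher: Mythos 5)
Your proof is correct and rests on the same skeleton as the paper's --- decompose star welfare into the hub's welfare and the spokes' welfare, observe that the spoke term is linear in $\mu_1$ with positive coefficient and free of $\sigma_1^2$ and $\tau_1$ after the $P(S_1)$ cancellation, and dispatch the $\tau_1$ claim for the hub via Proposition \ref{prop:welfsig} --- but your technical core is genuinely different and materially more complete. The paper's proof handles the hub's dependence on $\mu_1$ with a one-line assertion ("allows the central agent to stay in the network for a longer period of time") and, notably, never argues the $\sigma_1^2$ monotonicity for the hub at all: it only states that the periphery term does not depend on $\sigma_1$, so the claim that overall welfare is strictly decreasing in $\sigma_1^2$ is left unproved there. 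Your time-changed Brownian representation $m_t=\mu_1+\tilde W_{v(t)}$ with $v(t)=\sigma_1^2-\Sigma_t$ and real-time reconversion $t=\hat s/(\tau_1\sigma_1^2(\sigma_1^2-\hat s))$ supplies exactly the missing piece: the pathwise coupling shows that raising $\mu_1$ pushes the barrier away (so $\hat t_1$ rises), while raising $\sigma_1^2$ both extends the clock horizon (turning some infinite hitting times finite, consistent with Corollary \ref{corr:limitprob}) and shrinks the real time associated with any fixed clock time, so $\hat t_1$ falls pathwise; combined with the monotonicity of $M_1$ in $\hat t_1$ via $\int_0^{M_1}k_1(s)\,ds=\hat t_1$ and the independence of $\hat t_1$ from the spoke realization, this yields the FOSD shifts and the strictness on the positive-probability event $\{A>0\}\cap\{\hat t_1<\infty\}$. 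In short, your argument buys a rigorous proof of the $\sigma_1^2$ comparative static that the paper only asserts, at the cost of introducing the Kalman/time-change machinery; the paper's version buys brevity by leaning on analogies to Theorem \ref{compnl} and Proposition \ref{prop:welfsig} that do not actually cover the variance claim.
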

\begin{proof}
We can break social welfare into two components: the welfare of the central agent, and the welfares of each periphery agent. Notice that the welfares of the periphery agents are strictly increasing in $\mu_1$ but do not depend on $\sigma_1$ or $\tau_1$ for similar reasons as in the proof of Theorem \ref{compnl}. Also, the welfare of the central agent is strictly increasing in $\mu_1$ as that allows the central agent to stay in the network for a longer period of time. Thus overall social welfare is increasing in $\mu_1$. The welfare of the central agent is strictly decreasing in $\tau_1$ for the same reasons as in Proposition \ref{prop:welfsig}. Thus overall social welfare is decreasing in this parameter.
\end{proof}

Figure \ref{fig:starfig} shows the trade-off between the mean and the signal precision of the central agent explictly via a simulation. It plots the contour lines of the overall \textit{ex ante} welfare of the network, and it shows that social welfare increases as the initial mean increases and the signal precision decreases, and therefore selecting the best central agent depends on both factors.

We note that for the periphery agents on the other hand, the exogenous parameters have a much less clear relationship with the overall social welfare. We can actually show through examples that social welfare can increase or decrease in each of these factors for periphery agents. The same relationships as for the central agent can hold, and a simple example would be a two person network. However a marginally higher mean or a lower signal precision by a single periphery agent can actually \textit{decrease} overall welfare.  For instance, consider a network where the central agent has an initial expected quality close to $c$, one periphery agent denoted by agent $i$ also has an initial expected quality close to $c$, and the qualities of all other periphery agents is very high. In such a case, increasing the expected quality of agent $i$ by a small amount, or decreasing agent $i$'s signal precision would harm overall social welfare. These changes would result in the central agent being connected to agent $i$ for a longer stretch of time, which is undesirable since the other periphery agents are of much higher expected quality, and so causing the central agent to send more information is harmful. Therefore in such a network it would be better for agent $i$ to send information more quickly in order for it to exit the network sooner.

\begin{figure}
\centering
\includegraphics[width=.8\linewidth]{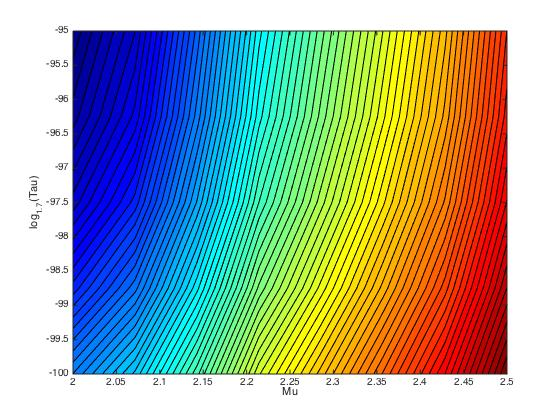}
\caption{Simulation Illustrating Proposition \ref{starprop}: The simulation uses a network of 6 agents. The 5 periphery agents have $\mu_i=2$, $\sigma_i^2=2$, and $\tau_i=1$. The central agent has $\sigma_1^2=2$, while its initial mean ranges from 2 to 2.5 and its signal precision ranges from 1.7 raised to the power of -100 to -95. 4000 realizations of agent hitting times were taken at each different agent mean, for a total of 32000 different hitting times. When drawing realizations across different means, the quantiles of the agent qualities were fixed to ensure faster convergence.}
\label{fig:starfig}
\end{figure}

We note that the trade-off identified above matters only if learning is fast enough, whereas if learning becomes very slow (or the designer becomes very impatient), then this trade-off goes away. This is summarized in the following proposition.

\begin{proposition}
\label{starlearnlimit}
If the rate of learning becomes very slow (i.e. $\lambda \to 0$), then the optimal star network is obtained by placing the agent with the highest initial reputation in the center.
\end{proposition}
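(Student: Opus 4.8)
The plan is to reduce the claim to the no-learning benchmark $W^*$, using the fact that slow learning makes every link effectively permanent from the discounting perspective. The first step is to show that for each fixed star configuration, $W(\lambda\tau_1,\ldots,\lambda\tau_N)\to W^*$ as $\lambda\to 0$. By Lemma \ref{pdf} the network-free hitting-time density depends on the precision only through the product $t\tau_i$, so scaling every precision by $\lambda$ replaces each network-free hitting time by the same random variable divided by $\lambda$. The hitting-time map $M$ of Theorem \ref{welfcalc} only rescales times by factors bounded below (the number of active neighbors never exceeds $N-1$), so the mapped time satisfies $M_i\ge t_i/(N-1)$, where $t_i$ is the network-free hitting time at the scaled precision $\lambda\tau_i$. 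Since $t_i$ itself grows like $1/\lambda$, we get $M_i\to\infty$ almost surely on $\{t_i<\infty\}$, while $M_i=\infty$ on $\{t_i=\infty\}$ because $M$ preserves infinite hitting times. Hence $\tfrac{1-e^{-\rho M_i}}{\rho}\to\tfrac1\rho$ pointwise, and because the integrand in Theorem \ref{welfcalc} is bounded by $\tfrac1\rho\sum_j\tfrac{\mu_j-c}{P(S_j)}$, dominated convergence lets me pass the limit inside the expectation.

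Next I would evaluate the limit explicitly. The event $\{t_j=\infty\}$ coincides with $S_j$ and has probability exactly $P(S_j)$ (scaling preserves finiteness of a hitting time, and $f(\hat\varepsilon_j^{t_j=\infty})=P(S_j)$ by Lemma \ref{pdf}). Since the limiting factor $\tfrac1\rho$ is deterministic, each initial link $g^0_{ij}=1$ contributes $\tfrac1\rho\,\mathbb{E}[\mathbb{1}\{t_j=\infty\}]\tfrac{\mu_j-c}{P(S_j)}=\tfrac1\rho(\mu_j-c)$, so
\begin{equation}
\lim_{\lambda\to 0} W=\frac{1}{\rho}\sum_i\sum_{j:g^0_{ij}=1}(\mu_j-c)=W^*.
\end{equation}
This is precisely the no-learning welfare, confirming the intuition stated after Proposition \ref{limitlearn} that slow learning freezes the network at $G^0$.

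With the limit identified, the remaining step is a direct computation on the star. Writing the center as agent $s$ and using that the $N-1$ periphery agents are each linked only to $s$, the center receives $\sum_{p\ne s}(\mu_p-c)$ and each periphery agent receives $\mu_s-c$; with $S=\sum_i\mu_i$ this gives $\rho W^*=S+(N-2)\mu_s-2(N-1)c$. For $N>2$ this is strictly increasing in $\mu_s$, so among all star configurations $W^*$ is uniquely maximized by placing the highest-mean (highest-reputation) agent at the center, while for $N=2$ the statement is vacuous. Because there are only finitely many choices of center and $W(\lambda\tau)\to W^*$ for each, the strict ranking of the limiting values forces the same ranking of $W$ for all sufficiently small $\lambda$, yielding the proposition.

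I expect the main obstacle to be the first step: controlling the hitting-time map $M$ under the $\lambda$-scaling well enough to guarantee $M_i\to\infty$ and to justify dominated convergence. The monotonicity of $W^*$ in $\mu_s$ and the finiteness of the configuration set make the concluding argument routine once the limit $W\to W^*$ is established.
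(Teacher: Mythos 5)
Your proposal is correct and follows the same route as the paper, whose proof is a one-sentence appeal to the fact that in the slow-learning limit only the initial (no-learning) welfare matters, after which the star computation is immediate. You simply supply the rigor the paper omits: the coupling $t_i\mapsto t_i/\lambda$ from Lemma \ref{pdf}, the bound $M_i\ge t_i/(N-1)$, dominated convergence to get $W\to W^*$, and the explicit formula $\rho W^*=S+(N-2)\mu_s-2(N-1)c$, all of which check out.
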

\begin{proof}
In the limit of very slow learning, only the initial welfare generated matters, and placing the agent with the highest expected quality in the center generates the highest welfare over all initial network structures.
\end{proof}

Proposition \ref{starlearnlimit} shows that the decision to place an agent at the center depends only on each agent's initial mean in the limit of very slow learning learning (a similar result holds for very high designer impatience). When the network is constrained to be a star network, the highest initial welfare is obtained by having the highest initial expected quality agent in the center if the learning is very slow. 

\subsection{Ring networks}
In this section we focus on a special type of network: a ring network. Suppose for convenience that agents are homogeneous in terms of initial expected quality and variance. Assume that under the network constraint $\Omega$ each agent is limited to at most two neighbors. Hence, for a given number of agents, they would only be able to form one or multiple ring networks of different sizes. This could represent a work environment in which agents work in pairs on projects and can work on up to two projects at a time, or a financial network in which financial institutions seek two partners to trade with.

We study how the size of different rings affects the welfare an agent obtains and hence, we can determine the optimal size of the rings that agents should form together. Let $W(n)$ denote the welfare an agent can obtain if it is in a ring of size $n$ under the network constraint $\Omega$\footnote{For convenience we assume that the number of agents $N$ is divisible by $n$}. We show that networks with rings of three agents (so there is triadic closure among the agents) will maximize both agent welfare and overall social welfare.\footnote{The social networks literature views triadic closure as the result of common preferences or trust, whereas our model derives a reputational reason for such networks.}

\begin{proposition}
The optimal size of a ring network is 3 agents.
\end{proposition}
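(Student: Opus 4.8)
The plan is to use the homogeneity of the agents to collapse the welfare of a ring to a one–dimensional quantity, and then reduce the comparison across ring sizes to a stochastic comparison of a single neighbor's ostracism time. First I would invoke the \emph{ex post} welfare expression $W_i(\varepsilon)=\frac{1-e^{-\rho t_i}}{\rho}\sum_{j:g^{0}_{ij}=1,\,t_j=\infty}\frac{\mu_j-c}{P(S_j)}$ derived just before Theorem \ref{welfcalc}. With identical agents we have $\mu_j=\mu$ and, by Proposition \ref{prop:dist} and Theorem \ref{sigprec}, $P(S_j)=P(S)$ independent of the network. Summing over the ring, taking expectations, and conditioning on each surviving neighbor $j$ (using only $P(S_j)=P(S)$, not independence) gives, for a ring of size $n\ge 3$ where every agent has degree two and the ring is vertex-transitive, the per-agent welfare
\[
W(n)=\frac{2(\mu-c)}{\rho}\,E\!\left[\,1-e^{-\rho t_i}\,\middle|\,S_{j_1}\right],
\]
where $j_1$ is an (arbitrary, by symmetry) neighbor of $i$ and $t_i$ is $i$'s network-scaled ostracism time; for $n=2$ the degree is one and the prefactor is $(\mu-c)/\rho$.

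Next I would extract the hitting-time relation. Conditional on $S_{j_1}$ the link $i$–$j_1$ never breaks, so in the hitting-time mapping $M$ agent $i$ carries two links until its \emph{other} neighbor $j_2$ departs at scaled time $t_{j_2}$ and one link thereafter. Equating accumulated information, $\int_0^{t_i}k_i(s)\,ds=\hat t_i$ with $k_i\in\{1,2\}$ and $\hat t_i$ the base hitting time at fixed precision $\tau_i$, yields the explicit formula
\[
t_i=\hat t_i-\min\!\bigl(t_{j_2},\tfrac12\hat t_i\bigr)\ \ge\ \tfrac12\hat t_i .
\]
Thus $t_i$ is \emph{decreasing} in $t_{j_2}$: the sooner the second neighbor is ostracized, the longer $i$ survives (the negative-externality/odd--even mechanism of Proposition \ref{prop:welfsig} and Corollary \ref{corr:uniquepath}). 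Since $1-e^{-\rho t}$ is increasing, maximizing $W(n)$ over $n\ge 3$ amounts to making $t_{j_2}$ stochastically as small as possible.

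The comparison $W(3)>W(2)$ is then immediate and requires no coupling. For $n=2$, conditioning on $S_{j_1}$ leaves $i$ with a single permanent link, so $t_i=\hat t_i$ and $W(2)=\frac{\mu-c}{\rho}E[1-e^{-\rho\hat t_i}]$. Using $t_i^{(3)}\ge\tfrac12\hat t_i$ together with the elementary bound $1-e^{-x}\ge\tfrac12(1-e^{-2x})$ (from $1-e^{-2x}=(1-e^{-x})(1+e^{-x})\le 2(1-e^{-x})$) applied at $x=\tfrac12\rho\hat t_i$ gives $E[1-e^{-\rho t_i^{(3)}}\mid S_{j_1}]\ge\tfrac12 E[1-e^{-\rho\hat t_i}]$, and the degree-two factor of $2$ makes the inequality strict, so $W(3)>W(2)$. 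For $n\ge 4$ both rings have degree two, so by the reduction it suffices to show that $t_{j_2}$ is stochastically smaller in the triangle than in any larger ring. The mechanism is that $j_2$'s ostracism time decreases the longer $j_2$ keeps two active links, and its link count is $\mathbf{1}[i\text{ present}]+\mathbf{1}[\text{second neighbor present}]$; in the triangle $j_2$'s second neighbor is precisely the guaranteed survivor $j_1$, whereas in a ring of size $n\ge 4$ it is some other agent $j_3$ whose survival is uncertain. I would couple the two systems by sharing the base times $\hat t_i,\hat t_{j_2}$ and show $j_2$'s link count dominates pointwise in the triangle, so $t_{j_2}^{(3)}\preceq t_{j_2}^{(n)}$; feeding this into $t_i=\hat t_i-\min(t_{j_2},\tfrac12\hat t_i)$ gives $t_i^{(3)}\succeq t_i^{(n)}$ and hence $W(3)\ge W(n)$, with strictness because $j_3$ fails to survive with positive probability $1-P(S)$.

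The hard part is this monotone coupling, because of the \emph{circular} dependence: $i$'s link count depends on whether $j_2$ is present while $j_2$'s depends on whether $i$ is present, so the two claims ``$j_2$ leaves sooner in the triangle'' and ``$i$ survives longer in the triangle'' cannot be proved sequentially but must be obtained together as the fixed point of a monotone map. I would handle this by ordering the finitely many departure events chronologically and checking, event by event, that the coupled link-count vector in the triangle remains componentwise at least that of the larger ring up to each agent's departure; the single source of the inequality is the always-present $j_1$ standing in for the sometimes-absent $j_3$, and the monotonicity of the information-accumulation dynamics then propagates it through the whole system. This step, rather than the algebra of the reduction, is where the real work lies.
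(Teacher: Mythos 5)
Your proposal follows essentially the same route as the paper: condition on the survival of one neighbor of $i$, observe that ring size only matters in realizations where the other neighbor is eventually ostracized, and argue that the triangle is optimal because that neighbor's second link is then to the guaranteed survivor, so it is ostracized as quickly as possible and agent $i$ retains its valuable link longest. Your version is more explicit than the paper's (the closed form $t_i=\hat t_i-\min(t_{j_2},\tfrac12\hat t_i)$ and the reduction to stochastic dominance of $t_{j_2}$), and the monotone-coupling step you flag as the real work is exactly the step the paper itself treats only informally (``agent $j$'s new hitting time may increase and hence agent $i$'s new hitting time may decrease'').
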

\begin{proof}
Consider a ring network consisting of three agents $i$, $j$, $k$. We focus on the welfare of agent $i$ and show that it is maximized compared to rings of other sizes. Since agents are identical, this means that total social welfare is maximized as well.

Agent $i$ obtains a positive benefit in two cases: (1) realizations in which both agents $j$ and $k$'s reputation never hit $c$; (2) realizations in which exactly one of agents $j$ and $k$'s reputation never hits $c$. The probabilities that these two cases happen are independent of the network structure by Proposition \ref{prop:dist}. In the first case, having additional agent(s) between agent $j$ and $k$ does not affect agent $i$'s realization and hence, agent $i$'s welfare is not affected. In the second case, having additional agent(s) between agent $j$ and $k$ will change $i$'s realization with positive probability. Consider a realization in which agent $k$'s reputation never hits $c$ and agent $j$'s reputation hits $c$ at $t_j$. In the ring of size 3, agent $j$'s direct neighbor besides $i$ (i.e. agent $k$) never hits $c$. When there are additional agents, it is either the case that agent $j$'s new direct neighbor never hits $c$ or hits $c$ before infinity. If agent $j$'s new direct neighbor hits $c$ before infinity, then agent $j$'s new hitting time may increase and hence agent $i$'s new hitting time may decrease, leading to a lower welfare for agent $i$.
\end{proof}

The intuition behind this result is similar to the reasoning of Proposition \ref{prop:linkwelf}, in which having a direct neighbor send more information is beneficial for an agent. With only three agents in each ring, an agent learns about a neighbor that would be excluded from the stable network at a faster rate, since that neighbor remains connected with the other neighbor, when the other neighbor is included in the stable network, until the first neighbor is ostracized. This guarantees a fast rate of learning about the low expected quality neighbor, allowing the agent itself to have more time to stay connected with the high expected quality neighbor that is not ostracized. With more than three agents, the neighbor that is excluded from the stable network may have its own neighbor disconnect in advance, slowing the rate of information the ostracized neighbor produces and hurting the agent itself. 

We can extend this result to ring networks with more than three agents. Similar to the odd/even effect highlighted in Corollary \ref{corr:uniquepath}, we can show that rings with an odd number of agents will always have higher expected social welfare than ring networks with an even number of agents. However, as the number of agents grows large the difference in the social welfare of an even and odd number of agents eventually goes to zero.

\begin{corollary}
If $n$ is odd, then $W(n) > W(m), \forall m > n$. If $n$ is even, then $W(n) < W(m), \forall m > n$. Moreover, $W(n)$ converges to a limit as $n$ approaches infinity.
\end{corollary}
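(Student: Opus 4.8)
The plan is to isolate the part of an agent's welfare that actually depends on the ring size $n$, and then to show that this part is governed by a cascade whose sign behaves according to the odd/even effect already established in Corollary~\ref{corr:uniquepath}. First I would fix a reference agent $i$ in a ring of size $n$ and use the \textit{ex post} welfare formula together with the independence of stable-network membership across agents (Proposition~\ref{prop:dist} and Theorem~\ref{sigprec}). Writing $\mu$, $P(S)$ for the common mean and survival probability, agent $i$'s welfare is $\frac{\mu-c}{\rho P(S)}\,E[(1-e^{-\rho t_i})N_i]$, where $N_i\in\{0,1,2\}$ counts $i$'s stable neighbours and $t_i$ is $i$'s actual hitting time. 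Conditioning on the stable/ostracized pattern of $i$ and its two neighbours: the event $N_i=0$ contributes nothing; when $i$ is itself stable ($t_i=\infty$) the contribution is $\frac{2P(S)(\mu-c)}{\rho}$; and when $i$ is ostracized with both neighbours stable, $i$ keeps two links throughout, so $t_i=\hat t_i/2$ depends only on $i$'s own base hitting time. Since all pattern probabilities and the law of $\hat t_i$ are independent of $n$, every one of these terms is independent of $n$. Hence all $n$-dependence lives in the single event ``$i$ ostracized, exactly one neighbour stable.''

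In that event, write $R$ for $i$'s stable neighbour and $L$ for its ostracized neighbour. Only the link to $R$ yields surplus, so $i$'s welfare equals $\frac{1-e^{-\rho t_i}}{\rho}\cdot\frac{\mu-c}{P(S)}$, and from the hitting-time mapping $t_i=\max(\hat t_i/2,\hat t_i-t_L)$, which is nonincreasing in the departure time $t_L$ of $L$. Thus $i$'s welfare is nonincreasing in $t_L$, and $t_L$ is produced by the cascade along the maximal run of consecutive ostracized agents emanating from $L$ up to the next stable agent; the relevant random variable is the length $d$ of this run. I would then compare ring sizes by coupling, across rings of sizes $n$ and $m>n$, the base hitting times and stable/ostracized labels of the agents read off starting at $i$ and proceeding around the ring. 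The two cascades are identical on every realization in which this run meets a stable agent before wrapping around the smaller ring; they can differ only when the run would exceed the length available in the ring of size $n$, an event of probability at most $(1-P(S))^{\,n-1}$. Since per-realization welfare is bounded by $\frac{\mu-c}{\rho P(S)}$, this already gives $|W(n)-W(m)|\le C(1-P(S))^{\,n-1}$, so $W(n)$ is Cauchy and converges to a limit $W(\infty)$.

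It remains to pin down the \emph{sign} of the discrepancy on the wrapping event, and this is where the odd/even mechanism enters. Capping the ostracized run one agent sooner (as forced in the smaller ring) makes the agent at the end of the run depart earlier; by the alternating propagation of Corollary~\ref{corr:uniquepath} — each additional ostracized agent in the run alternately delays and hastens the departure of its predecessor — this perturbation reaches $L$ with a sign fixed by the parity of the run length, hence by the parity of $n$. Consequently $W(n)-W(\infty)$ has sign $(-1)^{n+1}$ with magnitude of order $(1-P(S))^{\,n-1}$ dominated by this leading term, so it is positive and decreasing along odd $n$ and negative and increasing along even $n$ (the preceding proposition, which gives $W(3)$ as the global maximum, anchors the small cases and fixes the sign convention). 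Writing $W(n)=W(\infty)+c_n$ with $\mathrm{sign}(c_n)=(-1)^{n+1}$ and $|c_n|$ strictly decreasing, the stated inequalities follow: for odd $n$ and any $m>n$ one has $W(m)\le W(\infty)<W(n)$ when $m$ is even and $W(m)<W(n)$ by monotonicity when $m$ is odd, and symmetrically for even $n$.

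The main obstacle is precisely this last step. The hitting times of $i$ and $L$ are mutually coupled — each agent's departure removes a link from the other — so both the monotone dependence on $t_L$ and, especially, the parity of the sign must be extracted from the interacting cascade on a cycle rather than from a single unique path as in Corollary~\ref{corr:uniquepath}. Establishing that the wrapping perturbation propagates with a clean alternating sign through this coupled system, and verifying that the leading geometric term genuinely dominates the lower-order corrections so that $|c_n|$ is strictly monotone, is where the careful work lies; the remaining reductions are routine given the earlier propositions.
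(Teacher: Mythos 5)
Your proposal follows essentially the same route as the paper's own argument: reduce to the event in which the reference agent is ostracized with exactly one stable neighbour, trace the alternating (odd/even) cascade of ostracism times along the run of consecutive ostracized agents, and obtain convergence from the geometric decay of the probability that distant agents influence that run. Your write-up is if anything more explicit than the paper's --- the decomposition isolating the single $n$-dependent term, the formula $t_i=\max(\hat t_i/2,\hat t_i-t_L)$, and the coupling/Cauchy bound for the limit are all spelled out --- and the obstacle you flag (propagating a clean sign through the mutually coupled hitting times and checking that the leading geometric term dominates) is exactly the step the paper also leaves at the level of ``this argument can be extended indefinitely.''
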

\begin{proof}
The proof is similar to that of Proposition \ref{starprop} except we take into account the odd-even effect discussed in Corollary \ref{corr:uniquepath}. We still only need to consider the case when exactly one of agents $j$ and $k$'s reputation never hits $c$. Without loss of generality assume that $j$ is not included in the stable network. With four agents, social welfare is lower than with three because the neighbor of $j$, call it $l$, may be ostracized before than $j$ is ostracized, causing $j$'s information speed to slow down. With five agents, social welfare is higher than with four because in the same case, there is a chance that agent $l$'s other neighbor is ostracized before agent $l$ is ostracized, resulting in a decrease in agent $l$'s information speed and an increase in agent $j$'s information speed. This argument can be extended indefinitely for any number of agents to prove the above result. We note that the limits of the social welfares are the same, since the probability of a neighbor very far away sending a signal that affects agent $j$'s hitting time approaches zero as the number of agents becomes very large. Such an event can only occur if all the agents in between have an ostracism time less than agent $j$ itself, an event with probability that approaches zero as the number of agents gets large.
\end{proof}

\section{Extensions}
As seen above, learning can have a negative impact on social welfare in a variety of networks, and a large reason for this is the myopia of the agents. Since the agents are not experimenting for long enough, learning is inefficient and social welfare is lost. In this section, we consider four possible extensions that could alleviate this issue and allow for higher social welfare.

\subsection{Linking Subsidy}
A potential method of addressing the negative effects of learning is to give subsidies to the agents for linking with others. For instance, a company may wish to give workers awards or bonuses for collaborating with colleagues. Or in a financial setting, a regulator may give financial incentives for firms conducting mutual investments, or guarantee interbank transactions during a financial crisis to lower default risk. We model a subsidy by assuming that for every link that an agent maintains, it receives an extra flow benefit of $\delta$ from the network designer. This linking subsidy does not affect the social welfare computation since it is a direct transfer from the network designer to the agent, but it would change agents' decisions of when to break a link. Since agents are myopic, an agent $i$ will break its link with agent $j$ if and only if agent $j$'s reputation drops below $c - \delta$. The linking subsidy therefore causes the agents to learn more information about their neighbor's quality and break only if it is very likely to be bad. We show below that by properly choosing the linking subsidy the social welfare can improve compared with the case when there is no learning about agents' qualities. Let $W(\delta)$ denote the \textit{ex ante} social welfare when the linking subsidy is equal to $\delta$. 

\begin{theorem}
There exists $\bar{\delta}$ such that $\forall \delta > \bar{\delta}$, $W(\delta) > W^*$. Moreover, $\lim\limits_{\delta \to \infty} W(\delta) = W^*$.
\label{subsidy} 
\end{theorem}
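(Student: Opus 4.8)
The plan is to reduce both assertions to one exact identity for $W(\delta)-W^{*}$. First I would specialize Theorem~\ref{welfcalc} to the subsidized dynamics, where an agent is ostracized when its reputation hits $c-\delta$ instead of $c$. Let $P_\delta(S_j)$ denote the survival probability with threshold $c-\delta$ (Proposition~\ref{prop:dist} with $c$ replaced by $c-\delta$), and recall that on $\{t_j=\infty\}$ the link $ij$ is cut precisely at $M_i$, while the surviving-neighbour surplus rate is the constant
\[
E[q_j\mid t_j=\infty]-c=\frac{(\mu_j-c)+\delta\,(1-P_\delta(S_j))}{P_\delta(S_j)} .
\]
Pulling this constant out of the expectation, Theorem~\ref{welfcalc} reads $W(\delta)=\sum_{(i,j):g^0_{ij}=1}(E[q_j\mid t_j=\infty]-c)\,E_{\hat\varepsilon}[\tfrac{1-e^{-\rho M_i}}{\rho}\mathbf{1}\{t_j=\infty\}]$. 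Splitting $\tfrac{1-e^{-\rho M_i}}{\rho}$ and using $(E[q_j\mid t_j=\infty]-c)P_\delta(S_j)=(\mu_j-c)+\delta(1-P_\delta(S_j))$, the leading piece collapses to $W^{*}+\tfrac{\delta}{\rho}\sum_{(i,j)}(1-P_\delta(S_j))$, leaving
\[
W(\delta)-W^{*}=\frac{\delta}{\rho}\sum_{(i,j)}(1-P_\delta(S_j))-L(\delta),\qquad L(\delta):=\frac{1}{\rho}\sum_{(i,j)}(E[q_j\mid t_j=\infty]-c)\,E[e^{-\rho M_i}\mathbf{1}\{t_j=\infty\}]\ge 0 .
\]

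For the existence claim I would bound the loss term. Because $E[q_j\mid t_j=\infty]-c$ is bounded above by a constant $\bar C$ uniformly in $\delta$ (argued below) and $E[e^{-\rho M_i}\mathbf{1}\{t_j=\infty\}]\le P(M_i<\infty)=1-P_\delta(S_i)$, I get $L(\delta)\le\tfrac{\bar C}{\rho}\sum_{(i,j)}(1-P_\delta(S_i))$. Re-indexing shows both ordered sums equal $\Sigma(\delta):=\sum_i k^0_i\,(1-P_\delta(S_i))>0$, so
\[
W(\delta)-W^{*}\ \ge\ \frac{\delta-\bar C}{\rho}\,\Sigma(\delta),
\]
which is strictly positive once $\delta>\bar C$; taking $\bar\delta=\bar C$ settles the claim. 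The same identity gives the limit: $0\le\tfrac{\delta}{\rho}\Sigma(\delta)$ and $0\le L(\delta)\le\tfrac{\bar C}{\rho}\Sigma(\delta)$, so $W(\delta)\to W^{*}$ provided $\delta\,(1-P_\delta(S_i))\to0$ for each $i$.

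The one genuinely analytic step — and the main obstacle — is the estimate $\delta\,(1-P_\delta(S_i))\to0$ as $\delta\to\infty$, which simultaneously makes $\bar C$ finite and forces the limit. This I would read off the hitting-probability integrand of Proposition~\ref{prop:dist}: with threshold $a=c-\delta$ the conditional hitting probability is $\exp(-\tfrac{2}{\sigma_i^2}(\mu_i-a)(q_i-a))$ for $q_i>a$ and the remaining mass is the Gaussian tail $P(q_i<a)$, so $1-P_\delta(S_i)$ decays like $e^{-\Theta(\delta^2)}$ and $\delta\,(1-P_\delta(S_i))\to0$; together with $P_\delta(S_i)\ge P_0(S_i)>0$ (monotonicity of survival in the cost from Corollary~\ref{corr:limitprob}) this bounds the numerator of $E[q_j\mid t_j=\infty]-c$, giving $\bar C<\infty$. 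The only other point needing care is that the network scaling enters solely through the benign factor $E[e^{-\rho M_i}\mathbf{1}\{t_j=\infty\}]\le 1-P_\delta(S_i)$ — the scaling reparametrizes the clock but leaves survival unchanged — so no coupling between agents' scaled hitting times can disturb the bound; everything else is bookkeeping on the displayed identity.
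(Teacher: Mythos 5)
There is a genuine gap, and it sits in the very first step: the formula of Theorem~\ref{welfcalc} does \emph{not} specialize verbatim to the subsidized dynamics. That formula keeps only the surviving neighbours because, in the base model, a neighbour $j$ with $t_j<\infty$ satisfies $E[q_j\mid \varepsilon_j^{t_j<\infty}]=c$, so the link $ij$ contributes zero expected surplus over its whole lifetime. With the subsidy the stopping threshold is $c-\delta$, so conditional on $t_j<\infty$ the expected quality is $c-\delta$ and the link contributes surplus at rate $-\delta$ until $\min\{M_i,M_j\}$. Your expression for $W(\delta)$ is therefore missing the term
\[
-\,\delta\sum_{(i,j):g^0_{ij}=1}E_{\hat\varepsilon}\!\left[\frac{1-e^{-\rho \min\{M_i,M_j\}}}{\rho}\,\mathbf{1}\{t_j<\infty\}\right],
\]
which is of exactly the same order $\tfrac{\delta}{\rho}(1-P_\delta(S_j))$ as the gain term you isolate. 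Restoring it, the two $\delta$-linear pieces largely cancel and the true gain on each ordered pair is $\tfrac{\delta}{\rho}E[e^{-\rho\min\{M_i,M_j\}}\mathbf{1}\{t_j<\infty\}]$, which carries a discount factor and can be far smaller than $\tfrac{\delta}{\rho}(1-P_\delta(S_j))$ (hitting the level $c-\delta$ happens, when it happens, at late times as $\delta$ grows). Your bound on $L(\delta)$ discards the matching factor $e^{-\rho M_i}$ via $E[e^{-\rho M_i}\mathbf{1}\{t_j=\infty\}]\le 1-P_\delta(S_i)$, so the comparison ``gain $\ge \tfrac{\delta}{\rho}\Sigma(\delta)$ versus loss $\le \tfrac{\bar C}{\rho}\Sigma(\delta)$'' no longer establishes $W(\delta)>W^*$ for $\delta>\bar C$. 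The argument is repairable — keep the discount factors on both sides and compare link by link, pairing the event $\{\min\{t_i,t_j\}<\infty\}$ in the gain against the disjoint events $\{t_i<\infty,t_j=\infty\}$ and $\{t_i=\infty,t_j<\infty\}$ in the loss, all weighted by the common factor $e^{-\rho\min\{M_i,M_j\}}$ — but that is essentially the paper's own link-level computation, in which the $-\delta$ flow from the agent whose reputation hits $c-\delta$ is set against the $>-\delta$ flow from its surviving counterparty at the same stopping time.

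The second claim fares better: your key analytic estimate $\delta\,(1-P_\delta(S_i))\to 0$ is correct (it follows cleanly from the closed form $1-P_\delta(S_i)=2\Phi\!\left(-(\mu_i-c+\delta)/\sigma_i\right)$ implicit in the proof of Corollary~\ref{corr:limitprob}), and since the omitted term is itself bounded in absolute value by $\tfrac{\delta}{\rho}(1-P_\delta(S_j))$, every piece of $W(\delta)-W^*$ is $O\bigl(\delta\,\Sigma(\delta)\bigr)\to 0$. So the limit $W(\delta)\to W^*$ survives your argument once the identity is corrected; it is the strict inequality $W(\delta)>W^*$ that needs the finer, discount-weighted comparison. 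Incidentally, your route to the limit is simpler than the paper's (which splits the hitting-time distribution at $\hat t(\delta_n)=\delta_n$ and estimates two tails separately), so that half of the proposal is a genuine improvement once the bookkeeping is fixed.
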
 
\begin{proof}
See appendix.
\end{proof}

Note that by Theorem \ref{compnl}, this result also shows that the social welfare is higher than the standard network model with no subsidy. Thus by imparting subsidies on agents to encourage them to experiment for longer, the social welfare is higher than previously. The intuition is that when the link subsidy is high enough, any link that is broken will involve an agent that is of really low expected quality. Thus although the agent that is ostracized may still hurt from being disconnected, its neighbors will benefit by a sufficiently large amount that overall social welfare increases. Therefore learning is now beneficial and improves welfare overall. The second part of the theorem states that if the linking subsidy becomes too high, then the social welfare will converge to the social welfare without learning. This is because when the subsidy is too high it becomes almost impossible for a link to break, and so the network with high probability will not change, just like in the case without learning. Therefore having a linking subsidy is beneficial for the network, but the subsidy cannot be set too high either in order to maximize social welfare.

\subsection{New Link Formation}
Another way that learning would be more socially beneficial is if agents were able to form new links with other agents whose reputations are very high. In this extension, we assume that a pair of agents who are not initially linked according to the network constraint $\Omega$ can form a new link by incurring an instantaneous cost $\gamma > 0$. There is no cost to forming links with agents that they are connected to under $\Omega$. So unlike previously when there was a hard barrier between agents not connected according to $\Omega$, agents can now break this barrier by paying an instantaneous cost. This cost could be exogenous, for instance the cost of time and energy in becoming familiar with a new agent, or the cost of reducing some physical barrier between the agents (distance or geographic barriers). The cost could also be set by the network designer such as a tax on link creation. Since we assume the formation cost is instantaneous, it is infinitesimal in the social welfare calculation and so only affects welfare through its impact on agent actions.

We assume that forming a link this way requires bilateral consent as usual. Agent $i$ will want to form a link with agent $j$ if agent $j$'s reputation is higher than $c + \gamma$. Therefore a new link between agents $i$ and $j$ is formed at time $t$ if and only if $\mu^t_i\geq c + \gamma$ and $\mu^t_j \geq c + \gamma$. The dynamics of our model will now feature some agents attaining high reputation levels and being able to link with other previously inaccessible agents that have also attained high reputation levels. Allowing these two high expected quality agents to link together will improve social welfare due to the large mutual benefits that are generated from their link. 

We can compare the social welfare produced by allowing this extra link formation against the social welfare in the basic model. Let $W(\gamma)$ denote the \textit{ex ante} social welfare when the link formation cost is equal to $\gamma$, and let $W$ be the social welfare in the basic model without the extra link formation.

\begin{theorem}
There exists $\bar{\gamma}$ such that $\forall \gamma \geq \bar{\gamma}$,  $W(\gamma) > W$.
\end{theorem}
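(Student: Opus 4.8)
The plan is to couple the extended ($\gamma$-)model and the basic model on the same underlying signal paths and to show that for large $\gamma$ the only material difference between them is the occasional formation of a link between two agents of extremely high reputation, which is almost surely pure gain. The extension is nontrivial only when $\Omega$ has at least one non-adjacent pair, and a new link between $i$ and $j$ can form only once both reputations cross $c+\gamma$. Using the hitting-time mapping of Theorem \ref{welfcalc}, I would couple the two models through the un-scaled hitting times $\hat\varepsilon$ of Lemma \ref{pdf}: since the event that an agent's reputation ever hits $c$ (equivalently, ever reaches any fixed level) is invariant to the time-change induced by the network, the survival set $\{i:\hat t_i=\infty\}$ is identical in both models, and the two models produce identical welfare on every realization in which no new link ever forms.

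Next I would isolate the welfare difference $\Delta=W(\gamma)-W$, which is supported on the event $F$ that at least one new link forms. I would split $F$ according to the event $A$ that every newly formed link has both endpoints in the survival set. On $A$ the new links sit between agents that are never ostracized; since adding the $i$--$j$ link changes only the time-change (learning speed) of $i$ and $j$, and never their survival, an induction over agents shows it leaves every agent's ostracism time unchanged, so $\Delta$ equals exactly the discounted flow carried by the new links and is strictly positive. On $A^{c}$ the welfare perturbation comes only from some endpoint's ostracism time being shifted, and since all other agents' qualities are fixed this perturbation is bounded in magnitude by a constant $C$ independent of $\gamma$.

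The quantitative core is then a threshold-crossing estimate driven by Corollary \ref{corr:limitprob}. Conditioning on an agent's reputation first reaching $c+\gamma$ at a time $T$, the strong Markov property together with Corollary \ref{corr:limitprob} gives that the conditional probability of subsequent ostracism is at most some $\epsilon(\gamma)\to 0$; by Proposition \ref{prop:dist} this bound is insensitive to the extra link's acceleration of learning. Because the reputation is a martingale, optional stopping gives expected per-link benefit of order $\gamma$ conditional on reaching the threshold, whereas the harmful realizations carry only the bounded loss $C$. Crucially, both the gain and the loss are discounted from (at least) the common crossing time $T$, so the factor $e^{-\rho T}$ factors out and the net contribution conditional on crossing is bounded below by $e^{-\rho T}\big[(1-\epsilon(\gamma))\gamma/\rho-\epsilon(\gamma)C\big]$, which is positive once $\gamma$ is large. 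Summing over agents and pairs and using $P(F)>0$ then yields $W(\gamma)>W$.

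I expect the main obstacle to be the bookkeeping around \emph{simultaneous} threshold crossing and the externality on $A^{c}$: a link forms only when both endpoints are above $c+\gamma$ at the same network-scaled instant, so the formation time is coupled across the two models in a way the un-scaled coupling does not fully pin down, and one must verify both that the beneficial event $A\cap F$ has probability dominating $P(A^{c})$ and that the externality loss is genuinely $O(1)$ and discounted from the crossing time. I would handle the first point by lower-bounding $A\cap F$ through the sub-event $\{q_i,q_j>c+\gamma\}$ for a fixed non-adjacent pair, on which both reputations converge above the threshold so that formation and joint survival are guaranteed, and the second by the explicit flow bounds already used in the proof of Theorem \ref{compnl}.
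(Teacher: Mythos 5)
Your proposal follows essentially the same route as the paper's proof: condition on the first time both endpoints of a potential new link cross $c+\gamma$, split according to whether both subsequently survive (your event $A$, the paper's $\zeta_1$), lower-bound the gain on the survival event by a discounted flow of order $\gamma$ (the paper computes $E[q\mid\text{survival}]=\frac{(c+\gamma)-(1-P)c}{P}$ where you invoke optional stopping on the reputation martingale), and absorb the externality loss on the complementary event using the fact that its probability vanishes as $\gamma\to\infty$ by Proposition \ref{prop:dist}. The additional coupling and bookkeeping you describe (the unscaled hitting times, the induction showing that a link between two survivors perturbs no one's ostracism time, the $O(1)$ versus the paper's ``at most linear in $\gamma$'' bound on the loss) only make explicit steps the paper leaves implicit, so the approach is the same.
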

\begin{proof}
Consider any realization $\varepsilon$ when link formation is not allowed. The \textit{ex post} welfare $W(\varepsilon)$ is changed only when there is some time $t^*$ such that there exist two agents $i$ and $j$, who are not initially connected, such that $\mu^{t^*}_i \geq c + \gamma$ and $\mu^{t^*}_j \geq c + \gamma$. In the original realization $\varepsilon$, conditional on $t^*$, there are two cases
\begin{itemize}
  \item $\zeta_1$: Both agents' reputations never hit $c$ after $t^*$. 
  \item $\zeta_2$: At least one agent's reputation hits $c$ after $t^*$.
\end{itemize}
When $\zeta_2$ occurs, allowing link formation may change the hitting time of all agents' in the network and hence, the welfare $W(\varepsilon|\zeta_2)$ may change. However, the probability of $\zeta_2$ occurring tends to zero as $\bar{\gamma}$ tends to infinity by Proposition \ref{prop:dist}. When $\zeta_1$ occurs, the social welfare increases by at least $\frac{e^{-\rho t^*}}{\rho}\frac{(c+\bar{\gamma}) - (1 - P(\zeta_1)) c}{P(\zeta_1)}$. When $\zeta_2$ occurs, the welfare decreases by at most $B(\zeta_2)$, a function that is at most linear in $\bar{\gamma}$ as it grows large, since the set of agents and their initial qualities are fixed. Thus the overall change in welfare can be written as

\begin{eqnarray}
W'(\varepsilon) - W(\varepsilon) \geq P(\zeta_1)\frac{e^{-\rho t^*}}{\rho}\frac{(c+\bar{\gamma}) - (1 - P(\zeta_1)) c}{P(\zeta_1)} - P(\zeta_2)e^{-\rho t^*} B(\zeta_2)
\end{eqnarray}

By choosing $\bar{\gamma}$ large enough, we can ensure that $P(\zeta_2)$ is small enough such that the change is positive in all such realizations $\varepsilon$. Therefore $W(\gamma) > W$. 

\end{proof}

This theorem states that if the link formation cost is high enough then the social welfare is improved over the base model because two agents that decide to form a new link will do so with high reputations. Thus the social welfare generated by a new link is likely to be high as well, and this dominates any potential informational externalities that the link could create. Note however that a $\gamma$ that is too low may actually harm welfare. For instance suppose there are a group of moderate expected quality agents that are all linked to a very high expected quality agent, but separated from each other according to $\Omega$. This is similar to the core-periphery setting examined in Theorem \ref{coreperiph}. In such a case, allowing moderate reputation agents to link with each other would cause them to harm each other via the negative informational effects of the link. This would reduce welfare overall compared to the base model. Therefore allowing for new link formation can improve welfare, but the threshold for the link being formed must be sufficiently high as well. The optimal $\bar{\gamma}$ would depend on the specific properties of the network. If as in the example there exists a group of very high reputation agents that the moderate reputation agents are linked with, then $\bar{\gamma}$ would likely be higher as well, as it becomes more important for moderate reputation agents to not be linked with each other.

\subsection{Agent Entry}
Our model can also be tractably extended to allow agents to enter into the network over time. Specifically, suppose that for the set of $N$ agents in $V$ there is a corresponding set of entry times $\{e_i\}_{i\in V}$, with $e_i\geq0\ \forall i$. Agents with $e_i=0$ are present in the network at the beginning, while agents who have $e_i>0$ enter later on. These entry times are fixed and known to the agents in the model. The network constraint $\Omega$ is now defined over the set of all $N$ potential agents and still specifies which agents are allowed to connect to each other, including agents that arrive later. This network constraint effectively determines where agents enter into the network at their entry times. The learning process is the same as before, with learning occurring for agents within the network based on their current amount of neighbors, and no learning occurring for an agent that has not yet entered. 

Agents still make decisions myopically and will connect with a neighbor for as long as that neighbor's reputation is above the connection cost. Since we assume all agents have initial reputations above the cost, an incumbent agent with always wish to connect with a newly entering agent. However, the new agent would not want to connect with one of its neighbors that has already been ostracized previously within the network. The dynamics will evolve similarly to before, with agents connecting to neighbors until a neighbor's reputation falls too low, at which point the neighbor will be ostracized.  The difference now is that new agents will arrive at certain times, and when they do they will change the benefits and amount of information produced by the network.

We can compare the model with agent entry against the base model where all agents were present in the beginning, i.e. $e_i=0 \ \forall i\in V$. We fix a network constraint $\Omega$ and perform comparative statics on the entry times of the agents. We first show that incorporating agent entry will not change either the set or the distribution of stable networks.

\begin{proposition}
The set of stable networks is unchanged with agent entry. The probability of each stable network emerging is the same as that given in Corollary \ref{corr:stableprob} and identical to the case without agent entry.
\end{proposition}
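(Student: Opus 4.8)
The plan is to reduce everything to two facts already established: that the inclusion event $S_i$ (agent $i$'s reputation never reaching $c$) has probability $P(S_i)$ depending only on $(\mu_i,\sigma_i,c)$ by Proposition \ref{prop:dist}, and that these events are independent across agents and invariant to the speed of information production by Theorem \ref{sigprec}. I would argue that introducing entry times changes neither of these, so that the stable-network characterization underlying Corollary \ref{corr:stableprob} transfers verbatim.

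First I would pin down the effect of a positive entry time $e_i>0$ on agent $i$'s reputation path. During $[0,e_i)$ the agent produces no benefits, so no information is generated and its reputation is frozen at the prior mean $\mu_i>c$; in particular its reputation cannot hit $c$ before it enters. From $e_i$ onward the reputation evolves exactly as in the base model: it is driven by the same prior $\mathcal{N}(\mu_i,\sigma_i^2)$ and the same diffusion of benefits, the only difference being that the instantaneous precision is scaled by the current number of neighbors (some of which may already have been ostracized before $e_i$). Consequently the event $S_i$ coincides with the event that the post-entry process never hits $c$ over $[e_i,\infty)$.

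Next I would invoke the speed-invariance argument underlying Proposition \ref{prop:dist} and Theorem \ref{sigprec}. As in the hitting time mapping, the number of active neighbors (together with $\tau_i$) affects only the rate at which information accumulates, and hence only \emph{when} the reputation would hit $c$, not \emph{whether} it ever does: an unscaled hitting time of $\infty$ maps to $\infty$ under any positive rescaling of the information clock. Freezing the process on $[0,e_i)$ and then running it from $e_i$ is precisely such a (time-shifting, monotone) rescaling, so it leaves the probability of the terminal dichotomy $\{t_i<\infty\}$ versus $\{t_i=\infty\}$ unchanged. Therefore $P(S_i)$ is exactly as in Proposition \ref{prop:dist}, and since $S_i$ is a function only of agent $i$'s own quality draw and own noise process, the events $\{S_i\}_{i\in V}$ remain mutually independent.

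Finally I would transfer the characterization. For any pair with $\omega_{ij}=1$, the limiting link satisfies $g_{ij}^\infty = I\{S_i\}I\{S_j\}I\{\omega_{ij}=1\}$ also under entry: if both agents are never ostracized then, once both have entered, each has reputation strictly above $c$ at all times and the link persists; if either is ostracized the link is absent in the limit, and a new entrant simply never forms a link with an already-ostracized neighbor, which only reinforces $g_{ij}^\infty=0$. Hence the set of attainable stable networks is identical to the no-entry case, and because the $P(S_i)$ and their independence are unchanged, Corollary \ref{corr:stableprob} yields the same probability $\sum_{\{S_i\}}\prod_i P(S_i)$ for each stable network. The main obstacle is the third step—making rigorous that freezing-then-shifting the information clock preserves the hitting probability—but this is exactly the speed-invariance already proved for varying neighbor counts in the hitting time mapping, here applied to the degenerate pre-entry speed of zero.
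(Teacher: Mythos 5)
Your proposal is correct and follows essentially the same route as the paper: the paper's own proof likewise observes that a positive entry time only shifts \emph{when} an agent's reputation can hit $c$, not \emph{whether} it ever does, so Proposition \ref{prop:dist} and the independence of the events $S_i$ are preserved and Corollary \ref{corr:stableprob} applies verbatim. Your write-up simply makes explicit the pre-entry freezing of the reputation above $c$ and the time-rescaling argument that the paper leaves implicit.
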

\begin{proof}
First note that Proposition \ref{prop:dist} still holds for each agent, regardless of the specific entry times. This is because the later entry of an agent only shifts the time at which it gets ostracized, but will not change the fact that it ever gets ostracized. Since the probability that each agent is ostracized is not affected, the set of stable networks and the probability that each stable network emerges does not change either. Thus the same probability distribution over stable networks as in corollary \ref{corr:limitprob} will result.
\end{proof}

Although the properties of the final stable networks are not affected by agent entry, the overall social welfare will be affected. It is possible to calculate social welfare in a similar method as in Theorem \ref{welfcalc}, as we can account for agent entry by rescaling the hitting times of the agents in the network appropriately. Incorporating agent entry has two separate effects on social welfare: first, the links that the entering agent has are started later, so the benefits from those links are realized later as well and thus discounted more heavily; second, the neighbors of the entering agent send less information before that agent enters, and the agent itself may send information more slowly if one of its neighbors is ostracized before it enters, delaying the time at which the agent and its neighbors are potentially ostracized from the network. The first effect hurts social welfare because the benefits from any link are positive in expectation. However, the second effect can improve social welfare by delaying the agents' ostracization times and increasing the benefits that each agent is able to extract from the network. It is possible for the second effect to dominate the first, so that delaying entry for an agent raises social welfare overall.

\begin{theorem}
\label{entrytheorem}
For some network parameters, increasing a single agent's entry time $e_i$ can increase social welfare.
\end{theorem}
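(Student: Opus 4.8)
The plan is to prove the statement by exhibiting a single family of network parameters for which delaying one agent's entry raises the \emph{ex ante} welfare; since the claim is an existence statement, one explicit example suffices. I would use a three-agent path: a network constraint $\Omega$ permitting only the links $i$--$j$ and $j$--$k$, with $j$ in the middle. Choose agents $i$ and $k$ to have degenerate priors ($\sigma_i^2=\sigma_k^2=0$) so that their qualities are known and they are never ostracized, set $\mu_i=c+\epsilon$ for small $\epsilon>0$ (so the $i$--$j$ link carries negligible direct value but still contributes a link to $j$), and take $\mu_k$ very large (so the $j$--$k$ link is extremely valuable). Agent $j$ has $\mu_j>c$ but positive variance $\sigma_j^2$, so by Corollary \ref{corr:limitprob} we have $0<P(S_j)<1$. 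I would then compare the base model ($e_i=0$) against delayed entry ($e_i=s>0$), holding everything else fixed.

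The core of the argument is that, because $i$ and $k$ never leave, the only quantity that the entry time $e_i$ can move is agent $j$'s ostracism time, through the hitting-time mapping $M$. When $i$ is present from time $0$, agent $j$ carries two links until its own ostracism, whereas under delayed entry $j$ carries only the single link to $k$ on $[0,s]$; since $M$ rescales the base hitting time $\hat t_j$ by the accumulated number of links, an explicit computation gives the monotone relation $t_j^{\mathrm{del}}>t_j^{\mathrm{base}}$ (concretely $t_j^{\mathrm{del}}=t_j^{\mathrm{base}}+s/2$ whenever $j$ survives past $s$, and an even larger gap otherwise). I would then split the expectation over the event that $j$ survives and the event that $j$ is ostracized. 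On the survival event, $t_j=\infty$ in both scenarios (an agent who never hits $c$ without scaling never hits it with scaling), the valuable $j$--$k$ flow of size $\tfrac{1}{\rho}(\mu_k-c)$ is identical in both, and the only difference is the delayed start of the low-value $i$--$j$ link; hence the welfare change $\Delta_A$ on this event is bounded by a constant independent of $\mu_k$ (it involves only the excess qualities $E[q_i\mid S_i]-c$ and $E[q_j\mid S_j]-c$).

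On the ostracism event, using the \emph{ex post} welfare expression \eqref{welfare}, agents $i$ and $k$ receive zero surplus from $j$ (the inner sum counts only neighbors with infinite hitting time), and $j$ itself has no other neighbors, so the total welfare collapses to $j$'s benefit from $k$, namely $\tfrac{1-e^{-\rho t_j}}{\rho}(\mu_k-c)$ up to a term of order $\epsilon$. Because delaying $i$ strictly increases $t_j$, this event contributes a strictly positive gain $\Delta_B=\tfrac{\mu_k-c}{\rho}\,\kappa$, where $\kappa=E\big[e^{-\rho t_j^{\mathrm{base}}}-e^{-\rho t_j^{\mathrm{del}}}\mid\text{ostracized}\big]>0$ depends only on the law of $\hat t_j$ (through $\mu_j,\sigma_j^2,\tau_j,c,s,\rho$) and \emph{not} on $\mu_k$, since $k$ remains a neighbor of $j$ for every large $\mu_k$. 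Assembling the two pieces, $E[\Delta W]=P(S_j)\,\Delta_A+(1-P(S_j))\,\Delta_B$ with $(1-P(S_j))>0$ fixed; letting $\mu_k\to\infty$ the positive term of order $\mu_k$ dominates the bounded term $\Delta_A$, so $E[\Delta W]>0$ for $\mu_k$ large enough, proving the theorem.

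The main obstacle is the rigorous treatment of the hitting-time mapping $M$: I must verify both the strict monotonicity $t_j^{\mathrm{del}}>t_j^{\mathrm{base}}$ and the fact that the constant $\kappa$ is genuinely independent of $\mu_k$, and I must bound $\Delta_A$ uniformly in $\mu_k$. The degenerate choice $\sigma_i^2=\sigma_k^2=0$ is what makes this tractable, because it renders $j$'s link-count history deterministic (two links while $i$ is present, one otherwise), reducing $M$ to the elementary piecewise-linear rescaling above and decoupling $j$'s ostracism law from the value $\mu_k$ that drives the dominant gain.
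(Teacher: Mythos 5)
Your proposal is correct and takes essentially the same route as the paper: a three-agent line in which a vulnerable middle agent is anchored to one very-high-quality neighbor, and delaying the entry of the near-marginal third agent slows the middle agent's information production, postponing its ostracism and preserving the valuable link for longer. Your write-up is in fact more rigorous than the paper's one-paragraph sketch --- the degenerate priors on the peripheral agents, the explicit piecewise rescaling $t_j^{\mathrm{del}}=t_j^{\mathrm{base}}+s/2$ of the hitting time, and the decomposition into survival/ostracism events followed by the $\mu_k\to\infty$ limit all check out against the model's hitting-time mapping and the \emph{ex post} welfare formula.
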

\begin{proof}

\begin{figure}

\centering
\includegraphics[width=.8\linewidth]{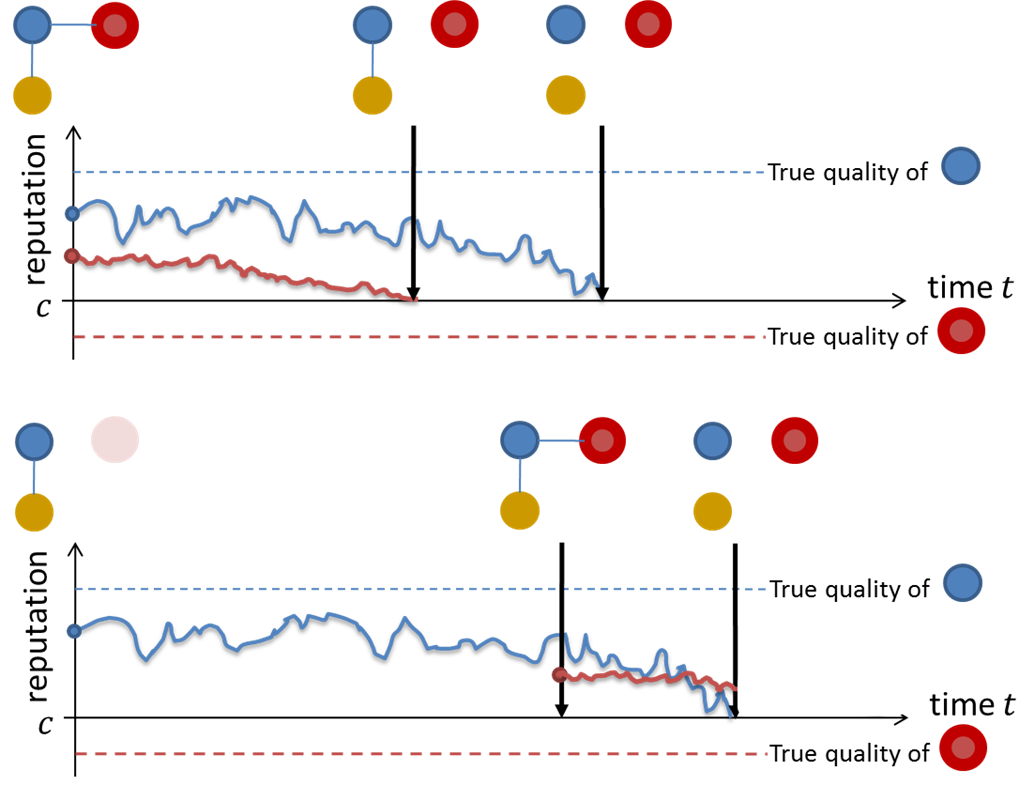}
\caption{Example for Theorem \ref{entrytheorem}}
\label{fig:entfig}
\end{figure}

We prove using an example, shown in Figure \ref{fig:entfig}. In this network of three agents, suppose that the white agent's expected quality is very high. Suppose both the red (large circle and bolded line) and the blue (small circle and thin line) agents expected qualities are very close to $c$. Since the white agent's expected quality is very high, the social welfare of the network will be completely determined by the amount of time the blue agent connects with the white agent. By delaying the entry of the red agent, the blue agent is able to stay connected for longer in each realization, and so social welfare increases.
\end{proof}

In the example of Figure \ref{fig:entfig}, note that although delaying the entry of the red (large circle and bolded line) agent is helpful, it is still better to have the agent enter at some finite time instead of never entering. This is because the blue agent's reputation will eventually converge to its true quality by the law of large numbers, and in the case where the blue (small circle and thin line) agent has a good true quality, enabling a link with the red agent will produce positive benefits. In addition, after waiting for a sufficiently large amount of time, the probability that the blue agent ever becomes ostracized if it hasn't already goes to zero, so the red agent is unlikely to impact the blue agent's connection with the white agent. Therefore delaying the entry of the red agent is beneficial, but the red agent should not be excluded from the network altogether.

\begin{figure}

\centering

\includegraphics[width=.8\linewidth]{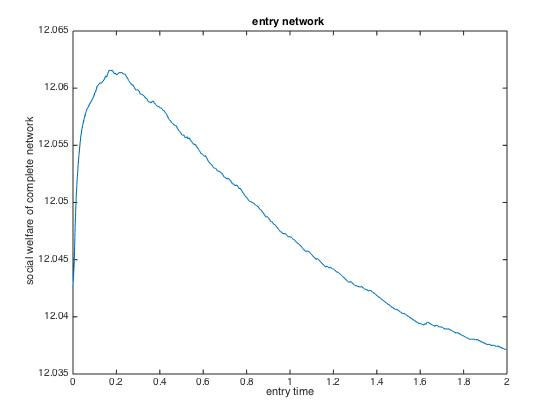}
\caption{Simulation Illustrating Theorem \ref{entrytheorem}: The simulation uses a network of 3 agents who are linked according to Figure \ref{fig:entfig}. All  agents have $\sigma_i^2=20$ and $\tau_i=1$, and the discount rate is 1. The high expected quality agent has an initial mean of 100, while the other two agents have initial means of 1. The entry time of the entering agent ranges from .0005 to 2 at increments of .05. The y-axis shows the average social welfare of the network at each entry time. 80000 draws were made of each agent's hitting time and true quality level, and the social welfare was then computed by varying the entry time.}
\label{fig:entsim}
\end{figure}

Figure \ref{fig:entsim} shows this trade-off explicitly via a simulation highlighting the example shown in Figure \ref{fig:entfig}. When the new agent enters later, social welfare initially increases because the incumbent agents have more time to benefit from their links. However, if the entry time becomes too large then the social welfare decreases, since the reputations of the incumbent agents have stabilized already, and it is thus better to have the new agent enter sooner and benefit from the network as well.

As an implication, a financial regulator may wish to delay new firms from entering the network in times of crisis when there is a lot of uncertainty, and then allow them to enter once the crisis has ended and reputations are more stable. Or in an organization, a firm may wish to not expand too quickly, and instead take the time to allow the current workers to better understand each other first.

\subsection{Agent Re-entry}
Our model can be tractably extended to allow agents to be forgiven and then let back into the network. For instance, suppose that a worker in a company can improve its quality after it becomes ostracized through some exogenous means, such as going back to school to increase its abilities, or taking counseling to better its personality. In financial networks, suppose that a bank can get recapitalized by the government after it gets shut out of the network, allowing its expected quality to increase. After the agent undergoes this exogenous process, the agent's reputation improves and so the other agents are again willing to link with it. We show that agent forgiveness in this manner can increase social welfare as well as mitigate the negative effects of learning. In fact, learning may now actually become beneficial.

We model agent forgiveness by assuming that when an agent is ostracized from the network, the agent can reenter the network at a later time. How long the agent must wait before reentry is an exogenous parameter, which we denote by $L$. When the agent reenters, its reputation is the same as the reputation that it started out with initially, $N(\mu_i,\sigma^2_i)$.\footnote{We make this assumption to avoid adding too many new exogenous parameters. Our results can be extended to a more general setting as well where the reputation changes upon re-entry.} As mentioned above, this re-entry could be the result of the agent undergoing additional training or preparation to improve its quality. An alternative interpretation is also possible where this is in fact a new agent entering the network, but from the same population or background as the original agent. Thus the new agent starts out with the same reputation as the original agent.

We assume that each agent can reenter into the network as long as it has not already been ostracized in the past a total of $R$ times. Therefore an agent can reenter the network as long as it has not already reentered $R-1$ times in the past. $R$ is an exogenous parameter that represents the number of times which ostracized agents are willing to undergo the process to improve themselves, or the number of replacement agents that can be brought into the network. A higher value of $R$ means that the ostracized agents are willing to undergo the improvement process even if they have been ostracized multiple times in the past.

With agent re-entry, we can still compute the set of stable networks, as well as the probability that each stable network emerges. The probability that an agent is included in the stable network is now equal to the probability that an agent does not get ostracized a total of $R$ times in a row. Since the agent's reputation is redrawn each time upon re-entry, this probability can be computed using the products of the probabilities in Proposition \ref{prop:dist}. The exact formula is given in the following proposition. Compared with the original probabilities, agent re-entry implies that each agent is more likely to be part of the stable network, since they have more chances with which to get a high true quality draw.

\begin{proposition} $P(S_i)$ depends only on the initial quality distribution and the link cost and can be computed by
\begin{equation}
P(S_i) = 1-\left(1-\int_{c}^\infty(1 - \exp(-\frac{2}{\sigma^2_i}(\mu_i - c)(q_i - c)))\phi\left((q_i - \mu_i)\frac{1}{\sigma_i}\right)dq_i \right)^R
\end{equation}`
\end{proposition}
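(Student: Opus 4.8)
The plan is to treat each re-entry as an independent \emph{renewal} of the agent's lifetime, so that the overall survival probability is one minus the probability of failing in every one of the $R$ allowed lives. Let $p$ denote the single-life survival probability given by Proposition \ref{prop:dist}, namely the probability that an agent starting from a fresh draw $q_i\sim\mathcal{N}(\mu_i,\sigma_i^2)$ has its reputation never hit $c$ while it is active. By the re-entry rule, agent $i$ is \emph{permanently} ostracized (and hence excluded from the stable network) if and only if it is ostracized in its initial life and in each of its subsequent $R-1$ re-entries, for a total of $R$ failed lives. Thus $1-P(S_i)$ equals the probability of $R$ consecutive failures, and the whole proof reduces to showing that these $R$ events are independent and identically failing.

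The key step is to argue that each of the $R$ survival/failure events is independent of the others and has failure probability exactly $1-p$. Independence of the true qualities across lives is immediate, since by assumption the quality is redrawn from $\mathcal{N}(\mu_i,\sigma_i^2)$ upon each re-entry; and the driving Brownian motions $Z_{ij}(t)$ are independent across disjoint time intervals, so the signal paths in distinct lives are independent as well. The mandatory waiting period $L$ only shifts the start time of each new life and is irrelevant to whether that life eventually ends in ostracism. The one point that genuinely needs care is that the network the agent re-enters into may differ from the initial one (some neighbors may already have been ostracized), so a priori the per-life survival probability could depend on which life it is. Here I would invoke Proposition \ref{prop:dist} together with Theorem \ref{sigprec}: the probability that an agent is ever ostracized depends only on its own initial quality distribution and the link cost $c$, and is independent of both the learning speed and the number and identity of its links. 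Consequently each of the $R$ lives, regardless of the ambient network at the moment of re-entry, contributes exactly the same survival probability $p$ and the same failure probability $1-p$.

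Combining these observations, the probability of $R$ independent consecutive failures is $(1-p)^R$, whence
\begin{equation}
P(S_i) = 1 - (1-p)^R,
\end{equation}
and substituting the explicit expression for $p$ from Proposition \ref{prop:dist} yields the stated formula. I expect the main obstacle to be precisely the middle step: rigorously justifying that the per-life survival probability is unchanged across re-entries despite the evolving network configuration. This is exactly where the network-independence established in Proposition \ref{prop:dist} and Theorem \ref{sigprec} does the heavy lifting, since without it one could not factor $1-P(S_i)$ into a clean product of identical terms $(1-p)^R$; the remaining algebra is then immediate.
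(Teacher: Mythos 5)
Your proof is correct and takes essentially the same approach as the paper's: permanent exclusion requires $R$ consecutive failed lives, each independent with failure probability $1-p$, where $p$ is the single-life survival probability from Proposition \ref{prop:dist}, giving $P(S_i)=1-(1-p)^R$. The extra care you take in arguing that the per-life survival probability is unaffected by the ambient network at re-entry (via the network- and precision-independence in Proposition \ref{prop:dist}) is a point the paper's two-sentence proof leaves implicit.
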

\begin{proof}
The probability that an agent is ostracized permanently is found by taking the 1 minus the probability in Proposition \ref{prop:dist}, and then raising that to the power of R. Therefore the probability that an agent is included in the stable network is found by taking 1 minus this probability.
\end{proof}

Note that since this probability is very similar to the probability given in Proposition \ref{prop:dist}, all of the relationships between this probability and the exogenous parameters (initial mean, variance, signal precision, and link cost) highlighted in Corollary \ref{corr:limitprob} and Theorem \ref{sigprec} continue to hold. In addition, we can derive an analogue of Corollary \ref{corr:stableprob} using these new probabilities. Thus we can still characterize the explicit probability that any stable network emerges as time goes to infinity and the re-entry process by all the agents has concluded.

We can also derive results about agent welfare when re-entry is possible. Specifically, we can show that if the number of periods of re-entry $R$ is sufficiently large, and the time that an agent takes to reenter $L$ is sufficiently small, then learning becomes beneficial. This is intuitive, because if agents are learned about faster, then bad agents can exit the network sooner to undergo improvement while good agents will stay in and are unaffected. Having agent forgiveness mitigates the negative effects of learning, and makes learning a positive overall.

\begin{theorem}
If $R$ is sufficiently large compared to $\tau_i$, and $L$ is sufficiently small compared to $\tau_i$, then a small increase in $\tau_i$ increases the overall social welfare of the network.
\end{theorem}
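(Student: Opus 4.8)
The plan is to establish $\partial W/\partial\tau_i > 0$ in the stated regime by conditioning on the full ex-post randomness and isolating the single welfare channel that is first-order sensitive to $\tau_i$. I would fix a realization consisting of all of agent $i$'s successive quality draws (one per life, each i.i.d.\ from $\mathcal{N}(\mu_i,\sigma_i^2)$) together with all Brownian paths. By the scaling invariance underlying Proposition~\ref{prop:dist} and Theorem~\ref{sigprec} (survival of any single life depends on $\tau_i$ only through the product $t\tau_i$), the number $m_i-1$ of failed lives preceding agent $i$'s first permanent life, and the identity of which agents ultimately survive, are independent of $\tau_i$; raising $\tau_i$ to $(1+\epsilon)\tau_i$ merely rescales each failed life's active duration $d_k\mapsto d_k/(1+\epsilon)$ (exactly so once $L$ is small enough that $i$'s degree is essentially constant within each life). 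Consequently the start time $T_i=\sum_{k<m_i}(d_k+L)$ of agent $i$'s permanent life strictly decreases in $\tau_i$ whenever $m_i>1$, which by Proposition~\ref{prop:dist} ($P(S_i)<1$) has positive probability bounded away from zero for all large $R$.

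The heart of the argument is that this forward shift of $T_i$ raises the discounted value of the benefits agent $i$ \emph{provides}. Here I would reuse the exact cancellation already exploited in the paper's ex-post welfare derivation: conditional on a life of $i$ ending in ostracism at a finite time, the reputation equals $c$ at that instant, so $E[q_i\mid d_k<\infty]=c$ and the net benefit $\int e^{-\rho t}(q_i-c)\,dt$ that $i$ supplies to any neighbor over a failed life is zero in expectation, for every $d_k$. Thus, after integrating over qualities, all of $i$'s provided welfare comes from its permanent life, during which the net rate is the strictly positive $\tfrac{\mu_i-c}{P(S_i)}$ per active neighbor link. Since that phase is infinitely long, advancing its start from $T_i$ to $T_i-\Delta$ increases this contribution by a quantity first-order in $\Delta$, hence first-order in $\epsilon$, and bounded below by a positive constant $c_0$ once $R$ is large enough that each neighbor also avoids permanent ostracism with probability near one (so the link is active throughout $i$'s permanent life apart from the neighbor's own $O(L)$ waiting windows).

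It remains to show the competing channels are lower order. Agent $i$'s own \emph{received} welfare, by the logic of Proposition~\ref{prop:welfsig}, is governed by the discounted measure of times at which $i$ is active; since the active and waiting sets exhaust $[0,\infty)$, this measure equals $1/\rho$ minus the discounted measure of the $m_i-1$ waiting windows, a quantity that is $O(L)$ and whose $\tau_i$-dependence is therefore $O(L)$. The same holds for the network-coupling effects on third parties: agent $i$'s presence, and hence its neighbors' learning speeds, are altered by changing $\tau_i$ only during $i$'s waiting windows, of total discounted length $O((m_i-1)L)$. Finally, on the event that all $R$ lives fail (probability $(1-P(S_i))^R$) there is no permanent phase and faster learning is purely harmful as in Theorem~\ref{compnl}, but this event's contribution to $\partial W/\partial\tau_i$ vanishes as $R\to\infty$. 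Collecting terms gives $\partial W/\partial\tau_i \ge c_0 - O(L) - O((1-P(S_i))^R)$ with $c_0>0$, so choosing $R$ large and then $L$ small makes the derivative positive.

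The main obstacle is the rigorous treatment of the network coupling together with the two limits. I would make the $L\to 0$ reduction precise by showing that the welfare functional converges, uniformly in $\tau_i$ over a neighborhood, to the $L=0$ back-to-back-lives functional in which each agent's degree is piecewise constant and the durations $d_k$ scale exactly; continuity of $\partial W/\partial\tau_i$ in $L$ then transfers the strict sign from the limit. Justifying that the permanent-ostracism event contributes a negative amount bounded uniformly in $\tau_i$ (so it can be absorbed by taking $R$ large first), and that the forward-shift lower bound $c_0$ does not itself degrade as $R\to\infty$, are the two estimates requiring the most care; both follow from the geometric tail of the number of failed lives and the uniform positivity of $\tfrac{\mu_i-c}{P(S_i)}$, but they are where a fully rigorous proof would spend its effort.
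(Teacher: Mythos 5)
Your proposal follows essentially the same route as the paper's (much terser) proof: as $R\to\infty$ every agent is eventually included in the stable network, welfare is governed by the start time of each agent's permanent life, that start time is decreasing in $\tau_i$ because of the $t\tau_i$ scaling of failed-life durations, and the $L$-waiting windows and the all-lives-fail event are vanishing corrections. You additionally make explicit the martingale cancellation ($E[q_i\mid d_k<\infty]=c$, so failed lives contribute zero expected provided benefit) and the $O(L)+O((1-P(S_i))^R)$ error bounds that the paper's three-sentence argument leaves implicit, so this is a faithful and more rigorous rendering of the same argument.
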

\begin{proof}
Note that as $R$ converges to infinity, the probability that each agent is included in the stable network goes to $1$. Therefore the social welfare generated by any agent $i$ will depend on the first time instance at which it enters and does not become ostracized. This is because $L$ is very small, so agent $i$ loses very little benefit when it is ostracized. The first time at which agent $i$ reenters and does not get ostracized is strictly decreasing in its information precision $\tau_i$, since a faster information speed implies that it gets ostracized earlier later on.  Therefore a larger signal precision increases overall social welfare.
\end{proof}

We can extend the above result to show that a fully connected network is the optimal $\Omega$ when the network is very forgiving and the downtime of reentry is low. A fully connected network would allow all agents to link with each other and benefit from the resulting mutual interactions. In addition, since learning is now beneficial, the fact that each agent has many links in a fully connected network and thus sends a lot of information also increases social welfare. This result highlights the fact that with agent forgiveness, more densely connected networks can become optimal, and the designer can allow for more links in the initial networks.

\begin{theorem}
If $R$ is sufficiently large compared to $\tau_i$ for all $i$, and $L$ is sufficiently small compared to $\tau_i$ for all $i$, then a fully connected network is the optimal $\Omega$.
\end{theorem}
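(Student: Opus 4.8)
The plan is to adapt the link-addition argument of Proposition \ref{prop:compnet} to the re-entry setting: I would show that under the stated conditions the welfare of any network constraint $\Omega$ that is not fully connected can be strictly increased by adjoining a single missing link, and then iterate over all missing pairs to conclude that the fully connected network is the optimum. First I would record the two limiting facts already exploited by the preceding theorem: as $R \to \infty$ the inclusion probability satisfies $P(S_i) \to 1$ for every agent (by the re-entry formula $P(S_i) = 1-(1-p_i)^R$ with $0<p_i<1$), so almost surely every agent eventually becomes a permanent member; and as $L \to 0$ the discounted benefit forgone while an agent is temporarily out of the network is negligible. Consequently the \emph{ex ante} welfare is, up to an error that vanishes in this double limit, governed by the collection of times $\{T_i\}$, where $T_i$ is the first instant at which agent $i$ re-enters and is never ostracized thereafter; after $T_i$ agent $i$ contributes its full discounted flow benefit to each neighbor.

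Next I would analyze the effect of adding a link $ij$ to form $\Omega'$. The addition has three consequences: (a) it creates a new direct benefit channel between $i$ and $j$; (b) it raises the effective information rates $k^t_i\tau_i$ and $k^t_j\tau_j$ at the two endpoints and hence, by the monotonicity underlying the preceding theorem, lowers $T_i$ and $T_j$; and (c) it perturbs the hitting times of third parties through the informational coupling shown to cut both ways in Proposition \ref{prop:cycle}. For (a), since both endpoints are almost surely permanent as $R\to\infty$, the new link almost surely delivers strictly positive expected discounted benefit of order $(\mu_j-c)/\rho$ and $(\mu_i-c)/\rho$. For (b), a smaller $T_i$ means agent $i$ begins supplying full benefit to \emph{all} its neighbors sooner, which strictly raises welfare exactly as in the preceding theorem — this is precisely the mechanism by which forgiveness neutralizes the negative externality that drove the core-periphery conclusion of Theorem \ref{coreperiph}. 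The crux is (c): because any third agent is itself almost surely permanent once $R$ is large, a transient change in its sorting schedule can alter welfare only during its pre-$T$ sorting phase, whose total discounted length is $O(L)$ per ostracism event; choosing $L$ small relative to the strictly positive gains from (a) and (b) makes this perturbation dominated.

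The main obstacle is making the third-party bound in (c) uniform across the two limits simultaneously, since $R\to\infty$ enlarges the number of possible re-entry events while $L\to 0$ shrinks each one. I would therefore control the expected discounted total downtime rather than a crude $R\cdot L$ product: once reputations stabilize, re-entries of a given agent occur at an exponentially thinning rate, so the expected discounted downtime tends to $0$ and is $o(1)$ relative to the $\Theta(1/\rho)$ direct gain from the new link. Once a single link addition is shown to strictly increase welfare for all sufficiently large $R$ and sufficiently small $L$, a finite induction over the at most $\binom{N}{2}$ absent links establishes that the fully connected constraint dominates every other $\Omega$, completing the proof.
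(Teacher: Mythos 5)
Your proposal is correct and rests on the same two pillars as the paper's proof --- that $R\to\infty$ forces $P(S_i)\to 1$ so that welfare is governed by each agent's first successful (never-again-ostracized) entry time, and that $L\to 0$ makes downtime and the associated flow losses negligible --- but you organize the comparison quite differently. The paper argues globally and tersely: the fully connected network maximizes every agent's link count, hence maximizes information speed, hence minimizes the first-success times, and since flow payoffs are positive and agents are in the network almost continuously, more links can only help. You instead run a marginal argument in the style of Proposition \ref{prop:compnet}: add one missing link, decompose its effect into the direct benefit, the acceleration of the two endpoints' sorting, and the informational perturbation of third parties, and then induct over the finitely many absent links. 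The payoff of your route is that it explicitly confronts the third-party externality --- the mechanism that made extra links harmful in Theorem \ref{coreperiph} --- and shows why it is neutralized here: with re-entry, a neighbor's ostracism removes a link only for a window of length $L$, and the expected number of such windows before an agent's permanent entry is geometric with mean independent of $R$, so the total expected discounted perturbation is $O(L)$ uniformly in $R$ and is dominated by the $\Theta(1/\rho)$ direct gain. The paper's sketch never addresses this externality, so your version is the more complete argument; the one refinement worth making explicit is that failed episodes contribute zero expected surplus per link (the conditional expected quality of an eventually-ostracized agent is exactly $c$), which is what licenses reducing welfare to the times $\{T_i\}$ in the first place.
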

\begin{proof}
Similar to the above proof, note that as $R$ converges to infinity, the probability that each agent is included in the stable network goes to $1$, and so the social welfare generated by any agent $i$ will depend on the first time instance at which it enters and does not become ostracized. With a fully connected network, each agent has as many neighbors as possible and sends information very quickly, and so the timing of this first time instance becomes sooner. Notice also that each agent's flow payoff is positive at any point in time that they are in the network. Since $L$ is very small, agents are in the network almost continuously, and so having more links increases the flow benefits that each agent receives. Thus a fully connected network is the optimal initial network.
\end{proof}

\section{Conclusion}

This paper analyzed agent learning and the resulting network dynamics when there is incomplete information. We presented a highly tractable model that explicitly characterized what the set of stable networks are for a given network, showed how learning affects both individual and social welfare depending on the specific network topology, and analyzed what optimal initial network structures look like for different groups of agents. Our results shed new light on network dynamics in real world situations, and they offer guidelines for optimal network design when there is initial uncertainty about the agents. When agents are sufficiently myopic in their actions, ostracism becomes harmful not just for the ostracized agents themselves, but to all agents in an \textit{ex ante} fashion. A network designer should thus structure links appropriately in order to minimize the negative effects of ostracism.

Our results could be extended in several interesting ways. One natural extension would be to allow the qualities of agents to evolve over time. In the simplest extension, the agent's true quality $q_i$ itself change according to an exogenous stochastic process, for instance a Brownian motion. More interestingly, it would be natural to assume that the evolution of true quality depends endogenously on the information the agent receives so that agents who receive better information tend to develop higher true qualities and hence also generate better information in the future. Thus, the structure of the network and the true qualities of the agents in the network co-evolve. Higher reputation agents may link to agents that are also of higher reputation, and so their true qualities would improve as well, while lower reputation agents may struggle to find good agents to link with, and their true qualities would decline as a result. 

Other possible extensions include having private information among the agents instead of locally public information. In this way agents would learn about their neighbors at different rates, and so they may make different decisions when connecting or disconnecting with other agents. This result could mitigate the negative effects of learning, as information is different across link, and so having more links does not increase the rate of learning. Agent preferences could also be heterogeneous, which would further increase the diversity of links and the range of linking decisions. This is a topic we are currently researching in van der Schaar and Zhang (2015).

Finally, it would be interesting to allow agents to engage in games with their linked neighbors instead of merely generating flow benefits. Games played over networks have been analyzed in several papers within the networks literature (see Jackson and Zenou (2014) for a review), but never in a dynamic setting with learning such as that considered in the current paper. The game played by agents could be a prisoner's dilemma or another type of cooperation game where the payoffs depend on the agent's types. Agents would need to seek out other agents that they can achieve high payoffs in the game with, and this process would also require learning over time about a neighbor's type. As agents are able to learn each other's type more accurately, they may achieve greater efficiency in their plays and also sustain cooperation for a longer length of time.




\newpage
\appendix

\section*{Proof of Proposition \ref{prop:dist}}
\begin{proof}
Suppose for now that agent $i$'s reputation always evolves at the constant signal precision $\tau_i$. Then given the true quality $q_i$ for agent $i$, the probability that agent $i$'s reputation never hits $c$ before $t$ can be found using standard arguments (see for example Wang and P\"{o}tzelberger (1997)) and is given by
\begin{eqnarray}
P(S^t_i|q_i) = \Phi\left(\sqrt{t\tau_i}(q_i-c) + \frac{\frac{1}{\sigma^2_i}(\mu_i - c)}{\sqrt{t\tau_i}}\right)\\
-\exp(-\frac{2}{\sigma^2_i}(\mu_i - c)(q_i - c))\Phi\left(\sqrt{t\tau_i}(q_i - c) - \frac{\frac{1}{\sigma^2_i}(\mu_i-c)}{\sqrt{t\tau_i}}\right)
\end{eqnarray}
Therefore, given $q_i$, the probability that agent $i$ stays in the network is
\begin{equation}
P(S_i|q_i) = \lim\limits_{t\to\infty} P(S^t_i|q_i)
\end{equation}
\begin{itemize}
  \item If $q_i > c$, as $t \to \infty$, then we have $\Phi\left(\sqrt{t\tau_i}(q_i-c) + \frac{\frac{1}{\sigma^2_i}(\mu_i - c)}{\sqrt{t\tau_i}}\right) \to 1$ and\\ $\Phi\left(\sqrt{t\tau_i}(q_i - c) - \frac{\frac{1}{\sigma^2_i}(\mu_i-c)}{\sqrt{t\tau_i}}\right) \to 1$. Thus, $P(S_i|q_i) = 1 - \exp(-\frac{2}{\sigma^2_i}(\mu_i - c)(q_i - c))$, namely agent $i$ stays in the network with positive probability and the probability is increasing in the true quality $q_i$.
  \item If $q_i < c$, as $t \to \infty$, then we have $\Phi\left(\sqrt{t\tau_i}(q_i-c) + \frac{\frac{1}{\sigma^2_i}(\mu_i - c)}{\sqrt{t\tau_i}}\right) \to 0$ and\\ $\Phi\left(\sqrt{t\tau_i}(q_i - c) - \frac{\frac{1}{\sigma^2_i}(\mu_i-c)}{\sqrt{t\tau_i}}\right) \to 0$, thus $P(S_i|q_i)  = 0$, namely agent $i$'s reputation hits $c$ before $t = \infty$ for sure.
  \item If $q_i = c$, it is clear that $P(S_i^t|q_i) = 0$ as $t \to \infty$.
\end{itemize}

Taking the expectation over $q_i$, we have
\begin{equation}
P(S_i) = \int_{c}^\infty(1 - \exp(-\frac{2}{\sigma^2_i}(\mu_i - c)(q_i - c)))\phi\left((q_i - \mu_i)\frac{1}{\sigma_i}\right)dq_i/\sigma_i
\end{equation}
From the above expression we can see that $P(S_i)$ only depends on the initial quality distribution ($\mu_i$ and $\sigma_i$) and the link cost $c$ but does not depend on the Brownian motion precision $\tau_i$. Since breaking links only changes the Brownian motion precision, the probability that an agent's reputation never hits $c$ is independent of the initial network $G^0$ or the signal precision $\tau_i$.
\end{proof}

\section*{Proof of Corollary \ref{corr:limitprob}}
\begin{proof}
We first show that $P(S_i)$ is increasing in $\mu_i$. Let $q_i - \mu_i = x$. Then $P(S_i)$ can be rewritten as
\begin{equation}
P(S_i) = \int_{c-\mu_i}^\infty(1 - \exp(-\frac{2}{\sigma^2_i}(\mu_i - c)(\mu_i - c + x)))\phi\left(\frac{x}{\sigma_i}\right)dx/\sigma_i
\end{equation}
Consider a larger expected quality $\mu_i' > \mu_i$, we have
\begin{eqnarray}
P(S_i|\mu_i') = \int_{c-\mu_i'}^\infty(1 - \exp(-\frac{2}{\sigma^2_i}(\mu_i' - c)(\mu_i' - c + x)))\phi\left(\frac{x}{\sigma_i}\right)dx/\sigma_i\\
>\int_{c-\mu_i}^\infty(1 - \exp(-\frac{2}{\sigma^2_i}(\mu_i' - c)(\mu_i' - c + x)))\phi\left(\frac{x}{\sigma_i}\right)dx/\sigma_i\\
>\int_{c-\mu_i}^\infty(1 - \exp(-\frac{2}{\sigma^2_i}(\mu_i - c)(\mu_i - c + x)))\phi\left(\frac{x}{\sigma_i}\right)dx/\sigma_i = P(S_i|\mu_i)
\end{eqnarray}
Therefore, $P(S_i)$ is increasing in $\mu_i$.

Next we show that $P(S_i)$ is decreasing in $\sigma_i$.
\begin{eqnarray}
P(S_i) = \int_{c-\mu_i}^\infty\phi(\frac{x}{\sigma_i})dx/\sigma_i - \int_{c-\mu_i}^\infty e^{-\frac{2}{\sigma^2_i}(\mu_i - c)(\mu_i - c + x))}\frac{1}{\sigma_i\sqrt{2\pi}}e^{-\frac{1}{2}\left(\frac{x}{\sigma_i}\right)^2} dx\\
=\int_{c-\mu_i}^\infty\phi(\frac{x}{\sigma_i})dx/\sigma_i - \int_{c-\mu_i}^\infty \frac{1}{\sigma_i\sqrt{2\pi}}e^{-\frac{1}{2\sigma_i^2}(2(\mu_i -c) + x)^2)} dx\\
=\int_{c-\mu_i}^\infty\phi(\frac{x}{\sigma_i})dx/\sigma_i - \int_{\mu_i - c}^\infty\phi(\frac{x}{\sigma_i})dx/\sigma_i =  \int_{c-\mu_i}^{\mu_i-c}\phi(\frac{x}{\sigma_i})dx/\sigma_i
\end{eqnarray}
Therefore, $P(S_i)$ is decreasing in $\sigma_i$.

Finally, we show that $P(S_i)$ is decreasing in $c$. Consider a smaller $c' < c$, we have
\begin{eqnarray}
P(S_i|c) = \int_{c}^\infty(1 - \exp(-\frac{2}{\sigma^2_i}(\mu_i - c)(q_i - c)))\phi\left((q_i - \mu_i)\frac{1}{\sigma_i}\right)dq_i/\sigma_i\\
< \int_{c}^\infty(1 - \exp(-\frac{2}{\sigma^2_i}(\mu_i - c')(q_i - c')))\phi\left((q_i - \mu_i)\frac{1}{\sigma_i}\right)dq_i/\sigma_i\\
< \int_{c'}^\infty(1 - \exp(-\frac{2}{\sigma^2_i}(\mu_i - c')(q_i - c')))\phi\left((q_i - \mu_i)\frac{1}{\sigma_i}\right)dq_i/\sigma_i = P(S_i|c')
\end{eqnarray}
The first inequality is because for $q_i > c$, $1 - \exp(-\frac{2}{\sigma^2_i}(\mu_i - c)(q_i - c)) < 1 - \exp(-\frac{2}{\sigma^2_i}(\mu_i - c)(q_i - c))$. The second inequality is because for $c' < q_i < c$, $1 - \exp(-\frac{2}{\sigma^2_i}(\mu_i - c)(q_i - c)) > 0$.
\end{proof}

\newpage

\begin{figure}

\centering
\includegraphics[width=1\linewidth]{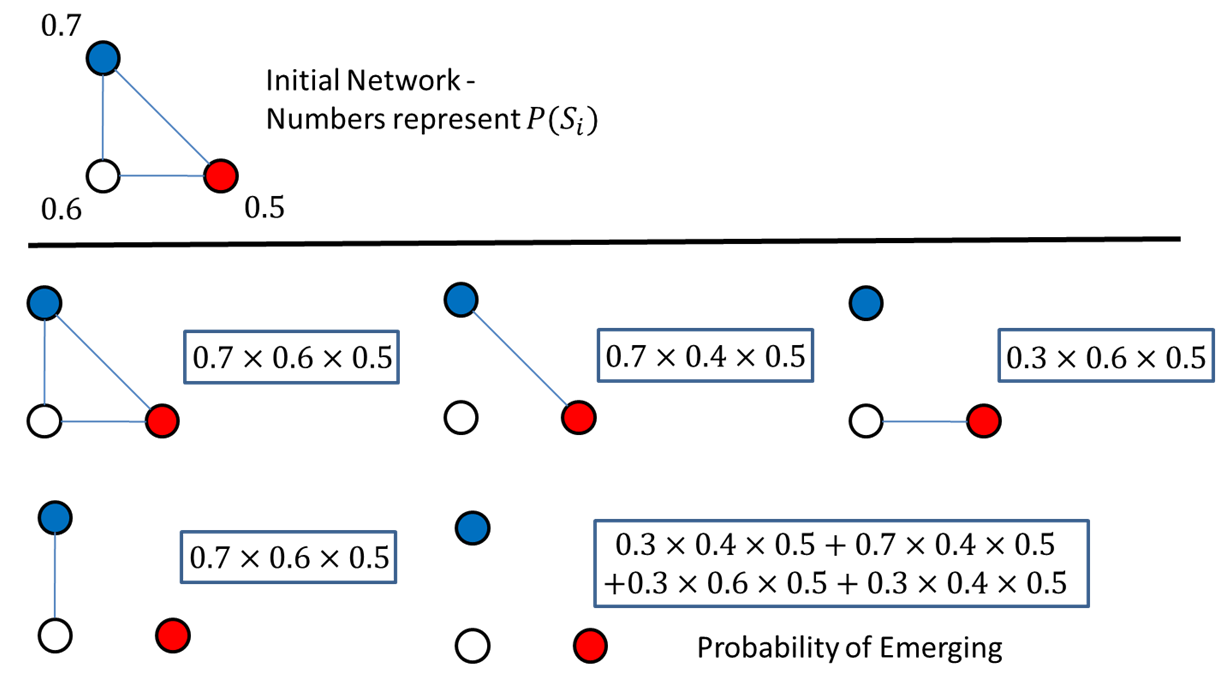}
\caption{Set of Stable Networks Given an Initial Network: This figure shows the five possible stable networks that could emerge given an initial network of three agents. In addition, $P(S_i)$ is given for all the agents, which allows us to calculate the exact probability of each of these networks emerging. For the first four networks, there is only one realization of $\{S_i, ^\neg S_i\}_{i\in V}$ that corresponds to it. For the last network, there are four possible realizations, one in which $^\neg S_i$ occurs for all agents, and three in which $S_i$ occurs for a single agent.}
\label{fig:compute_prob}
\end{figure}

\newpage

\section*{Proof of Lemma \ref{pdf}}
\begin{proof}
Since the Brownian motion precision is constant, using the survival probability
\begin{eqnarray}
P(S^t_i|q_i) = \Phi\left(\sqrt{t\tau_i}(q_i-c) + \frac{\frac{1}{\sigma^2_i}(\mu_i - c)}{\sqrt{t\tau_i}}\right)\\
-\exp(-\frac{2}{\sigma^2_i}(\mu_i - c)(q_i - c))\Phi\left(\sqrt{t\tau_i}(q_i - c) - \frac{\frac{1}{\sigma^2_i}(\mu_i-c)}{\sqrt{t\tau_i}}\right)
\end{eqnarray}
we can compute $f(\hat\varepsilon_i^{t}|q_i)=-\frac{d P(S^t_i|q_i)}{d t}$ as
\begin{eqnarray}
f(\hat\varepsilon_i^{t}|q_i)
=-\frac{1}{2}\left(\sqrt{\tau_i}(q_i-c)t^{-1/2}-\frac{\mu_i-c}{\sigma_i^2\sqrt{\tau_i}}t^{-3/2}\right) \phi\left(\sqrt{t\tau_i}(q_i-c) + \frac{\frac{1}{\sigma^2_i}(\mu_i - c)}{\sqrt{t\tau_i}}\right)\\
+e^{-\frac{2}{\sigma^2_i}(\mu_i - c)(q_i - c)}\frac{1}{2}\left(\sqrt{\tau_i}(q_i-c)t^{-1/2}+\frac{\mu_i-c}{\sigma_i^2\sqrt{\tau_i}}t^{-3/2}\right) \phi\left(\sqrt{t\tau_i}(q_i-c) - \frac{\frac{1}{\sigma^2_i}(\mu_i - c)}{\sqrt{t\tau_i}}\right)\\
=-\frac{1}{2}\left(\sqrt{\tau_i}(q_i-c)t^{-1/2}-\frac{\mu_i-c}{\sigma_i^2\sqrt{\tau_i}}t^{-3/2}\right) \phi\left(\sqrt{t\tau_i}(q_i-c) + \frac{\frac{1}{\sigma^2_i}(\mu_i - c)}{\sqrt{t\tau_i}}\right)\\
+\frac{1}{2}\left(\sqrt{\tau_i}(q_i-c)t^{-1/2}+\frac{\mu_i-c}{\sigma_i^2\sqrt{\tau_i}}t^{-3/2}\right) \phi\left(\sqrt{t\tau_i}(q_i-c) + \frac{\frac{1}{\sigma^2_i}(\mu_i - c)}{\sqrt{t\tau_i}}\right)\\
=\frac{\mu_i-c}{\sigma_i^2\sqrt{\tau_i}}t^{-3/2}\phi\left(\sqrt{t\tau_i}(q_i-c) + \frac{\frac{1}{\sigma^2_i}(\mu_i - c)}{\sqrt{t\tau_i}}\right)
\end{eqnarray}
Taking the expectation over $q_i$, we obtain $f(\hat\varepsilon_i^{t_i})$.
\end{proof}

\newpage

\section*{Algorithm for Computing Hitting Time Mapping Function $M$}

\begin{figure}
\centering
\includegraphics[width=1\linewidth]{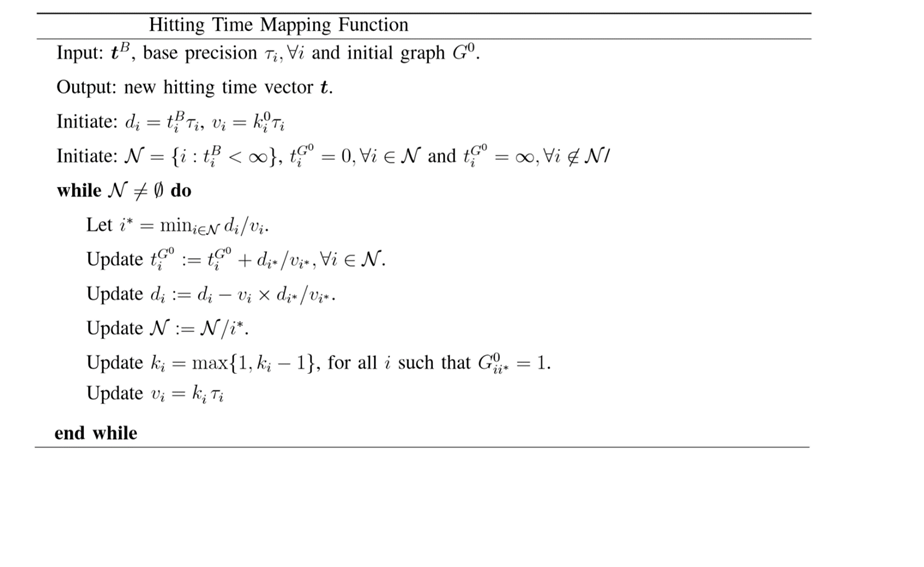}
\caption{}
\label{fig:algorithm1}
\end{figure}

\newpage

\section*{Proof of Theorem \ref{compnl}}
\begin{proof}
Consider the \textit{ex ante} surplus $W_{ij}$ that agent $i$ obtains from the link with a neighbor $j$. The \textit{ex ante} welfare for agent $i$ is simply the summation of this surplus over all $j$ that $i$ is linked with. $W_{ij}$ can be computed as
\begin{equation}
W_{ij} = \int_{q}\int_{0}^\infty e^{-\rho t} P(L_{ij}^t|q)(q_j - c) dt \phi(q) dq
\end{equation}
where $P(L_{ij}^t|q)$ is the probability that the link between $i$ and $j$ still exists at time $t$. Let $t^*$ be the time at which the link between $i$ and $j$ is broken. Then the social welfare can be computed as
\begin{eqnarray}
W_{ij} =\int_{q}\int_{0}^\infty e^{-\rho t}(q_j - c) dt \phi(q) dq - E_{t^*}[\int_{t^*}^\infty e^{-\rho t}E_{q_j} (q_j - c|t\geq t^*) dt]\\
=\int_{0}^\infty e^{-\rho t}(\mu_j - c) dt - E_{t^*}[\int_{t^*}^\infty e^{-\rho t}E_{q_j} (q_j - c|t\geq t^*) dt]
\end{eqnarray}
where the expectation is taken over the realizations in which the hitting time is $t^*$.
The second term can be further decomposed. Let $t^*_i$ denote the case when $t^* = t_i$, namely agent $i$'s reputation hits $c$ before agent $j$, and $t^*_j$ be the case where $t^* = t_j$, namely agent $j$'s reputation hits $c$ before agent $i$. Then
\begin{equation}
W_{ij} = \int_{0}^\infty e^{-\rho t}(\mu_j - c) dt \\
- E_{t^*_i}[\int_{t^*_i}^\infty e^{-\rho t}E_{q_j} (q_j - c|t\geq t^*_i) dt] - E_{t^*_j}[\int_{t^*_j}^\infty e^{-\rho t}E_{q_j} (q_j - c|t\geq t^*_j) dt]
\end{equation}
In the case of $t^*_j$, for any $t\geq t^*_j$, since the learning has stopped, $E_{q_j}(q_c - c|t \geq t^*_j) = 0$ by the definition of $t^*_j$. Similarly, $E_{q_j}(q_j - c|t \geq t^*_i) > 0$ because at $t^*_i$ the expected quality of $q_j$ is strictly greater than $c$ since agent $j$ has not been ostracized. Therefore,
\begin{equation}
W_{ij} = W^*_{ij} - E_{t^*_i}[\int_{t^*_i}^\infty e^{-\rho t}E_{q_j} (q_j - c|t\geq t^*_i) dt] < W^*_{ij}
\end{equation}
Summing over all $j$ that $i$ is linked with, we conclude that the agent $i$'s \textit{ex ante} welfare with learning is strictly less than that when there is no learning. Note that this result holds independently of the values of ${\tau_1,...,\tau_N}$, as the signal precisions affect only the distribution of agent hitting times, but not the expected quality of the agents conditional on ever being ostracized.
\end{proof}

\section*{Proof of Proposition \ref{limitlearn}}
\begin{proof}
From Theorem \ref{welfcalc}, we have that the \textit{ex ante} social welfare is given by:
\begin{equation}
W = E_{\hat\varepsilon} \sum\limits_{i}\left(\frac{1-e^{-\rho M_i(t)}}{\rho} \sum\limits_{j:g^{0}_{ij} = 1, t_j = \infty}\frac{\mu_j-c}{P(S_j)}\right)
\end{equation}
If the designer is completely impatient, it only cares about the social surplus at time $0$ since with probability approaching $1$ no links will be broken among the agents. Since all agents' expected qualities are above the linking cost, having all agents connected with each other yields the highest social surplus. Similarly if learning becomes very slow, then the agent's reputations are never updated and the same reasoning applies. In both cases the $e^{-\rho M_i(t)}$ term approaches zero in the above equation regardless of the network structure, and so adding more agents increases welfare.

If the designer is completely patient, only the stable networks matter. Since the stable network does not depend on the speed of learning and the probability that an agent stays in the stable network is independent of others by Proposition \ref{prop:dist}, having all agents connected with each other leads to the maximum number of links in the stable networks and hence the highest social surplus. Similarly if learning becomes very fast, the stable network will always be reached immediately and the same reasoning applies. In both cases the $e^{-\rho M_i(t)}$ term approaches one in the above equation regardless of the network structure, and so adding more agents increases welfare.
\end{proof}

\section*{Proof of Theorem \ref{subsidy}}
\begin{proof}
(1) Consider the welfare on link $l_{ij}$.  As in the proof of Theorem \ref{compnl}, let $t_i^*$ denote the event in which agent $i$'s reputation hits $c - \delta$ at time $t_i^*$ before agent $j$. The \textit{ex ante} welfare of link $l_{ij}$ can be computed as
\begin{eqnarray}
W_{ij} +W_{ji} = \int_0^\infty e^{-\rho t}(\mu^0_i + \mu^0_j - 2c)dt\\
-E_{t_i^*}[ \int_{t^*}^\infty e^{-\rho t} E_{q_i, q_j}(q_i + q_j - 2c| t \geq t_i^*)dt]\\
-E_{t_j^*}[ \int_{t^*}^\infty e^{-\rho t} E_{q_i, q_j}(q_i + q_j - 2c| t \geq t_j^*)dt]
\end{eqnarray}
Note that the first integral in the above equation represents $W_{ij}^*+W_{ji}^*$, the social welfare of the link without learning. 

In the case of $t_i^*$, for any $t\geq t_i^*$, since the learning has stopped, $ E_{q_i, q_j}(q_i - c|t \geq t_i^*) = c - \delta - c = -\delta$ by the definition of $t^*$. Since agent $j$ is not ostracized, we would have $E_{q_i, q_j}(q_j - c|t \geq t_i^*) > -\delta$. Let $h(\delta, t_i^*) = E_{q_i, q_j}(q_i + q_j - 2c| t \geq t_i^*) = E_{q_i, q_j}(q_j |t \geq t_i^*) - 2c -\delta$. This is the net change in flow payoff after the link is severed. We will show that for any $t_i^*$, $h(\delta, t_i^*) < 0$ if $\delta$ is sufficiently large. A symmetric argument then establishes that $h(\delta, t_j^*) < 0$, and the two together imply that the welfare of the link with learning is greater than $W_{ij}^*+W_{ji}^*$. Then, adding up over all links shows that the overall social welfare is higher than that without learning.

To prove that $h(\delta, t_i^*) < 0$ if $\delta$ is sufficiently large, we will show that $E_{q_i, q_j}(q_j|t \geq t_i^*)$ is bounded above for any $t_i^*$ as $\delta$ tends to infinity. Consider any \textit{ex post} realization of $t_i^*$, which implies that agent $j$'s reputation does not hit $c - \delta$ before $t_i^*$. There are two possibilities for agent $j$'s reputation (here we assume that agent $j$ continues sending information at its fixed signal precision if all its other neighbors are ostracized, as in section 4):
\begin{itemize}
  \item $\zeta_1$: it never hits $c - \delta$ after $t_i^*$ either.
  \item $\zeta_2$: it hits $c - \delta$ at some time after $t_i^*$.
\end{itemize}
Clearly, $E(q_j|\zeta_1) > E(q_j|\zeta_2) = c - \delta$. Hence $E_{q_i, q_j}(q_j| t \geq t^*) < E(q_j|\zeta_1)$. The value of $E(q_j|\zeta_1)$ is given by equation (6) in the text, with $c$ replaced by $c-\delta$. When $\delta \to \infty$, using equation (6) we can show that $\lim\limits_{\delta \to \infty} E(q_j|\zeta_1) = \mu^0_j$ through the application of L'Hopital's rule. 

Therefore, $\forall \epsilon > 0$, there exists $\delta'_{ij}$ such that $\forall \delta > \delta'_{ij}$, $E(q_j|\zeta_1) - \mu^0_j < \epsilon$. Hence, fix a value of $\epsilon > 0$ and let $\bar{\delta}_{ij} = \max\{\delta'_{ij}, \mu^0_j - 2c + \epsilon\}$, which ensures for all $\delta > \bar{\delta}_{ij}$, $E(q_j|\zeta_1) - 2c - \delta < 0$. This also implies that $h(\delta, t_i^*) < 0$ for all $t_i^*$ and $\delta>\bar{\delta}_{ij}$. By choosing $\bar{\delta} = \max_{i,j}\bar{\delta}_{ij}$, we ensure the overall \textit{ex ante} social welfare is greater than $W^*$.

(2) Define $H_{ij}(\delta) = E_{t_i^*}[ \int_{t_i^*}^\infty e^{-\rho t} E_{q_i, q_j}(q_i + q_j - 2c| t \geq t_i^*)dt]$. We will prove $\lim\limits_{\delta \to \infty} H_{ij}(\delta) = 0$. To prove this, we will show that for any sequence $\delta_n \to \infty$, the sequence $H_{ij}(\delta_n) \to 0$. We divide $H_{ij}(\delta)$ into two parts,
\begin{eqnarray}
H_{ij}(\delta) = E_{t^*_i < \hat{t}(\delta)}\left[ \int_{t^*_i}^\infty E_{q_i, q_j} [e^{-\rho t}(q_i + q_j - 2c|t\geq t^*_i)] dt\right] \\
+ E_{t^*_i \geq \hat{t}(\delta)}\left[ \int_{t^*_i}^\infty E_{q_i, q_j} [e^{-\rho t}(q_i + q_j - 2c|t\geq t^*_i)] dt\right] \\= H_{ij}'(\delta) + H_{ij}''(\delta)
\end{eqnarray}
for some $\hat{t}(\delta)$. We will find a sequence $\hat{t}(\delta_n)$ such that both $H_{ij}'(\delta_n) \to 0$ and $H_{ij}''(\delta_n) \to 0$ as $\delta_n \to \infty$.

Let $\hat{t}(\delta_n) = \delta_n$. First we will show that for $\delta_n$ large enough, $P(t_i^* < \delta_n) < \frac{1}{\delta_n^2}$. Note for a given $q_i$, the probability that the agent is ostracized before time $\delta_n$ is equal to:

\begin{eqnarray}
1- P(S^{\delta_n}_i|q_i) = 1 - \Phi\left(\sqrt{\delta_n\tau_i}(q_i-c+\delta_n) + \frac{\frac{1}{\sigma^2_i}(\mu_i - c+\delta_n)}{\sqrt{\delta_n\tau_i}}\right)\\
-\exp(-\frac{2}{\sigma^2_i}(\mu_i - c+\delta_n)(q_i - c+\delta_n))\Phi\left(\sqrt{\delta_n\tau_i}(q_i - c+\delta_n) - \frac{\frac{1}{\sigma^2_i}(\mu_i-c+\delta_n)}{\sqrt{\delta_n\tau_i}}\right)
\end{eqnarray}

Note that $\lim_{x\rightarrow\infty} \Phi(x)=1-\frac{e^{\frac{-x^2}{2}}}{x\sqrt{2\pi}}$. Therefore the term above approaches zero faster than $\frac{1}{\delta_n^2}$ as $\delta_n\rightarrow\infty$. Integrating over all $q_i$ shows that  $P(t_i^* < \delta_n) < \frac{1}{\delta_n^2}$ for large $\delta_n$.

Now consider $H'(\delta_n)$, it is bounded by
\begin{eqnarray}
|H'(\delta_n)| < P(t^*_i < \delta_n) \sup_{t^*_i < \delta_n}|\int_{t^*_i}^\infty E_{q_i, q_j} [e^{-\rho t}(q_i + q_j - 2c|t\geq t^*_i)] dt| \\
< \frac{\sup_{t^*_i < \delta_n}|\int_{t^*_i}^\infty E_{q_i, q_j} [e^{-\rho t}(q_i + q_j - 2c|t\geq t^*_i)] dt| }{\delta^2_n}\\
<\frac{1}{\rho \delta^2_n}\sup_{t^*_i < \delta_n}|E[q_j|\zeta_1] - c + \delta_n|
\end{eqnarray}
Since as $\delta_n \to \infty, E[q_j|\zeta_1] \to \mu^0_j$, we conclude that $|H'(\delta_n)| \to 0$.

Consider $H''(\delta_n)$, it is bounded by
\begin{eqnarray}
|H''(\delta_n)| <  \sup_{t^*_i \geq \delta_n}|\int_{t^*_i}^\infty E_{q_i, q_j} [e^{-\rho t}(q_i + q_j - 2c|t\geq t^*_i)] dt|\\
< \frac{1}{e^{\rho \delta_n}} \sup_{t^*_i > \delta_n}|\int_{t^*_i}^\infty E_{q_i, q_j} [e^{-\rho (t-\delta_n)}(q_i + q_j - 2c|t\geq t^*_i)]|\\
<\frac{1}{\rho e^{\rho \delta_n}} \sup_{t^*_i < \delta_n}|E[q_j|\zeta_1] - c + \delta_n|t \geq t^*_i|
\end{eqnarray}
Similarly, since as $\delta_n \to \infty, E[q_j|t\geq t^*_i] \to \mu^0_j$, we conclude that $|H''(\delta_n)| \to 0$.
\end{proof}

\newpage
\bibliographystyle{IEEEtran}
\bibliography{networkbib}

\end{document}